%% file: main.tex
\documentclass[sigconf]{acmart}
\AtBeginDocument{%
  }

\input{packages}
\input{defines}

\input{glossary}

\begin{document}
\fancyhead{}
\fancyfoot{}

\input{titleauth}

\begin{abstract}
\input{00_abstract}
\end{abstract}

\renewcommand\footnotetextcopyrightpermission[1]{} 

\maketitle
\thispagestyle{empty}

\input{01_new_introduction}
\input{02_preliminaries}
\input{03_hybrid_streaming}

\input{04_hybrid_dynamic}

\input{05_implementation}
\input{06_experiments}

\input{07_conclusion}

\begin{acks}
    We gratefully acknowledge support from NSF grants CCF-2247577, CCF-2403235, CNS-2317194, and NRT-HDR 2125295.
\end{acks}

\balance
\bibliographystyle{ACM-Reference-Format}
\bibliography{main.bib}
\clearpage

\appendix
\input{AA_balloon_proofs}
\input{AB_hybrid_streaming_proofs}
\input{AC_balloon_dc_proofs}
\input{AD_hybrid_dynamic_proofs}
\input{AE_experiments}

\end{document}

%% file: packages.tex
\usepackage{pgfplots}
\usepackage{pgfplotstable}
\pgfplotsset{compat=1.8}
\usepackage{subcaption}
\usepackage{xspace} 
\usepackage[subtle, mathspacing=normal]{savetrees}

\usepackage{xcolor}
\usepackage{amsmath}
\usepackage{amsfonts}
\usepackage{algorithm}
\usepackage{algpseudocodex}
\algrenewcommand{\algorithmicindent}{1em}
\usepackage{float}
\usepackage{siunitx}
\usepackage{graphicx}
\usepackage{thmtools}
\usepackage{thm-restate}

\usepackage[normalem]{ulem} 

\usepackage{blkarray}
\usepackage{balance}  
\usepackage{booktabs} 

\usepackage{marginnote} 
\usepackage{url}
\usepackage{enumitem}
\graphicspath{{./fig/}}
\usepackage{mathtools}

\usepackage{comment}
\usepackage{color}
\usepackage{cleveref}
\usepackage{tikz}
\usepackage{todonotes} 
\usetikzlibrary{fadings}

\usepackage{dirtytalk}
\usepackage{array}

\usepackage[para,online,flushleft]{threeparttable}

%% file: defines.tex
\newcommand{\punt}[1]{}

\newcommand{\R}{\mathbb{R}}
\newcommand{\Field}{\mathbb{F}}

\newcommand{\bigoh}{\mathcal O}
\newcommand{\softoh}{\tilde{\mathcal{O}}}

\newcommand{\polylog}{\mathrm{polylog}}

\newtheorem{definition}{Definition}[section]

\newtheorem{problem}{Problem}

\makeatletter
\def\@copyrightspace{\relax}
\makeatother

\newcommand{\defn}[1]       {{\textit{\textbf{\boldmath #1}}}\xspace}

\newcommand{\etal}{\text{et al}.\xspace}

\date{}

\newcommand{\namedcomment}[3]{{\sf \color{#2} #1: #3}}

\newcommand{\mab}[1]{\namedcomment{mab}{red}{#1}}
\newcommand{\mfc}[1]{\namedcomment{mfc}{purple}{#1}}

\newcommand{\ahmed}[1]{\namedcomment{ahmed}{blue}{#1}}
\newcommand{\victor}[1]{\namedcomment{victor}{green}{#1}}
\newcommand{\evan}[1]{\namedcomment{evan}{orange}{#1}}

\newcommand{\gil}[1]{\namedcomment{gil}{purple}{#1}}
\newcommand{\quinten}[1]{\namedcomment{Quinten}{blue}{#1}}

\renewcommand{\mab}[1]{\todo[size=\tiny,color=green!40]{MAB: #1}}
\renewcommand{\mfc}[1]{\todo[size=\tiny,color=green!40]{MFC: #1}}
\renewcommand{\ahmed}[1]{\todo[size=\tiny,color=yellow]{Ahmed: #1}}
\renewcommand{\victor}[1]{\todo[size=\tiny,color=yellow]{Victor: #1}}
\renewcommand{\evan}[1]{\todo[size=\tiny,color=red!40]{Evan: #1}}

\newcommand{\fixme}[1]{\todo[size=\tiny]{#1}}

\newcommand{\inline}[1]{\todo[inline,color=yellow,size=\tiny]{#1}}

\iffalse
\else

\renewcommand{\mab}[1]{}
\renewcommand{\mfc}[1]{}
\renewcommand{\ahmed}[1]{}
\renewcommand{\victor}[1]{}
\renewcommand{\evan}[1]{}
\renewcommand{\fixme}[1]{}
\renewcommand{\inline}[1]{}

\fi 

\renewcommand{\epsilon}{\varepsilon}

\renewcommand{\eqref}[1]          {Eq.~\ref{eq:#1}}

\newcolumntype{P}[1]{>{\centering\arraybackslash}p{#1}}

\definecolor{bg}{rgb}{0.95,0.95,0.95}

\newcommand{\myparagraph}[1]{\smallskip\noindent{\bf \boldmath #1.}}

\usepackage{xfp}

\newcommand{\sigfig}[1]{%
    \fpeval{
        #1 >= 1e9 ? round(#1/1e9, 2 - floor(ln(#1/1e9)/ln(10))) : (
        #1 >= 1e6 ? round(#1/1e6, 2 - floor(ln(#1/1e6)/ln(10))) : (
        #1 >= 1e3 ? round(#1/1e3, 2 - floor(ln(#1/1e3)/ln(10))) : 
                    round(#1, 2 - floor(ln(#1)/ln(10)))
        ))
    }%
    \ifnum\fpeval{#1 >= 1e9}=1 B\else
    \ifnum\fpeval{#1 >= 1e6}=1 M\else
    \ifnum\fpeval{#1 >= 1e3}=1 K\fi\fi\fi
}

\newcommand{\AvgDeg}[2]{%
    \fpeval{
        round((2*#2)/#1, 2 - floor(ln((2*#2)/#1)/ln(10)))
    }%
}

%% file: glossary.tex
\newcommand{\graph}{\mathcal{G}}
\newcommand{\nodes}{\mathcal{V}}
\newcommand{\edges}{\mathcal{E}}
\newcommand{\nodesize}{V}
\newcommand{\edgesize}{E}

\newcommand{\graphstream}{S}

\newcommand{\sketch}{\mathcal{S}}
\newcommand{\recoverySketch}{\mathcal{R}}
\newcommand{\recover}{\textsc{Recover}}

\newcommand{\prob}[1]{ \Pr \left [ #1 \right ]}
\newcommand{\charvec}[1]{f_{#1}}
\newcommand{\recoveryVec}{x}
\newcommand{\sketchOfVec}{\Tilde{\recoveryVec}}
\newcommand{\expectation}{\mathbb{E}}
\newcommand{\degree}{d}
\newcommand{\numcolumns}{L}
\newcommand{\numSketchColumns}{L}
\newcommand{\buckhash}{\gamma}

\newcommand{\ufo}{\mathcal{U}}

\newcommand{\losslessname}{lossless\xspace}

\newcommand{\sketchdcname}{sketch-based\xspace}

\newcommand{\streamingname}{streaming connectivity\xspace}

\newcommand{\recoverySize}[0]{r}
\newcommand{\decodeAlgo}{\mathcal{D}}

\newcommand{\integersUntil}[1]{\left[ #1 \right]}
\newcommand{\positivesUpTo}[1]{\left\langle #1 \right\rangle}

\newcommand{\Boruvka}{Bor\r{u}vka\xspace}

\newcommand{\universe}{\mathcal{U}}
\newcommand{\universesize}{n}
\newcommand{\supportset}{Z}
\newcommand{\supportsize}{m}

\newcommand{\cutoff}{\rho}

\newcommand{\hashFamily}{\mathcal{H}}
\newcommand{\hashFunc}{h}
\newcommand{\sketchColumn}[1]{\mathcal{C}_{#1}}

\DeclareMathOperator\supp{supp}

\newcommand{\sketchvec}{\recoveryVec}
\newcommand{\buck}{b}
\newcommand{\buckset}{B}

\newcommand{\sketchmatrix}{\Pi}

\newcommand{\tree}{\mathcal{T}}
\newcommand{\forest}{\mathcal{F}}
\newcommand{\component}{\mathcal{C}}
\newcommand{\cutset}{F}

\newcommand{\insertedge}{\textsc{InsertEdge}\xspace}
\newcommand{\deleteedge}{\textsc{DeleteEdge}\xspace}
\newcommand{\connected}{\textsc{Connected}\xspace}

\newcommand{\update}{\textsc{Update}\xspace}
\newcommand{\myinsert}{\textsc{Insert}\xspace}
\newcommand{\mydelete}{\textsc{Delete}\xspace}
\newcommand{\query}{\textsc{Query}\xspace}

\newcommand{\link}{\textsc{Link}\xspace}
\newcommand{\cut}{\textsc{Cut}\xspace}

\newcommand{\insertvertex}{\textsc{InsertVertex}\xspace}
\newcommand{\deletevertex}{\textsc{DeleteVertex}\xspace}

\newcommand{\incidentedges}{\textsc{IncidentEdges}\xspace}
\newcommand{\promotevertex}{\textsc{PromoteVertex}\xspace}
\newcommand{\demotevertex}{\textsc{DemoteVertex}\xspace}

\newcommand{\densethresh}{\delta}
\newcommand{\densealg}{ALG_{D}}
\newcommand{\sparsealg}{ALG_{S}}
\newcommand{\densespace}{MEM_{D}}
\newcommand{\sparsespace}{MEM_{S}}

\newcommand{\gibb}{Gibb's algorithm\xspace}

\newcommand{\cupcake}{CUPCaKE\xspace}
\newcommand{\clusterforest}{Cluster Forest\xspace}
\newcommand{\sketchname}{\textsc{BalloonSketch}\xspace}
\newcommand{\sketchnames}{\textsc{BalloonSketches}\xspace}

 \newcommand{\sysname}{\textsc{HybridSCALE}\xspace} 
 \newcommand{\densesysname}{\textsc{BalloonDC}\xspace}
\newcommand{\graphzep}{\textsc{GraphZeppelin}\xspace}

\newcommand{\cameosketch}{\textsc{CameoSketch}\xspace}

\newcommand{\cubesketch}{\textsc{CubeSketch}\xspace}

\newcommand{\iblt}{\textsc{IBLT}\xspace}
\newcommand{\iblts}{\textsc{IBLTs}\xspace}

%% file: titleauth.tex


\title{Hybrid Sketching Methods for\\Dynamic Connectivity on Sparse Graphs}


\author{Quinten De Man}
\authornote{Both authors contributed equally to this research.}
\affiliation{
    \institution{University of Maryland}
    \city{College Park}
    \state{MD}
    \country{USA}
}
\email{deman@umd.edu}

\author{Gilvir Gill}
\authornotemark[1]
\affiliation{
    \institution{Stony Brook University}
    \city{Stony Brook}
    \state{NY}
    \country{USA}
}
\email{gigill@cs.stonybrook.edu}

\author{Michael A. Bender}
\affiliation{
    \institution{Stony Brook University}
    \city{Stony Brook}
    \state{NY}
    \country{USA}
}
\email{bender@cs.stonybrook.edu}

\author{Laxman Dhulipala}
\affiliation{
    \institution{University of Maryland}
   \city{College Park}
   \state{MD}
   \country{USA}
}
\email{laxman@umd.edu}

\author{David Tench}
\affiliation{
    \institution{Lawrence Berkeley National Lab}
    \city{Berkeley}
    \state{CA}
    \country{USA}
}
\email{dtench@pm.me}




%% file: 00_abstract.tex
Dynamic connectivity is arguably the most basic and fundamental dynamic graph problem, and recent algorithmic breakthroughs on \emph{dynamic graph sketching} have reshaped what is theoretically possible for the problem: by encoding the graph as per-vertex linear sketches, these algorithms solve dynamic connectivity in only $\Theta(V \log^2 V)$ space, independent of the number of edges, with an asymptotic advantage over lossless $\Theta(V+E)$-space structures that grows as the graph becomes denser.
Prior to this work, no practical dynamic connectivity algorithm has been able to translate these theoretical breakthroughs into space savings on real-world graphs.
The main obstacle is that per-vertex sketches cost thousands of bytes per vertex, so sketching only pays off once the graph becomes extremely dense.

Our starting observation is that real-world graphs that are sparse on average are often not uniformly sparse---these graphs can contain dense cores on a small subset of the vertices that accounts for a large fraction of the edges.
We exploit this structure via \emph{hybrid sketching}: sketch only the dense core, and store the sparse periphery losslessly.

On the theoretical side, we design new hybrid algorithms for both fully-dynamic and semi-streaming connectivity with space $O(\min\{V+E,\; V \log V \log(2+E/V)\})$ w.h.p., simultaneously matching the lossless bound on sparse graphs, the sketching bound on dense graphs, and improving on both in an intermediate regime.
A key technical ingredient is \sketchname, a new $\ell_0$-sampler that reduces per-vertex sketch sizes by up to $8\times$ on real-world graphs.
We give a practical C++ implementation of our new algorithms in \sysname, a modular system that treats the lossless and sketch-based components as subroutines.
To our knowledge, \sysname is the first sketch-based dynamic connectivity system to save space on commonly studied real-world graphs.
Compared to the state-of-the-art lossless baseline, \sysname uses up to 15\% less space on sparse real-world graphs (average degree < 100), up to 92\% less space on intermediate density graphs (average degree $\approx$ 100--1000), and up to 97\% less space on synthetic dense graphs (average degree > 1000).

%% file: 01_new_introduction.tex
\section{Introduction}

Graph data and graph algorithms play a fundamental role in modern computing, underlying applications in web search, transactional systems, and scientific computing, among many other areas.
In nearly all of these settings, the underlying graph is inherently dynamic: edges are continually inserted and deleted into the graph as the state of the system evolves (e.g., as transactions are performed).
Maintaining basic structural queries on a graph as it changes is therefore a foundational primitive for downstream analytics, and \emph{dynamic connectivity} (answering whether two vertices currently lie in the same connected component as the edge set changes) is arguably the most fundamental and well-studied dynamic primitive~\cite{frederickson1985data,eppstein1997sparsification,henzinger1997sampling,henzinger1999randomized,holm2001polylog,thorup2000near,kapron2013dynamic,wulff2013faster,gibb2015dynamic,nanongkai2017dynamic,holm2018dynamic,huang2023fully,chuzhoy2020deterministic}.
Despite a four-decade old theoretical foundation, translating these results to practical systems has proven extremely difficult, and dynamic connectivity remains an active frontier for both theoretical and practical research.

Recent experimental work on dynamic connectivity pursues two broad approaches, each with distinct trade-offs.
The first approach is using \emph{lossless} dynamic data structures where every edge in the graph is maintained within the structure.
Lossless systems use $\Theta(\nodesize+\edgesize)$ space and are reasonably space-efficient for sparse inputs where $\edgesize$ is not much larger than $\nodesize$.
The second approach is to use \emph{graph sketching}, which encodes the graph as a collection of vertex-level linear sketches and solves dynamic connectivity in $\Theta(V \log^2 V)$ space, i.e., in space that is not dependent on the number of edges.
The attraction and promise of graph sketching is that it remains space-efficient even as the graph becomes arbitrarily dense, while lossless methods would continue to use space linear in the number of edges.
%

%
%


Sketching techniques have found massive success in applications across
computer science, particularly in streaming algorithms for networking
and data analytics where sketches can exponentially reduce the space requirements of fundamental analytical tasks~\cite{alonFrequencyMoments1999,cordmodCountMinSketch2005,karpStreamingAlgorithms2003,efraimidisWeightedSampling2006}.
However, in \emph{graph} sketching each vertex requires its own sketch, so the \emph{total} space requirement is significantly higher than the polylogarithmic space achieved by sketching in other domains.
%

Specifically, per-vertex graph sketches require $\Theta(\mathsf{polylog}(V))$
words, translating to thousands of bytes per vertex.
Thus, sketching only beats storing the edges in an adjacency list when
the average degree is at least in the thousands.
However, the average degrees of real-world graphs are typically at most in the hundreds,
which is at least an order of magnitude below this break-even point when every vertex is sketched.
For example, on the Orkut social media graph ($\nodesize \approx $ 3.07 million, average-degree 76.3) ~\cite{mislove2007measurement}, the state-of-the-art sketch-based dynamic connectivity system, \cupcake~\cite{deman2025fast}, would require roughly $36\times$ more space than the state-of-the-art lossless system, an implementation of \clusterforest~\cite{deman2024towards}.
%
%

Despite an elegant and influential body of theory~\cite{woodruffSketchingToolNumerical2014,countsketch,misraGriesSummary1982}, existing sketching systems~\cite{graphzeppelin, landscape} have demonstrated space savings only on \emph{synthetic} dense inputs such as Kronecker graphs and high-density Erd\H{o}s--R\'enyi graphs.
%
%
In this paper we study whether we can carefully leverage graph sketching to obtain more space-efficient dynamic connectivity data structures even on sparse real-world datasets.

Crucially, graphs that are sparse on average are typically not \emph{uniformly} sparse.
For example, social networks, web graphs, citation graphs, and collaboration networks, among others, are sparse on average but often contain a small, densely interconnected \emph{core} that accounts for a large fraction of the edges~\cite{seidman1983network,faloutsos1999power,borgatti2000models,newman2003structure,alvarezhamelin2005kcore,carmi2007model,mislove2007measurement,kitsak2010identification,batagelj2003om}.
On such graphs, the cost of sketching vertices in the core would be worthwhile, but would be prohibitive if blindly applied to every vertex.
This observation suggests a natural strategy: sketch only the dense core, and store the sparse periphery losslessly.
We refer to this approach as \defn{hybrid sketching}.
Although the idea is straightforward to state, realizing it both in theory and in practice requires carefully understanding how to combine each of the three moving parts (the graph sketching system, the lossless structure, and the logic that routes edges between them).
There are three major challenges in the way of designing a hybrid sketching method:

\begin{enumerate}[topsep=0pt, itemsep=0pt, leftmargin=15pt]
\item We must find a space-minimizing partition of the graph into a sketched core and lossless periphery with no prior structural knowledge.
We also desire high update throughput and low query latency.
Further, the partitioning scheme must keep the lossless and sketch representations consistent.

\item Since the graph is dynamic, the boundary between the dense core and the sparse periphery may change over time.
Moving edges from the core to the periphery requires recovering them from the sketch data structure.
However, classical connectivity sketches do not guarantee returning all edges, only some of the sketched support.
Hybrid sketching therefore requires leveraging a sparse recovery sketch that returns the full support of a sparse vector with high probability.

\item The state-of-the-art implementations of $\ell_0$-samplers, the main building block underlying graph sketching, have a space overhead of thousands of bytes per-vertex.
As a result, these sketches don't save space over a lossless representation until the degree is in the thousands.
Commonly-studied real-world graphs rarely have average degree higher than the low hundreds, so asymptotically smaller $\ell_0$-samplers are likely required to save space.

\end{enumerate}

In this paper we introduce theoretically-efficient and practical techniques for implementing hybrid sketching algorithms, together with an accompanying system that sketches only the dense portion of a graph while storing the rest losslessly.
To address the first challenge, we develop a hybrid framework that routes edges between a sketched dense core and a losslessly stored sparse periphery as the graph evolves, preserving the semantics of connectivity queries end-to-end.
Regarding the second challenge, works on Invertible Bloom Lookup Tables (\iblts) provide a sketching primitive to recover the entire support of a sparse vector with high probability~\cite{goodrich2011iblt,belazzougui2024iblt}.
To address the third challenge, we design \sketchname, an $\ell_0$-sampler whose aggregate space usage tracks the true support size of the sketched vector rather than the universe size, improving on the best known $\ell_0$-sampling space bound when the vector is sparse.

\footnotetext[1]{Code is at \url{https://github.com/GraphStreamingProject/HybridDynamicQueriesCC}.}

Taken together, these contributions yield the first hybrid sketching algorithms for dynamic connectivity whose space usage simultaneously matches the lossless bound on sparse inputs, matches the sketching bound on dense inputs, and improves on both in the intermediate regime.
We realize these algorithms in a system, \sysname~\footnotemark[1], which to our knowledge is the first sketch-based dynamic connectivity system to deliver space savings on real-world graphs.
%
%
Next, we give a more technical description of our contributions.

\subsection{Our Contributions}

Our main theoretical results are new hybrid sketching algorithms for graph connectivity with strong space complexity on both sparse and dense graphs. Their key property is that their space complexity simultaneously matches the asymptotic space of lossless methods on sparse inputs and of sketching-based methods on dense inputs, and improves on both in the intermediate regime.

\begin{restatable}[Hybrid Dynamic Connectivity]{theorem}{hybriddynamic} \label{thm:hybrid_dynamic_connectivity}
There exists a dynamic connectivity algorithm with space complexity
$$
O\!\left(
\min \left\{
\nodesize + \edgesize,\;
\nodesize \log(\nodesize) \log\!\left(2 + \frac{\edgesize}{\nodesize}\right)
\right\}
\right)
\text{ w.h.p.}
$$
The update complexity is amortized $O(\log^4 \nodesize)$. The query complexity is worst-case $O(\log \nodesize / \log \log \nodesize)$.
\end{restatable}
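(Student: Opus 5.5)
We obtain the hybrid structure by composing three black boxes. The first is a lossless dynamic connectivity structure on the sparse part, such as a \clusterforest-style structure with $O(\nodesize+\edgesize)$ space, polylogarithmic amortized updates, and $O(\log\nodesize/\log\log\nodesize)$ queries. The second is a sketch-based dynamic connectivity structure on the dense part (\cupcake-style), built from $\ell_0$-samplers. The third is a per-vertex \iblt-based sparse recovery sketch, which lets us extract all edges of a vertex once its degree falls.

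\textbf{Routing rule.} Each vertex $v$ is \emph{dense} if $\degree(v)\ge\densethresh$ and \emph{sparse} otherwise, with threshold $\densethresh=\Theta(\log\nodesize)$. The rule is applied with hysteresis: promote $v$ when its degree reaches $2\densethresh$, demote it when its degree drops to $\densethresh/2$. An edge $(u,v)$ goes into the sketch iff both endpoints are dense, and is stored losslessly otherwise.

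\textbf{Combining the two sides for queries.} Connectivity is answered on the union graph. The lossless structure stores the sparse edges together with a spanning forest of the sketched subgraph, given as ``virtual'' tree edges. The sketch structure maintains this spanning forest of the dense core and reports changes to it (links and cuts) as forest edge insertions and deletions into the lossless structure. A query is then a single query to the lossless structure, which gives the stated worst-case query time.

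\textbf{Promotion and demotion.} When $v$ is promoted, we remove its $O(\densethresh)$ edges to dense neighbours from the lossless side and insert them into the sketch. When $v$ is demoted, it has $O(\densethresh)$ incident sketched edges, so its recovery sketch, sized for $O(\densethresh)$ support, returns all of them w.h.p. We then move these edges to the lossless side. Hysteresis ensures that $\Omega(\densethresh)$ updates occur between consecutive status changes of a vertex. Each moved edge costs one sketch update, $O(\log^2\nodesize)$ for updating $O(\log\nodesize)$ $\ell_0$-samplers per level, plus possibly a forest repair in the sketch structure at $O(\log^3\nodesize)$ or so. Amortizing $O(\densethresh)$ moves over $\Omega(\densethresh)$ updates gives $O(\log^4\nodesize)$ amortized update time.

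\textbf{Space bound.} This is the main obstacle. Sparse vertices contribute $O(\densethresh)$ lossless edges each, and edges with a sparse endpoint are charged to that endpoint. Hence the lossless side uses $O(\nodesize+\edgesize)$ space, and also $O(\nodesize\densethresh)$ space, which is $O(\nodesize\log \nodesize)$.

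For the dense side, naive sketches cost $\log^2\nodesize$ words per vertex. We need \sketchname, whose size for a vertex of degree $d$ is $O(\log\nodesize\cdot\log(2+d))$ w.h.p. The $\ell_0$-sampler space then tracks the support size rather than the universe size. The recovery sketch adds $O(\densethresh)=O(\log\nodesize)$ per vertex.

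Summing over the dense vertices $D$, the dense side uses
\[
\sum_{v\in D}\log\nodesize\,\log(2+d_v)\le |D|\log\nodesize\,\log\!\left(2+\tfrac{2\edgesize}{|D|}\right)
\]
space, by concavity (Jensen). We have $|D|\le\min\{\nodesize,\,2\edgesize/\densethresh\}$, and $x\log(2+2\edgesize/x)$ is increasing in $x$.
\begin{itemize}
\item Taking $|D|\le\nodesize$ gives the bound $\nodesize\log\nodesize\log(2+\edgesize/\nodesize)$.
\item Taking $|D|\le 2\edgesize/\densethresh$ with $\densethresh=\Theta(\log\nodesize)$ gives $O(\edgesize)$.
\end{itemize}
For the lossless side, note that $\log(2+\edgesize/\nodesize)\ge 1$, so its $O(\nodesize\log\nodesize)$ bound is absorbed into the second term. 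This yields the stated minimum.

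The delicate points are the following:
\begin{itemize}
\item verifying that \sketchname's space bound really holds w.h.p. simultaneously for all vertices and at all times, via a union bound over polynomially many steps;
\item checking that the sketch connectivity structure's auxiliary data (its forest and levels) also fits in $O(|D|\log\nodesize)$.
\end{itemize}
I would handle the first by invoking the \sketchname guarantee with failure probability $\nodesize^{-c}$.
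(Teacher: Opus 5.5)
Your architecture matches the paper's: degree-based routing with hysteresis, a lossless side holding the sparse edges plus the dense spanning forest, \iblt-based demotion, and queries answered by the lossless structure. The space argument, however, fails in two places. \textbf{First, the threshold is too small.} With $\delta=c\log V$, the bullet ``taking $|D|\le 2E/\delta$ gives $O(E)$'' is an arithmetic error. Substituting $|D|=2E/\delta$ into your Jensen bound gives $(2E/\delta)\log V\log(2+\delta)=(2/c)\,E\log(2+c\log V)=\Theta(E\log\log V)$. This is a real loss, not slack in the analysis. In a $2\delta$-regular graph every vertex gets sketched and $E=\Theta(V\log V)$, so the target $\min\{V+E,\,V\log V\log(2+E/V)\}$ is $\Theta(V\log V)$. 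Yet each vertex's \sketchname matrix has size $\Theta(\log V\log\delta)$, giving $\Theta(V\log V\log\log V)$ in total. A sketched vertex pays for itself only if $\log V\cdot\log d_v=O(d_v)$, which forces $\delta=\Omega(\log V\log\log V)$; the paper takes $\delta=\log V\log\log V$. With that choice the lossless side is $O(V\delta)=O(V\log V\log\log V)$, which is no longer absorbed via $\log(2+E/V)\ge 1$. You then need the case split of Lemma~\ref{lem:hybrid_dynamic_space_two}. If $E<V\log V\log(2+E/V)$, the trivial $O(V+E)$ bound already suffices. Otherwise $E/V\ge\log V$, so $\log(2+E/V)\ge\log\log V$ and $V\delta=O(V\log V\log(2+E/V))$.

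\textbf{Second, the dense side is not one sketch per vertex.} Your sum $\sum_{v\in D}\log V\log(2+d_v)$ is the streaming accounting. In the dynamic algorithm, each of the $\Theta(\log V)$ cutsets is a dynamic tree (ETT in \cupcake, UFO tree in the paper's analysis) whose internal nodes store aggregate sketches of whole clusters. Two things break. The support of such an aggregate is the cluster's cut, which can far exceed any single degree. Also, all sketches within one tier share a seed, so the \sketchname theorem (about $L$ independent sketches of one vector) does not cover them. Bounding each aggregate by its worst-case depth $O(\log V)$ only gives $O(V\log^2 V)$. This is not ``auxiliary data'' of size $O(|D|\log V)$; it is the content of Lemma~\ref{lem:balloon_dc}, which you must prove or invoke before your Jensen/monotonicity step. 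With UFO trees, level $\ell$ has at most $(5/6)^\ell V$ clusters partitioning the vertices. Jensen per level then bounds a tier's $\lg(\text{cut})$ terms by $\sum_\ell (5/6)^\ell\bigl(1.07+\ell\lg(6/5)\bigr)V\lg(2E/V)=O(V\log(2+E/V))$. The residual depths are grouped by position $(\ell,i)$ across tiers, where they are independent, giving $O(V\log V)$ total residual depth w.h.p. A smaller point: each moved edge should be charged Gibb's $O(\log^4 V)$ plus the $O(\log V)$ forest changes pushed into Cluster Forest, not your $O(\log^2 V)+O(\log^3 V)$ estimate. Your final amortized $O(\log^4 V)$ bound still holds.
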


We additionally give a hybrid algorithm for the \emph{semi-streaming}
model, in which the graph arrives as a stream of edge insertions and
deletions under an $O(\nodesize \polylog(\nodesize))$ memory budget, and connectivity queries are answered from the final sketch rather than continuously during the stream~\cite{semistreaming1,semistreaming2,Ahn2012,AhnGM12b,mcgregor2014graph}. This is the classical setting in which graph sketching was originally developed, and it remains the standard benchmark for evaluating sketch-based space complexity. Our algorithm achieves the same hybrid space bound as in the fully-dynamic setting, with asymptotically faster update complexity in exchange for slower one-shot queries.

\begin{restatable}[Hybrid Streaming Connectivity]{theorem}{hybridstreaming} \label{thm:hybrid_streaming_connectivity}
There exists a semi-streaming algorithm for connectivity on polynomial-length input streams of edge updates with space complexity
$$
O\!\left(
\min \left\{
\nodesize + \edgesize,\;
\nodesize \log(\nodesize) \log\!\left(2 + \frac{\edgesize}{\nodesize}\right)
\right\}
\right)
\text{ w.h.p.}
$$
The update complexity is amortized $\Theta(\log \nodesize)$ w.h.p. The query complexity is worst-case $\Theta(\nodesize \log^2 \nodesize)$.
\end{restatable}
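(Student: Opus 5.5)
We build the algorithm by classifying each vertex as \emph{sparse} or \emph{dense} according to its current degree, with threshold $\densethresh = \Theta(\log \nodesize)$ words. This is the break-even point at which a per-vertex connectivity sketch built from \sketchname is no more expensive than an adjacency list.

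\textbf{Data structures and routing.} A sparse vertex keeps its incident edges explicitly in a lossless hash-based adjacency list. A dense vertex keeps a \sketchname-based $\ell_0$-sampling connectivity sketch, with $O(\log \nodesize)$ independent copies so that \Boruvka can run for $O(\log \nodesize)$ rounds. Every dense vertex additionally keeps a counter of its degree. An edge $(u,v)$ is stored losslessly if both endpoints are sparse; otherwise it is XORed into the sketches of its dense endpoints and also recorded in the adjacency list of any sparse endpoint, so that consistency is preserved.

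\textbf{Promotion and demotion.} When a vertex's degree crosses $2\densethresh$, we promote it by inserting its adjacency list into a fresh sketch. When its degree drops below $\densethresh$, we demote it by recovering its full neighborhood from an \iblt of capacity $O(\densethresh)$ that accompanies each sketch. By the \iblt guarantees cited in the introduction, this recovery succeeds w.h.p. The factor-$2$ hysteresis amortizes each promotion or demotion to $O(\log \nodesize)$ time per update. Each update touches $O(\log \nodesize)$ sketch copies at $O(1)$ amortized cost per bucket, which gives $\Theta(\log \nodesize)$ amortized update time. For a query, we collapse all lossless edges into supernodes with a union-find pass, then run \Boruvka over the merged sketches using a fresh copy per round. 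This takes $\Theta(\nodesize \log^2 \nodesize)$ time, and correctness w.h.p.\ follows from the standard \Boruvka sketching argument with independent copies, taking a union bound over polynomially many queries or stream steps.

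\textbf{Space bound.} This is the main obstacle. A sparse vertex of degree $d$ costs $O(1+d)$, and $d = O(\log \nodesize)$. A dense vertex of degree $d$ costs $O(\log \nodesize)$ copies times the \sketchname size. The key property of \sketchname is that its space tracks the support size: a vector with support $d$ needs $O(\log(2+d))$ words rather than $O(\log \nodesize)$. A dense vertex therefore costs $O(\log \nodesize \log(2+d))$. Summing, the total space is
\[
\sum_{v} \min\{1+d_v,\ \log \nodesize \log(2+d_v)\}.
\]
This sum is at most $\nodesize+2\edgesize$, since each term is at most $1+d_v$ and $\sum_v (1+d_v) = \nodesize+2\edgesize$. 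Because $x\mapsto \log(2+x)$ is concave, Jensen's inequality bounds the same sum by $\nodesize \log \nodesize \log(2+2\edgesize/\nodesize)$.

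The step I expect to be delicate is justifying that the \sketchname space really scales as $\log(2+d)$ w.h.p.\ simultaneously for all vertices and all times. This needs a union bound over $\mathrm{poly}(\nodesize)$ stream steps, which is where the polynomial stream-length hypothesis is used. It also needs a check that the \iblt capacity $O(\log \nodesize)$ and the degree counters fit within this budget, which they do since both are $O(\log \nodesize)$ per dense vertex and dense vertices already pay at least that much.
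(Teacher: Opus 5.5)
Your hysteresis amortization, union bound over the polynomial-length stream, and query-time accounting are fine. There are two genuine gaps, one in the space bound and one in query correctness.

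First, $\delta=\Theta(\log V)$ is not the break-even point. With $L=\Theta(\log V)$ columns, a \sketchname matrix on a support of size $d$ costs $\Theta(\log V\log d)$ words: the upper bound is Theorem~\ref{thm:balloon_sketch}, and each column's depth is about $\log_2 d$. So the sketch and the list cost the same only at $d=\Theta(\log V\log\log V)$. Your sum $\sum_v\min\{1+d_v,\log V\log(2+d_v)\}$ is therefore not the space your algorithm actually uses, because the representation is fixed by the threshold. A dense vertex of degree $\Theta(\log V)$ pays $\Theta(\log V\log\log V)$, which is a $\log\log V$ factor more than $1+d_v$. Take a disjoint union of cliques of size $3\delta$: every vertex is dense and the space is $\Theta(V\log V\log\log V)$, whereas $V+E=\Theta(V\log V)$. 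So the $O(V+E)$ half of the bound fails. Taking $\delta=\Theta(\log V\log\log V)$, as the paper does, repairs this. Since $d/\log(2+d)$ is increasing, a sparse vertex ($d\le 2\delta$) has $1+d=O(\log V\log(2+d))$, and a dense vertex ($d\ge\delta$) has $\log V\log(2+d)=O(d)$. That is what actually licenses charging each vertex the minimum, and the $\Theta(\delta)$-word \iblt is still dominated by the sketch.

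Second, your routing of mixed edges destroys the cancellation the query relies on. Your remark that double-recording ``preserves consistency'' has it backwards. A mixed edge $(u,v)$ with $u$ dense and $v$ sparse is XORed into $u$'s sketch only, since $v$ has no sketch. Once $u$ and $v$ lie in the same supernode, whether by your union-find pass or by a Borůvka merge, that edge survives in the aggregate. The aggregate then encodes the cut plus internal edges, and a uniform $\ell_0$-sample can be internal with overwhelming probability.

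Concretely, take two stars whose centers $u,w$ are adjacent. The leaves contribute no sketches, so every sample for a supernode containing $u$ is a sample of $u$'s own sketch. It returns $(u,w)$, the only edge joining the two halves, with probability about $2/V$. Hence \Boruvka fails to connect them with probability $1-O(\log V/V)$.

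The missing idea is the paper's invariant: a mixed edge is never sketched and is stored only in the explicit set of its sparse endpoint. Sketches then contain exactly the dense--dense edges, each in both endpoints' sketches. After collapsing along all explicit edges, every edge between supernodes is dense--dense, internal dense--dense edges cancel, and the AGM argument goes through.

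Promotion and demotion must change accordingly:
\begin{itemize}
\item On promotion, edges to dense neighbors enter both sketches and leave the lists, while edges to sparse neighbors are handed to those neighbors' lists.
\item On demotion, the \iblt only needs to track dense--dense edges. These are deleted from the neighbors' sketches and moved into the demoted vertex's list.
\end{itemize}
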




The hybrid bounds above rest on a new sketching primitive, \sketchname, which we believe is of independent interest.
This result improves on the best known $\ell_0$-samplers, which use $\Theta(L \log \universesize)$ space, by replacing the universe-size dependence with a dependence on the true support size $\supportsize$. 
This improved sketch algorithm results in up to a $8\times$ reduction in per-vertex sketch size on sparse real-world graphs. We find that this space reduction is necessary for our hybrid sketching algorithm to be space-efficient on these inputs.


\begin{restatable}[\sketchname]{theorem}{balloonsketchthm} \label{thm:balloon_sketch}
There exists an $\ell_0$-sampler, \sketchname, such that given $L$ independent \sketchnames, each sketching the same vector $\recoveryVec \in \Field^{\universesize}$ with $\supportsize = \|\recoveryVec\|_0$:
\begin{enumerate}[itemsep=0pt,leftmargin=15pt]
    \item The total space usage of the sketches is $O(L \log \supportsize + \log \universesize)$ w.h.p.
    \item The total cost of one update to each sketch is $O(L + \log \universesize)$ w.h.p.
\end{enumerate}
\end{restatable}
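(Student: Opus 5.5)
We would build \sketchname\ from a standard geometric-sampling $\ell_0$-sampler and then compress away the levels that are almost surely empty. A classical $\ell_0$-sampler (in the style of \cubesketch/\cameosketch) hashes each coordinate $i \in [\universesize]$ to a level $\ell(i)$ with $\Pr[\ell(i) \ge j] = 2^{-j}$. Each level stores a constant number of words: a sum of values, a sum of index-weighted values, and a checksum. The sketch answers a query by finding a level whose surviving coordinates form a singleton. This costs $\Theta(\log \universesize)$ words per sketch, hence $\Theta(L\log\universesize)$ for $L$ sketches.

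The key observation is that with $\supportsize$ nonzero coordinates, levels $j \gg \log \supportsize$ are empty w.h.p. Conversely, levels $j \ll \log\supportsize - O(\log\log)$ contain so many coordinates that they are useless for singleton recovery. So \sketchname\ stores only a ``balloon'' window of $O(\log \supportsize)$ levels per sketch, around level $\approx \log_2 \supportsize$. It also keeps the coarse aggregates needed to shift the window as $\supportsize$ grows or shrinks.

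The steps, in order, are as follows.

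\textbf{(1) Shared support estimator.} Define a single support-size estimator that is shared by all $L$ sketches, implemented as one auxiliary structure of $O(\log \universesize)$ words. A natural choice is a small deterministic or hashed count of $\|\recoveryVec\|_0$, maintained exactly if updates are $\pm$ insertions/deletions of $\{0,1\}$ entries, as for adjacency vectors. This structure accounts for the additive $\log\universesize$ term in both the space and the update bounds.

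\textbf{(2) Window contents.} Each sketch stores levels $j \in [\log \supportsize - c, \log\supportsize + c\log \supportsize]$. Every coordinate whose level falls below the window is folded into one aggregate ``floor'' bucket, so each sketch stays a linear function of $\recoveryVec$. Levels above the window are dropped.

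\textbf{(3) Correctness of dropping levels.} A union bound shows that dropping the high levels loses nothing w.h.p. The probability that any of the $\supportsize$ coordinates lands above level $\log\supportsize + c\log(\cdot)$ is polynomially small. The failure probability must be in terms of the global w.h.p. parameter, so the window upper end is really $\log \supportsize + O(\log \universesize)/\ldots$. To keep the \emph{total} space at $O(L\log\supportsize)$ rather than $O(L\log\universesize)$, I would instead store the high levels sparsely. Only nonempty levels are stored, and the number of nonempty levels above $\log\supportsize$ is $O(1)$ in expectation per sketch. A Chernoff bound over the $L$ independent sketches then gives $O(L)$ extra words in total w.h.p. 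This is exactly why the bound is stated as a sum across $L$ sketches rather than per sketch.

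\textbf{(4) Resizing.} When the estimated $\supportsize$ doubles or halves, the window must be re-centered. Growing is easy: split the floor bucket lazily, or recompute by merging. Shrinking requires splitting the floor aggregate into levels, which is impossible from the aggregates alone. I therefore expect to keep all levels below $\log\supportsize$ explicitly, which costs $O(\log \supportsize)$ words per sketch. That is still within budget, so the only truncation needed is the sparse storage of high levels in step (3). This simplifies the design considerably.

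\textbf{(5) Update cost.} An update to coordinate $i$ is processed as follows. Compute $\ell(i)$ in each sketch, using one hash evaluation per sketch or a single $O(\log\universesize)$-bit hash from which all $L$ levels are derived. Then touch one bucket in each sketch, costing $O(1)$ in expectation per sketch for the sparse high-level dictionary. The result is $O(L + \log \universesize)$ total. The w.h.p. version follows from Chernoff bounds, or from using deterministic-time structures such as small sorted arrays.

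The main obstacle is step (3): making the $\log\universesize$ dependence disappear from the per-sketch space while retaining w.h.p. guarantees. Per sketch, some levels above $\log\supportsize$ are nonempty with constant probability, and the level of any single coordinate can be as large as $\log\universesize$. The proof must therefore argue in aggregate over the $L$ independent sketches. The expected number of occupied levels above $\log\supportsize$ is $\sum_j \supportsize 2^{-j} = O(1)$ per sketch, and the total is a sum of $L$ independent bounded-tail variables, which concentrates to $O(L + \log\universesize)$. That bound is absorbed into the stated $O(L\log\supportsize + \log\universesize)$.
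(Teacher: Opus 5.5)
Your final design (explicit buckets up to depth about $\log_2 m$, plus only the nonempty buckets above) is correct in substance, once you discard the floor bucket and the truncated window. Its space bound rests on the same fact the paper uses. Everything a sketch stores beyond depth $w=\lceil\log_2 m\rceil$ is at most its residual depth, and a union bound gives $\Pr[\text{some bucket of depth }\ge w+k\text{ is nonempty}]\le\sum_{j\ge k} m\,2^{-(w+j)}\le 2^{1-k}$. So each sketch's excess is stochastically dominated by a geometric variable. With independent hash functions, the sum over the $L$ sketches is dominated by a negative binomial, hence is $O(L+\log n)$ w.h.p.

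Make your ``bounded-tail'' claim precise in exactly this way. The $O(1)$ expectation you computed does not by itself give a w.h.p. bound for a sum of unbounded variables. This concentration step, not a shared estimator, is where the additive $\log n$ comes from: it is the slack needed when $L=o(\log n)$. For the same reason, drop the option of deriving all $L$ levels from a single $O(\log n)$-bit hash, since it destroys the cross-sketch independence the concentration needs.

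The real difference from the paper is the representation. The paper resizes each container to the sketch depth, so the layout is determined by the bucket contents alone. No knowledge of $m$ is needed, and the representation stays compatible with adding sketches. Your boundary at $\log m$ instead needs an exact support counter, which causes three problems:
\begin{itemize}
\item It works only if every update announces whether it inserts or deletes.
\item It is not linear, since $\|x+y\|_0$ is not determined by the summands' counters; this conflicts with the $\ell_0$-sampler definition.
\item Your step (5) ignores the cost of moving the boundary. Each time $m$ crosses a power of two, all $L$ boundaries move simultaneously and deterministically. If the explicit part is an array of exactly $\lceil\log m\rceil$ buckets, that one update costs $\Theta(L\log m)$. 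This exceeds $O(L+\log n)$ when $L=\Theta(\log n)$ and $m=n^{\Omega(1)}$.
\end{itemize}
The counter buys nothing. Storing all nonzero buckets sparsely, or densely up to the deepest nonzero one (the paper's choice), gives the same space bound without it.

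For update time, the paper charges sketch $c$ a cost of $O(1+i_c)$, where $i_c$ is the random bucket index of the updated coordinate; this covers growing or shrinking the container. Then $\sum_c i_c$ is again a sum of independent geometrics. Your sorted-array dictionary costs $O(1+k_c)$, with $k_c$ bounded by the residual depth, so that part of your argument also goes through once the boundary-shift cost is eliminated.
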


\myparagraph{The \sysname system}
We realize the hybrid framework in \sysname, a modular dynamic connectivity system whose design separates the lossless and sketch-based components. Any lossless dynamic connectivity structure can be plugged in on one side, and any sketch-based one on the other.
Our implementation uses an implementation of \clusterforest~\cite{acar2019parallel,acar2020parallel,deman2024towards} on the lossless side.
For the sketch-based side, we implemented a new system, \densesysname, combining ideas from \gibb~\cite{gibb2015dynamic}, \cupcake~\cite{deman2025fast}, and our new sketching primitive \sketchname.

To our knowledge, \sysname is the first graph sketching system to deliver space savings on commonly studied real-world graphs.
Compared to the state-of-the-art lossless baseline, \sysname uses up to 15\% less space on sparse real-world graphs (average degree < 100), up to 92\% less space on intermediate density graphs (average degree $\approx$ 100--1000), and up to 97\% less space on synthetic dense graphs (average degree > 1000).
%
Update and query throughput remain competitive with both pure baselines. 


%% file: 02_preliminaries.tex
\section{Preliminaries}

\subsection{Semi-Streaming and Dynamic Graph Models}\label{sec:prelims_models}
In the \defn{graph semi-streaming model}~\cite{semistreaming1}, an algorithm is presented with a \defn{stream} of edge insertion or deletion updates that define a graph. The challenge in this model is to compute some property of the graph given a single pass over the stream and memory sublinear in the size of the graph. 
%
Specifically, stream $\graphstream$ defines a graph $\graph = (\nodes,\edges)$ with $\nodesize = |\nodes|$ and $\edgesize = |\edges|$.
Each update has the form $((u,v), \Delta)$ where $u,v \in \nodes, u \neq v$ and $\Delta \in \{-1,1\}$ where $1$ indicates an edge insertion and $-1$ indicates an edge deletion.
In this paper we study the \defn{semi-streaming connected components problem} which returns a spanning forest of the graph at the end of the stream.


In the \defn{dynamic graph model}, the challenge is to be able to efficiently query a specific structural property of a changing graph at any time.
%
Specifically the algorithm is given a sequence $\graphstream$ of updates ($\insertedge(e)$ or $\deleteedge(e)$) which must be processed in order.
Any prefix of the first $i$ updates in $\graphstream$ defines a graph $\graph_i = (\nodes_i,\edges_i)$.
Immediately after processing the $i$-th update , one or more queries may be issued and the algorithm must compute the answers for graph $\graph_i$ before receiving the  $i+1$-th update.
%
In this paper we study the \defn{dynamic connectivity problem}, where the structural property of interest is connectivity, and queries of the form $\connected(u, v)$ return whether vertices $u$ and $v$ are connected in $\graph_i$.


\subsection{Linear Sketching for Streaming Connectivity}\label{sec:streaming_prelims}

Ahn. et. al. (\cite{Ahn2012}) introduced the first semi-streaming algorithm for connected components using $\bigoh(\nodesize \log^2 \nodesize)$ space. The core innovation of this approach is that the algorithm represents each vertex's adjacency list as a compact \defn{linear sketch} of its characteristic vector.

For a vertex $v$, its \defn{characteristic vector} $\charvec{v} \in \mathbb{R}^{\binom{\nodesize}{2}}$ encodes its incident edges such that for any edge $(u, w)$, the entry $\charvec{v}((u,w))$ is $1$ if $v = u$ and $(u,w) \in \edges$, $-1$ if $v = w$ and $(u,w) \in \edges$, and $0$ otherwise. This specific encoding ensures that for any subset of vertices $S \subseteq \nodes$, the sum of their vectors $\sum_{v \in S} \charvec{v}$ isolates the cut $(S, \nodes \setminus S)$. Every internal edge within $S$ is added twice and cancels out perfectly, leaving only the edges that cross between $S$ and the rest of the graph.
Storing characteristic vectors explicitly would require $\bigoh(\nodesize^3)$ total space. To solve this, the streaming algorithm compresses them using an \defn{$\ell_0$-sampler}.

\begin{definition}
\label{def:l0_sampler}
An algorithm $\sketch$ is an \defn{$\ell_0$-sampler} with failure probability $\delta \in (0, 1)$ if it can process coordinate updates to a vector $\recoveryVec$ and maintain a compact sketch $\sketch(\recoveryVec)$ satisfying:
\begin{enumerate}[itemsep=0pt,leftmargin=15pt,topsep=0pt]
    \item \textbf{Sampleable}: If $\recoveryVec$ is non-zero, sampling the sketch returns a uniform random non-zero coordinate with probability at least $1 - \delta$.
    \item \textbf{Linear}: For any vectors $\recoveryVec$ and $y$, $\sketch(\recoveryVec) + \lambda \sketch(y) = \sketch(\recoveryVec + \lambda y)$.
\end{enumerate}
\end{definition}

With constant $\delta$, the $\ell_0$-sampler from Ahn~\etal~\cite{Ahn2012} has space, update, and sampling complexity $O(\log \nodesize)$. In Section~\ref{sec:balloonsketch} we provide more details about a more recent $\ell_0$-sampler, \cameosketch~\cite{landscape}, as well as our new $\ell_0$-sampler, \sketchname.

Because $\ell_0$-samplers are linear, the streaming algorithm can compute an $\ell_0$-sampler for an entire component's cut simply by summing the sketches of its individual vertices: $\sum_{v \in S} \sketch(\charvec{v})$.
%
By maintaining $\Theta(\log \nodesize)$ independent sketches per vertex, the algorithm can sum the sketches of any merged component (supernode) to successfully sample an outgoing edge. Repeating this process simulates the $\Theta(\log \nodesize)$ rounds of \Boruvka's algorithm required to recover a spanning forest.

Tench et. al. provided the first practical implementation of the AGM sketch in \graphzep~\cite{graphzeppelin} by introducing \cubesketch, a specialized $\ell_0$-sampler that operates over $\Field_2$ rather than $\mathbb{Z}$.
The current state-of-the-art $\ell_0$-sampler, \cameosketch~\cite{landscape} (see Section~\ref{sec:balloonsketch}), builds on this foundation by further improving \cubesketch's worst-case running time and sampling success probability.

\subsection{Dynamic Connectivity} \label{sec:sketch_based_dyn_conn}
\myparagraph{Lossless Dynamic Connectivity}
Traditional dynamic connectivity algorithms all store a data structure containing a lossless representation of the graph. We refer to these as lossless dynamic connectivity algorithms.
The best algorithms, such as \clusterforest~\cite{wulff2013faster}, have $O(\nodesize + \edgesize)$ space usage, $O(\log^2 \nodesize)$ or better (amortized) update time, and $O(\log \nodesize)$ or better query time.
De Man~\etal~\cite{deman2024towards} provided a practical implementation of the \clusterforest algorithm.

\myparagraph{Sketch-Based Dynamic Connectivity}
More recently, a new class of dynamic connectivity algorithms has emerged which combines both the polylogarithmic update and query times of traditional \losslessname dynamic connectivity algorithms and the small space usage of \streamingname algorithms~\cite{kapron2013dynamic, gibb2015dynamic}. We call these \sketchdcname dynamic connectivity algorithms.
The best known sketch-based algorithm, \gibb~\cite{gibb2015dynamic}, uses $O(\nodesize \log^2 \nodesize)$ space, processes updates in $O(\log^4 \nodesize)$ time, and answers queries in $O(\log \nodesize / \log \log \nodesize)$ time. Each query is correct w.h.p.

The core component of \gibb is the \defn{cutset}~\cite{kapron2013dynamic}. For a graph $\graph = (\nodes, \edges)$ and a spanning forest $\forest = (\nodes, \edges_{\forest})$, it can find an edge crossing a component's cut during dynamic updates. Letting $\component(w)$ be the forest component containing $w$, a cutset supports:

\begin{itemize}[itemsep=0pt,leftmargin=15pt]
    \item $\link(u,v)$: Insert edge $(u,v) \notin \edges_{\forest}$ into $\forest$, merging $\component(u)$/$\component(v)$.
    \item $\cut(u,v)$: Remove edge $(u,v) \in \edges_{\forest}$ from $\forest$, splitting $\component(u,v)$.
    \item $\update(e)$: Insert edge $e$ into $\edges$ (if absent) or delete it (if present).
    \item $\query(v)$: Return an edge crossing the cut of $\component(v)$ with constant success probability $p$.
\end{itemize}

Cutsets are implemented via a dynamic tree augmented with an $\ell_0$-sampler $\sketch(\charvec{v})$ for each vertex $v$. The dynamic tree must support subtree aggregate queries using a commutative, associative function (in this case sketch addition).
Due to $\ell_0$-sampler linearity, querying a component's subtree returns the sum of its vertices' sketches, which encodes the cut between the component and the rest of the graph.
Cutsets typically use $O(\nodesize \log \nodesize)$ space, perform $\link$s, $\cut$s, and $\update$s in $O(\log^2 \nodesize)$ time, and $\query$ in $O(\log \nodesize)$ time.

\gibb maintains cutsets $\cutset_0, \dots, \cutset_{top}$ (where $top = \Theta(\log \nodesize)$) across successive forest levels $\forest_0 \subseteq \dots \subseteq \forest_{top}$. Higher levels track increasingly larger components, ultimately yielding the true connected components. It also maintains a separate dynamic tree $\tree$ supporting maximum-weight path queries, where tree edges match $\edges_{\forest_{top}}$ and an edge's weight is the lowest level $i$ that the edge appears in $\forest_i$.
The algorithm ensures three invariants:
\begin{enumerate}[itemsep=0pt,leftmargin=15pt]
    \item \label{inv:gibb1} $\forest_0 = (\nodes, \emptyset)$.
    \item \label{inv:gibb2} $\forest_i \subseteq \forest_{i+1}$ for $0 \leq i < top$.
    \item \label{inv:gibb3} For $0 \leq i < top$, if a $\query$ on $\cutset_i$ for a component $\component \in \forest_i$ succeeds, $\component$ is a proper subset of some $\component' \in \forest_{i+1}$.
\end{enumerate}
Gibb~\etal show that if these invariants hold then $\forest_{top}$ is a spanning forest of $\graph$ w.h.p.
They also provide an update algorithm which maintains these invariants w.h.p., and takes $O(\log^4 \nodesize)$ time.
Consequently, queries are correct w.h.p. across polynomially many updates.

%% file: 03_hybrid_streaming.tex
\section{Hybrid Streaming Connectivity}\label{sec:hybrid_streaming}

As a first step towards our main result of a hybrid dynamic connected components algorithm, we present a new semi-streaming algorithm for connected components that uses asymptotically less space than any existing algorithm. We begin by carefully stating the nature of the space advantage. 
Among the existing algorithms for streaming connectivity, there is no clear winner in terms of asymptotic space usage. This is because some lossless algorithms achieve an $O(\nodesize + \edgesize)$ space bound, which is optimal for sparse graphs, while some sketching algorithms achieve a $O(\nodesize \log^2 \nodesize)$ space bound which is optimal for dense graphs. The lossless algorithms are suboptimal on dense graphs, and similarly sketch algorithms are suboptimal on sparse graphs.

In contrast, we present a semi-streaming algorithm for connected components that simultaneously matches the state-of-the-art space usage for sparse graphs and dense graphs.
In fact in a certain density range, our algorithm uses asymptotically less memory usage than all prior algorithms.
Specifically, when the input graph has $\Omega(\nodesize^{1 + \epsilon})$ edges for $0 < \epsilon \leq 1$, our algorithm uses $O(\nodesize \log^2 \nodesize)$ words of space, matching the space of Ahn \etal ~\cite{Ahn2012} (sketching approach).
When the input graph has $\bigoh(\nodesize \log \nodesize \log \log \nodesize)$ edges, our algorithm uses the same space as an adjacency list representation (lossless approach): $O(\nodesize + \edgesize)$.
When the input graph has $o(\nodesize^{1 + \epsilon})$ edges and $\omega(\nodesize \log \nodesize \log \log \nodesize)$ edges, our algorithm uses asymptotically less space than both Ahn \etal and an adjacency list representation.
Theorem~\ref{thm:hybrid_streaming_connectivity} summarizes our result.
Figure~\ref{fig:space_functions} depicts the relative asymptotic space usage of a lossless approach, a sketching approach, and our new hybrid approach.

\hybridstreaming*

\myparagraph{Hybrid Streaming Connectivity Overview}
The main idea behind this result is to lossily store the incidence lists of high-degree vertices as $\ell_0$ sketches, while storing the incidence lists of low-degree vertices losslessly (as an explicit set of neighboring vertices). Specifically, we set a fixed \defn{density threshold} $\delta$, and we call any vertex $u$ whose degree is higher than $\densethresh$ \defn{heavy}, and otherwise call $u$ \defn{light}.
For each heavy vertex, we maintain an $\ell_0$ sketch of the characteristic vector (see Section~\ref{sec:streaming_prelims}) of its incidence list, and for each light vertex we store its incidence list losslessly (as an explicit set).
As we will see, setting $\densethresh$ appropriately minimizes the total size of the data structure, which results in Theorem~\ref{thm:hybrid_streaming_connectivity}.
To make this idea work, we require two sketching primitives (one existing and one new) which we now describe.


First, note that as the input stream progresses, vertices that were once light may become heavy and vice versa.
In particular, when the degree of a heavy vertex $u$ decreases below $\densethresh$, its edges must be stored losslessly instead of in sketch form. We must therefore \defn{recover} $\densethresh$ edges from the sketch of $u$, but existing $\ell_0$ sketching algorithms only guarantee recovery of $\Theta(\log\nodesize)$ edges with high probability.
To address this problem, we also maintain an \defn{invertible bloom lookup table (\iblt)}~\cite{goodrich2011iblt} over the set of neighbors for each heavy vertex.
An \iblt is a compact data structure representing a set of elements that ensures the successful recovery of all elements in the set with high probability, provided the size of the set is currently below a fixed threshold $r$.
Regardless of the size of the represented set, the \iblt always uses $O(r)$ space.
Specifically, we use a recent tunable version of \iblt~\cite{belazzougui2024iblt} and configure it to be successful with high probability in $\nodesize$.
This results in $O(r)$ space usage, $O(\log^2\nodesize)$ insertion time, and $O(r)$ time to recover all elements (decode).

Second, we present (in Section~\ref{sec:balloonsketch}) a more compact $\ell_0$-sampling primitive, which we call \defn{\sketchname}. 
When the vertex being sketched has high degree, \sketchname's space matches that of existing $\ell_0$ samplers with constant success probability: $O(\log \nodesize)$. But unlike existing $\ell_0$ sampling algorithms, the size of \sketchname is $O(\log(\degree))$ in expectation where $\degree$ is the degree.
%
The key insight is that when the degree is low, in expectation much of the traditional $\ell_0$ sampler data structure is empty. \sketchname exploits this fact to minimize space via aggressive dynamic resizing. Since each vertex has $O(\log \nodesize)$ $\ell_0$ samplers, we are then able to show that the total space across all the $O(\nodesize \log \nodesize)$ samplers is $O(\nodesize \log \nodesize \log(2+\edgesize/\nodesize))$ w.h.p.


Finally, in Section~\ref{sec:hybrid_streaming_proof} we combine the ideas of a density threshold $\densethresh$, \iblt, and \sketchname to prove Theorem~\ref{thm:hybrid_streaming_connectivity}.
We show that setting the density threshold $\delta$ to $\Theta(\log\nodesize \log\log\nodesize)$ results in the space bound from Theorem~\ref{thm:hybrid_streaming_connectivity}.
This value of $\delta$ is essentially the minimum degree at which a vertex can use asymptotically less space storing its incidence list in $\sketchname$ form (i.e. $\log \nodesize \log \degree \leq \degree$ when $d = \Omega(\log \nodesize \log \log \nodesize)$).
To facilitate efficient updates while maintaining the space bound, we only recover the lossless incidence list of a vertex that became light when the degree becomes $\leq \delta/2$. When a vertex becomes heavy (degree $> \delta$), we convert its lossless incidence list to a $\sketchname$ by initiating a new sketch and updating it $\delta+1$ times, and also instantiate an $\iblt$ for the vertex.

\begin{figure}
    \centering
    \includegraphics[width=\linewidth]{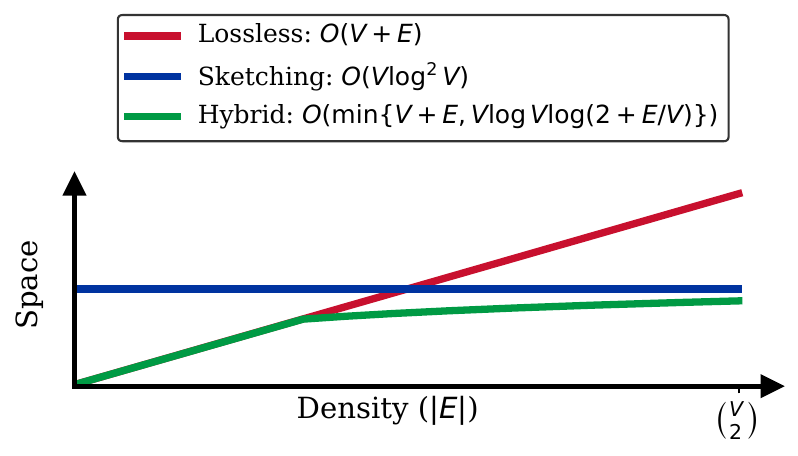}
    \vspace{-1.5em}
    \caption{A plot visualizing the space complexity of various graph streaming approaches. Each function is plotted with a fixed $|V| = 250$. The $x$-axis varies the value of $|E|$.
    }
    \label{fig:space_functions}
\end{figure}

\input{03.1_balloon_sketch}

\input{03.2_hybrid_streaming}

%% file: 03.1_balloon_sketch.tex
\subsection{$\ell_0$-Sampling and \sketchname}
\label{sec:balloonsketch}

This section introduces \sketchname, a new sketching primitive for $\ell_0$ sampling that uses less space on sparse vectors than prior methods.
First we summarize \cameosketch~\cite{landscape}, the prior state-of-the-art algorithm for $\ell_0$ sampling, and introduce several important definitions that will be used in our definition of \sketchname.

\myparagraph{Prior Work and Definitions for $\ell_0$-Sampling}
The core goal of the \cameosketch algorithm is to act as an $\ell_0$-sampler over vectors in $\Field_2^\universesize$. It achieves this by geometrically distributing the coordinates of an input vector across a series of buckets, ensuring a high probability of isolating a single non-zero coordinate.

A \defn{\cameosketch} $\sketchColumn{h}$ is a sequence of $\cutoff = \Theta(\log \universesize)$ \defn{sketch buckets} $\buckset=[\buck_0, \dots, \buck_{\cutoff-1}]$. Each coordinate $v \in [\universesize]$ is assigned to exactly two of these buckets: it is always placed in $\buck_0$, designated as the \defn{deterministic bucket}, and it is also placed in a second \defn{random bucket} $\buck_i$ chosen geometrically at random. This second assignment is driven by a $\Theta(1)$-wise independent hash function $h: [\universesize] \rightarrow [2^\cutoff]$ (drawn from a family $\hashFamily$); specifically, $v$ is assigned to bucket $i = \max\{d \in \mathbb{N} \mid h(v) \equiv 0 \text{ (mod $2^d$)}\}$.
This explicitly simulates a coin-flipping process, ensuring the probability of landing in $\buck_i$ is $\prob{v \in \buck_i} = 1/2^i$.

Each bucket $\buck_i$ maintains a lossy representation of the coordinates assigned to it. It stores two integer values, $\buck_i.\texttt{alpha}$ and $\buck_i.\texttt{gamma}$, both taking $\Theta(\log \universesize)$ bits and initialized to $0$.
When a coordinate $j$ of the input vector $\sketchvec$ is updated, we compute $h(j)$ to find its assigned bucket $\buck_i$, and perform the following bitwise XOR operations:
\begin{itemize}[itemsep=0pt,leftmargin=15pt]
    \item $\buck_i.\texttt{alpha} \gets \buck_i.\texttt{alpha} \oplus j$
    \item $\buck_i.\texttt{gamma} \gets \buck_i.\texttt{gamma} \oplus \buckhash(j)$ (where $\buckhash$ is a $2$-wise independent hash function)
\end{itemize}
Maintaining the buckets this way grants us these two properties:
\begin{enumerate}[itemsep=0pt,leftmargin=15pt]
    \item w.h.p. $\buck_i.\texttt{alpha} = 0$ and $\buck_i.\texttt{gamma} = 0$ if and only if the bucket contains no non-zero coordinates.
    \item w.h.p. $\buck_i.\texttt{gamma} = \buckhash(\buck_i.\texttt{alpha})$ if and only if the bucket contains exactly $1$ non-zero coordinate. In this case, $\buck_i.\texttt{alpha}$ contains the exact index of that coordinate.
\end{enumerate}
For use in our description of \sketchname, we define the following additional properties for vector $\sketchvec \in \Field_2^\universesize$ with support size $\supportsize$:
\begin{itemize}[itemsep=0pt,leftmargin=15pt]
    \item \textbf{Bucket States:} A bucket $\buck_i$ is \defn{empty} if its support size is $0$. It is \defn{good} if its support size is exactly $1$, and \defn{bad} otherwise.
    \item \textbf{Bucket Depth:} We refer to the index of a bucket as its depth.
    \item \textbf{Sketch Depth:} The smallest index $k \in [\cutoff]$ such that $\buck_i$ is empty for all $i \ge k$, or $\cutoff$ if no such $k$ exists.
    \item \textbf{Residual Depth:} Let $w := \lceil \log_2 (\supportsize) \rceil$ be the index of the bucket where we mathematically expect exactly one non-zero coordinate to land.
    The \defn{residual depth} of a sketch is its actual depth minus this expected depth $w$ (or $0$ if the actual depth is $< w$).
    \item \textbf{Sketch Matrix}: A sequence of $\numSketchColumns$ sketches $(\sketchColumn{h_1}, \dots, \sketchColumn{h_{\numcolumns}})$ with independent hash functions. By utilizing multiple independent sketches, the matrix boosts the psuccess probability. With $L = \Omega(\log \universesize)$, the matrix yields $\Theta(\numcolumns)$ successful samples w.h.p. in $\universesize$.
\end{itemize}






\myparagraph{\sketchname}
Now we present our new sketching primitive for $\ell_0$-sampling, \sketchname.
The key benefit of \sketchname is that it has an asymptotic space advantage over \cameosketch and all existing $\ell_0$-samplers on sparse vectors. On dense vectors, it matches the asymptotic space complexity of \cameosketch.
In a sketch matrix with large enough size, updates to \sketchname are as efficient as updates to \cameosketch.
The sampling success probability and the cost of sampling for a \sketchname are identical to \cameosketch.
Definition~\ref{defn:skinny_sketch} defines a \sketchname:

\begin{definition}[\sketchname] \label{defn:skinny_sketch}
    A \defn{\sketchname} is a sequence of sketch buckets stored in a dynamically allocated container. The size of the container is always resized to the sketch depth.
\end{definition}

Unlike \cameosketch which statically allocates $\Theta(\log \universesize)$ buckets, \sketchname avoids storing the empty buckets deeper than its sketch depth.
Its space usage is therefore proportional to its depth.
The procedures for assigning coordinates to buckets and updating the \sketchname are almost identical to those of \cameosketch. 
However, if an update changes the depth of the sketch, then the bucket list must shrink or grow.
The pseudo-codes for the \sketchname algorithms are given in Algorithm~\ref{alg:skinny_sketch_column}.

\begin{algorithm}
    \caption{\sketchname Algorithms}\label{alg:skinny_sketch_column}
    \begin{algorithmic}[1]
    
    \Function{initialize}{$\universesize$}
        \State $\buckset \gets []$, $\cutoff \gets \lg(\universesize) + \Theta(1)$
        \State Let $h: \positivesUpTo{\universesize} \rightarrow \integersUntil{2^\cutoff}$ be drawn from a family of $\Theta(1)$-wise independent hash functions.
    \EndFunction
    \Statex \vspace{-0.5em}
    
    \Function{sample}{}
        \If {$\buckset == []$} \Return ``empty'' \EndIf
        \For{$i= \buckset$.size to $0$}
            \If{$\buckset[i]$ is good}
                \Return (``good'', $\buckset[i]$.value)
            \EndIf
        \EndFor
        \State \Return ``fail''
    \EndFunction
    \Statex \vspace{-0.5em}
    
    \Function{reallocate}{$\ell$}
        \State Initialize new bucket list $\buckset^\prime$ with $\ell$ empty buckets.
        \State Copy all non-empty buckets from $\buckset$ to $\buckset^\prime$.
        \State $\buckset \gets \buckset^\prime$
    \EndFunction
    \Statex \vspace{-0.5em}
    
    \Function{update}{$u$}
        \State $i \gets$ index of random bucket for $u$ (found from $h(u)$)
        \If {$i \geq \buckset.size()$} \Call{reallocate}{$i + 1$} \EndIf
        \State Update bucket $\buckset[i]$ with $u$.
        \State Update bucket $\buckset[0]$ with $u$.
        \If {$i == B.size()-1$ and $\buckset[i]$ is empty}
            \State $d \gets$ index of deepest non-empty bucket in $\buckset$.
            \If {$d < \buckset.size()-1$}
                \State \Call{reallocate}{$d + 1$}
            \EndIf
        \EndIf
    \EndFunction
\end{algorithmic}
\end{algorithm}

The update function first computes the random bucket index for $u$, and if necessary reallocates the bucket list so that there is a bucket allocated at that index.
Then both the random bucket and the deterministic bucket are updated.
Finally, if the random bucket was the last bucket in the list, and it became empty from this update, the bucket list is reallocated to remove all empty buckets at the tail.
This only has to occur if the random bucket was the last bucket, because otherwise there must be a later bucket that is still non-empty.

Because of the possible need to reallocate the bucket list during an update and the cost of possibly searching for the deepest non-empty bucket, the time for an update to coordinate $u$ can be proportional to the depth of the random bucket for $u$.
Since this bucket is chosen randomly between index $0$ and $\cutoff-1$, the worst-case update time for a \sketchname is $\Theta(\log(\universesize))$.
This is notably worse than \cameosketch's worst-case update time of $\Theta(1)$.
For similar reasons, the only worst-case space guarantee for \sketchname is $\Theta(\log (\universesize))$ which matches \cameosketch but is no better for sparse vectors.
However, there are three key observations that give hope for asymptotic space savings and update efficiency for a sketch matrix:
\begin{enumerate}[itemsep=0pt,leftmargin=15pt]
    \item The index of the random bucket a coordinate is assigned to is a geometric random variable, and thus has $\Theta(1)$ expected value.
    \item The residual depth of a sketch is bounded by a geometric random variable and thus has $\Theta(1)$ expected value.
    \item In a sketch matrix, these random variables for each sketch are independent, and sums of independent random variables concentrate around the sum of their expectations.    
\end{enumerate}
These observations guide our proof of Theorem~\ref{thm:balloon_sketch} which provides high probability bounds on the space usage and update time of a \sketchname matrix with $L$ sketches.
When $L=\Theta(\log\universesize)$ (as is the case for applications like streaming connectivity), The update time for a \sketchname matrix is $O(\log\universesize)$ w.h.p., whereas the update time for a \cameosketch matrix is $O(\log\universesize)$ worst-case.
In this case the \sketchname uses $O(\log \universesize \log \supportsize)$ space w.h.p. (where $\supportsize$ is the number of non-zero elements in the vector), whereas \cameosketch uses $\Theta(\log^2\universesize)$ worst-case space.
This indicates that on sparse vectors (low $\supportsize$) \sketchname can use asymptotically less space than \cameosketch.
The full proof is provided in Appendix~\ref{appendix:balloon_proofs}.

\balloonsketchthm*


%% file: 03.2_hybrid_streaming.tex
\subsection{Proof of Theorem \ref{thm:hybrid_streaming_connectivity}} \label{sec:hybrid_streaming_proof}

Our hybrid streaming algorithm generally maintains the adjacency lists of light vertices as an explicit set, and maintains the adjacency lists of heavy vertices as a \sketchname matrix with $\numcolumns = \Theta(\log \nodesize)$ and an \iblt with recovery size $\densethresh/2$.
For the explicit set, we will use a standard balanced binary search tree (BBST) data structure.
As mentioned previously, we only convert a vertex's adjacency list from a sketch to an explicit set if its degree becomes $\leq \densethresh/2$. This means there may be some light vertices that are stored as a sketch, but their degree is $\Omega(\densethresh)$.
To simplify things, we say a vertex whose adjacency list is stored as a sketch is in \defn{sketch form}, and we say it is in \defn{explicit form} otherwise.
A vertex in sketch form has $\degree(v) = \Omega(\densethresh)$ and a vertex in explicit form has $\degree(v) = O(\densethresh)$.

One additional detail in our algorithm is that for edges with at least one endpoint in explicit form, we will only store the edge in the adjacency list of one of its endpoints: the one that is in explicit form or either endpoint if both are in explicit form.
All other edges between two vertices in sketch form are stored by both endpoints.
This not only saves a bit of space in practice, but it is a crucial invariant to ensure the success of our connected components query algorithm (described in Appendix~\ref{app:hybrid_streaming}).
When analyzing the space usage, we make the simplifying assumption that each vertex stores all of its incident edges, which serves as an upper bound for the true space usage.

In the remainder of this section we discuss and analyze the space usage, the update algorithm, and the connected component query algorithm. Together these things compose the proof of Theorem~\ref{thm:hybrid_streaming_connectivity}.

\myparagraph{Space Usage}
For a single vertex of degree $\degree$, a set representation (BBST) requires $\Theta(\degree)$ space. In contrast, the \sketchname matrix representation consumes $O(\log \nodesize \log \degree)$ space w.h.p.
The optimal density threshold $\densethresh$, which dictates whether the sketch or the set is more space-efficient, occurs where these two complexities are equal. Solving for $\degree$ yields a threshold of $\densethresh = \Theta(\log \nodesize \log \log \nodesize)$. At this density, the space complexity of the \sketchname matrix evaluates to $O(\log \nodesize \log (\log \nodesize \log \log \nodesize)) = O(\log \nodesize \log \log \nodesize)$ w.h.p.

We prove in Appendix~\ref{app:hybrid_dynamic} that by setting the density threshold $\densethresh = \Theta(\log \nodesize \log \log \nodesize)$, the total space complexity across all vertices satisfies both $O(\nodesize + \edgesize)$ and $O(\nodesize \log \nodesize \log (2+\edgesize/\nodesize))$ w.h.p.
These bounds collectively establish the space requirements for Theorem~\ref{thm:hybrid_dynamic_connectivity}.

\myparagraph{Stream Updates}
First we describe edge updates assuming no promotion or demotion occurs.
If we insert or delete an edge $(u,v)$ and both endpoints are in sketch form, we update the \sketchname matrix and \iblt for both endpoints.
If either endpoint is in explicit form, the edge should only be stored in the adjacency list of one of the endpoints that is in explicit form.
For insertions, we arbitrarily pick one of these endpoints and insert the edge into its set.
For deletions, we check both endpoints and delete the edge if the set contains it.

When an edge insertion causes a vertex $u$ in explicit form to exceed the density threshold ($\degree(u) > \densethresh$), $u$ is promoted to sketch form. At this point, $u$ converts its set into sketch form by initiating a new \sketchname and an \iblt. To maintain the invariant, $u$ iterates through the edges currently in its set and redistributes them as follows.
For neighbors in sketch form, these edges now exist between two vertices in sketch form. Vertex $u$ adds the edge to its new sketch and issues an update to the neighbor $v$ instructing it to also add the edge to its sketch.
For neighbors in explicit form, these edges now have exactly one endpoint in explicit form (the neighbor). Since the vertex $u$ is now in sketch form, it can no longer store these edges. It therefore pushes these edges to the respective neighbors' sets and does not add them to its own sketch.

Conversely, if edge deletions cause a sketch-form vertex $u$ to drop to $\degree_u \leq \densethresh/2$, it demotes to explicit form. Because the invariant ensures $u$'s \iblt only tracks edges where both endpoints were in sketch form, $u$ uses its \iblt to perfectly recover these edges. It moves these recovered edges into its new explicit set and issues updates to those neighbors (which are still in sketch form) to delete the edges from their sketches. Crucially, any edges $u$ shares with other explicit form vertices were already uniquely stored in the neighbors' sets.
We prove in Appendix~\ref{app:hybrid_streaming} that updates have amortized cost $O(\log \nodesize)$ w.h.p.

%% file: 04_hybrid_dynamic.tex
\section{Hybrid Fully-Dynamic Connectivity}\label{sec:hybrid_dynamic}

We prove Theorem~\ref{thm:hybrid_dynamic_connectivity}, resulting in a hybrid fully-dynamic graph connectivity algorithm.
\hybriddynamic*

The first component of this is a new purely sketch-based dynamic connectivity algorithm with space complexity $O(\nodesize \log \nodesize \log (2+\edgesize / \nodesize))$ w.h.p.
This results from combining our \sketchname techniques described in Section~\ref{sec:hybrid_streaming} with the sketch-based dynamic connectivity algorithm of Gibb~\etal~\cite{gibb2015dynamic} (\gibb).

Unlike the streaming algorithm from Section~\ref{sec:hybrid_streaming}, it is not as straightforward to develop a hybrid version of this dynamic algorithm, because the data structures representing the neighborhood of each vertex are not neatly separated: they are heavily intertwined in the cutset data structures which maintain aggregate sketches over several subsets of vertices. Thus we can not simply use a hybrid approach on the granularity of each vertex. 

Instead, to prove Theorem~\ref{thm:hybrid_dynamic_connectivity} we introduce a novel dynamic connectivity framework which uses different data structures to store the {\bf sparse regions} and {\bf dense regions} of the input graph.
Employing a hybrid approach over {\em regions} of the graph allows us to overcome the difficulty of doing it on the granularity of vertices.
Our framework is general, and can use any two dynamic connectivity algorithms (assuming they support a certain reasonable set of operations) to store the sparse and dense regions of the graph.
Our framework is intended to use a {\bf lossless algorithm} to store the sparse regions, and a {\bf sketch-based algorithm} to store the dense regions.
We show that by using our new \sketchname-based algorithm as the dense algorithm and using a lossless algorithm with linear space (and low polylogarithmic time updates), such as the \clusterforest algorithm~\cite{wulff2013faster, deman2024towards}, as the sparse algorithm, Theorem~\ref{thm:hybrid_dynamic_connectivity} is achieved.

\input{04.1_balloon_dc}

\input{04.2_hybrid_dc}

%% file: 04.1_balloon_dc.tex
\subsection{\sketchname-Based Dynamic Algorithm} \label{sec:skinny_dynamic}

The starting point for new sketch-based algorithm is \gibb~\cite{gibb2015dynamic} (summarized in Section~\ref{sec:sketch_based_dyn_conn}).
Rather than using the $\ell_0$-sketch algorithm of Cormode~\etal~\cite{cormode2014unifying}, we will instead use our new $\ell_0$-sketch from Section~\ref{sec:balloonsketch}, \sketchname.
Recall that the data structures maintained by \gibb are $\Theta(\log \nodesize)$ levels of cutset data structures (see Section~\ref{sec:sketch_based_dyn_conn}).
Using the sketch of Cormode~\etal, the space used by each cutset is $\bigoh(\nodesize \log \nodesize)$ w.h.p., resulting in $\bigoh(\nodesize \log^2 \nodesize)$ space w.h.p. overall.

We prove that by using \sketchname instead of the sketch of Cormode~\etal in the cutset data structures, the aggregate space usage of each cutset data structure can be reduced to $\bigoh(\nodesize \log(\edgesize/\nodesize))$ in expectation, and the total space usage for the entire algorithm can be reduced to $\bigoh(\nodesize \log(\nodesize)\log(\edgesize/\nodesize))$ w.h.p.

There is inherent flexibility in the specific implementation of the cutset data structure, as any dynamic tree supporting subtree aggregate queries can be used.
Previous works such as \gibb and \cupcake use Euler-tour trees (ETTs) for this purpose, but the asymptotic space usage is the same regardless of the choice of dynamic tree. In this work, we implement the cutset data structures using UFO trees~\cite{deman2026ufo} augmented with \sketchname sketches.
We chose to use UFO trees because they simplify the analysis of the total space complexity across all \sketchnames in our data structures.

Our overall dynamic connectivity data structure and algorithms remain equivalent to \gibb.
Since UFO trees provide the same update complexity as ETTs, the overall update and query complexity of our dynamic connectivity algorithm remain the same: $O(\log^4 \nodesize)$ and $O(\log \nodesize / \log \log \nodesize)$ respectively.
Lemma~\ref{lem:balloon_dc} summarizes the results of our algorithm.

\begin{restatable}{lemma}{balloondcspace} \label{lem:balloon_dc}
    There exists a dynamic connectivity algorithm with space complexity $O(\nodesize \log \nodesize \log (2+\edgesize/\nodesize))$ w.h.p.
    The update complexity is $O(\log^4 \nodesize)$. The query complexity is $O(\log \nodesize / \log \log \nodesize)$.
\end{restatable}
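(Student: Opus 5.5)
We take the algorithm to be exactly \gibb, with every vertex sketch in every cutset $\cutset_0,\dots,\cutset_{top}$ replaced by a \sketchname matrix with $\numcolumns=\Theta(\log \nodesize)$ columns. Each cutset is realized as a UFO tree augmented with \sketchname aggregates. Correctness and the time bounds transfer almost verbatim. \sketchname is linear and has the same sampling success probability as \cameosketch, so every $\query$ on a cutset succeeds with the same constant probability. Hence invariants (1)--(3) are maintained w.h.p. by Gibb~\etal's update procedure. By Theorem~\ref{thm:balloon_sketch}, one update to all $\numcolumns$ columns costs $O(\numcolumns+\log\nodesize)=O(\log \nodesize)$ w.h.p., which is within a constant of the \cameosketch cost. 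Merging two aggregates costs time proportional to their sizes, which is at most $O(\log^2\nodesize)$. Therefore the $O(\log^4\nodesize)$ update and $O(\log\nodesize/\log\log\nodesize)$ query bounds (the query uses only the max-weight-path tree $\tree$) carry over unchanged. A union bound over polynomially many updates keeps everything w.h.p.

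The main work is the space bound, which we prove per cutset and then multiply by $top=\Theta(\log\nodesize)$ levels. A cutset stores one \sketchname matrix per UFO-tree node. Each node is a cluster $S$, and its matrix sketches $\sum_{v\in S}\charvec{v}$, whose support is the set of edges crossing the cut of $S$. This support has size at most $\sum_{v\in S}\degree(v)$. By Theorem~\ref{thm:balloon_sketch} the matrix for a cluster then uses $O(\log\nodesize\cdot\log(2+\sum_{v\in S}\degree(v)))$ space w.h.p. The columns do not need separate copies of the $O(\log\universesize)$ hash seed, since one seed can be shared by all sketches in a level.

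Next we use the UFO tree structure. Its clusters form $O(\log\nodesize)$ levels, the vertex counts of the levels decrease geometrically, and every vertex lies in exactly one cluster per level. For level $j$ with clusters $S_1,\dots,S_{k_j}$, concavity of $\log$ (Jensen) gives
\[
\sum_i \log\Bigl(2+\sum_{v\in S_i}\degree(v)\Bigr)\le k_j\log\bigl(2+2\edgesize/k_j\bigr).
\]
Since $k_j\le \nodesize c^{-j}$ for some constant $c>1$, the term $k_j\log(2+2\edgesize/k_j)\le k_j\log(2+2\edgesize/\nodesize)+k_j\,O(j)$. Summing over $j$ gives $O(\nodesize\log(2+\edgesize/\nodesize))$, because $\sum_j j c^{-j}=O(1)$. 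Multiplying by the $\numcolumns=\Theta(\log \nodesize)$ columns gives $O(\nodesize\log\nodesize\log(2+\edgesize/\nodesize))$ per cutset.

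This would overshoot by a $\log\nodesize$ factor across the $top$ levels. To avoid that, we charge the $\numcolumns$ factor through independence rather than per sketch, as the paper suggests. In \gibb, each cutset level uses fresh independent sketches, and a single query on a level uses one column group. We therefore arrange $\cutset_i$ to hold $O(1)$ columns per level instance, so that the $\Theta(\log\nodesize)$ total columns are spread across the $\Theta(\log\nodesize)$ levels. Each cutset then has expected size $O(\nodesize\log(2+\edgesize/\nodesize))$, matching the expectation claim in the text. Summing the independent geometric residual depths over all $\Theta(\nodesize\log\nodesize)$ sketches with a Chernoff bound gives the total $O(\nodesize\log\nodesize\log(2+\edgesize/\nodesize))$ w.h.p.

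I expect the main obstacle to be this concentration step. The residual depths of sketches of nested clusters within one column are not independent. I would handle this by grouping sketches by UFO level and by column, which makes the sketches within a group disjoint and therefore independent. Then I would apply a Chernoff bound per group and union-bound over groups and over the polynomially many times in the update sequence.
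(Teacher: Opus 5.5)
\paragraph{Where your proposal matches the paper.} Once you drop the $\Theta(\log\nodesize)$-column matrix per cutset in favor of $O(1)$ columns per tier, your construction is the paper's: \gibb with one \sketchname per UFO-tree node in each of the $T=\Theta(\log\nodesize)$ tiers. Your treatment of the deterministic part $\lceil\log_2 m\rceil$ of each sketch's depth is also the paper's argument: Jensen per UFO level, geometric decay of level sizes, then multiply by $T$.

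Two small repairs are needed there. First, $k_j\log(2+2\edgesize/k_j)\le k_j\log(2+2\edgesize/\nodesize)+k_j\,O(j)$ requires $k_j\ge \nodesize c^{-j}$, not $k_j\le \nodesize c^{-j}$. Use instead that $x\log(2+y/x)$ is increasing in $x$, which lets you replace $k_j$ by $\nodesize c^{-j}$; this is what the paper's maximization over $\alpha$ does. Second, with one column per tier you can no longer apply Theorem~\ref{thm:balloon_sketch} cluster by cluster. Its additive $O(\log\universesize)$ term alone would give $O(\nodesize\log^2\nodesize)$ in total. So everything rests on aggregating residual depths across sketches, and that is where the proposal breaks.

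\paragraph{The gap: the concentration step.} You group sketches by UFO level and by column, i.e.\ within a single tier, and assert that sketches of disjoint clusters are independent. They are not.
\begin{itemize}
\item All sketches in a tier must share that tier's hash function $h$, since otherwise they could not be added.
\item $h$ is only $\Theta(1)$-wise independent.
\item The cut supports of two disjoint clusters overlap on every edge running between them. Such an edge is the same coordinate in both vectors, so it lands at the same bucket depth in both sketches.
\end{itemize}
Hence no Chernoff bound is available inside your groups. For the same reason, your claim that the residual depths of all $\Theta(\nodesize\log\nodesize)$ sketches are independent is false.

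\paragraph{The missing idea.} Group in the transposed direction, across tiers. Fix a UFO level $\ell$ and an index $i$, and take the $T$ sketches of the $i$-th level-$\ell$ node in tiers $t=0,\dots,T-1$.
\begin{itemize}
\item These sketches use independent seeds.
\item Lemma~\ref{lemma:residualdepth} dominates the residual depth by a geometric variable for every input vector, so it does not matter that the $T$ sketches encode different cuts.
\item Lemma~\ref{lem:stochasticOrderingSum} together with the negative-binomial tail of Lemma~\ref{lemma:sumGeometric} (exactly as in Theorem~\ref{thm:columnstackwhpresidual}) bounds each group's total residual depth by $O(T+\log\nodesize)=O(\log\nodesize)$ w.h.p.
\item A union bound over the $\sum_\ell (5/6)^\ell\nodesize=O(\nodesize)$ groups, and over the polynomially many time steps, gives total residual depth $O(\nodesize\log\nodesize)$ w.h.p. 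This is dominated by the deterministic part.
\end{itemize}
In effect, the tiers play the role of the $\numcolumns$ independent columns in Theorem~\ref{thm:columnstackwhpresidual}.
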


\myparagraph{Space Usage}
Here we provide an overview of the space complexity analysis for our new algorithm.
We provide the full proof of Lemma~\ref{lem:balloon_dc} in Appendix~\ref{app:balloon_dc}.
In our prior proof of the space usage of our streaming algorithm ( Theorem~\ref{thm:hybrid_streaming_connectivity}, Section~\ref{sec:hybrid_streaming}), we established that the total space usage of the \sketchname matrix for each vertex $v$ is $\Theta(\log \nodesize \log \degree_v)$ w.h.p.
%
Proving the space bounds for our dynamic algorithm introduces two main challenges.

First, we must account for space in internal UFO tree nodes, not just the leaves representing the vertices of the graph.
While leaf nodes asymptotically dominate the total node count, higher-level internal nodes may sketch an increasing average number of edges.
We must rigorously prove that the sum of these potentially larger internal \sketchnames does not asymptotically exceed the leaf-level space.

Second, the $\Theta(\log \nodesize)$ sketches for a vertex are distributed across independent UFO trees.
Naively bounding the space of each sketch individually yields a suboptimal $O(\nodesize \log^2 \nodesize)$ total space w.h.p.
Instead, we must leverage the independent random seeds across tiers to aggregate sketch residual depths and apply concentration bounds, achieving the tighter $O(\nodesize \log \nodesize \log (2+\edgesize/\nodesize))$ bound w.h.p.

\myparagraph{Vertex Updates}
For use in our hybrid dynamic connectivity algorithm described in the next section, we describe how to modify our new \sketchname-based dynamic connectivity algorithm to support a changing set of vertices.
Specifically we support the operations $\insertvertex(v)$ and $\deletevertex(v)$ which insert or delete a degree $0$ vertex in the graph.

Let $\nodesize_D$ be the current number of vertices in the graph.
We assume that there are at most $\nodesize$ total vertices at any time, that is $\nodesize_D \leq \nodesize$.
To maintain that our algorithm is correct w.h.p. in $\nodesize$ we still maintain $\Theta(\log \nodesize)$ tiers at all times, even if the actual number of vertices $\nodesize_D$ is actually much less than $\nodesize$.
Consequently, the space usage of the data structure is $O(\nodesize_D \log \nodesize \log (2+\edgesize/\nodesize_D))$ with high probability in $\nodesize$ (note that the first logarithmic term is in terms of $\nodesize$, not $\nodesize_D$).

To support vertex updates, we maintain a hash map that maps each vertex $v$ to the UFO tree leaf cluster representing them in each tier.
Vertex insertions allocate a new UFO cluster with a \sketchname in each tier, and vertex deletions free them all.
Both take $O(\log \nodesize)$ time, since an empty \sketchname has constant size.

%% file: 04.2_hybrid_dc.tex
\subsection{Hybrid Dynamic Connectivity Framework} \label{sec:hybrid_framework}

\begin{figure*}[ht]
    \centering
    \includegraphics[width=\linewidth]{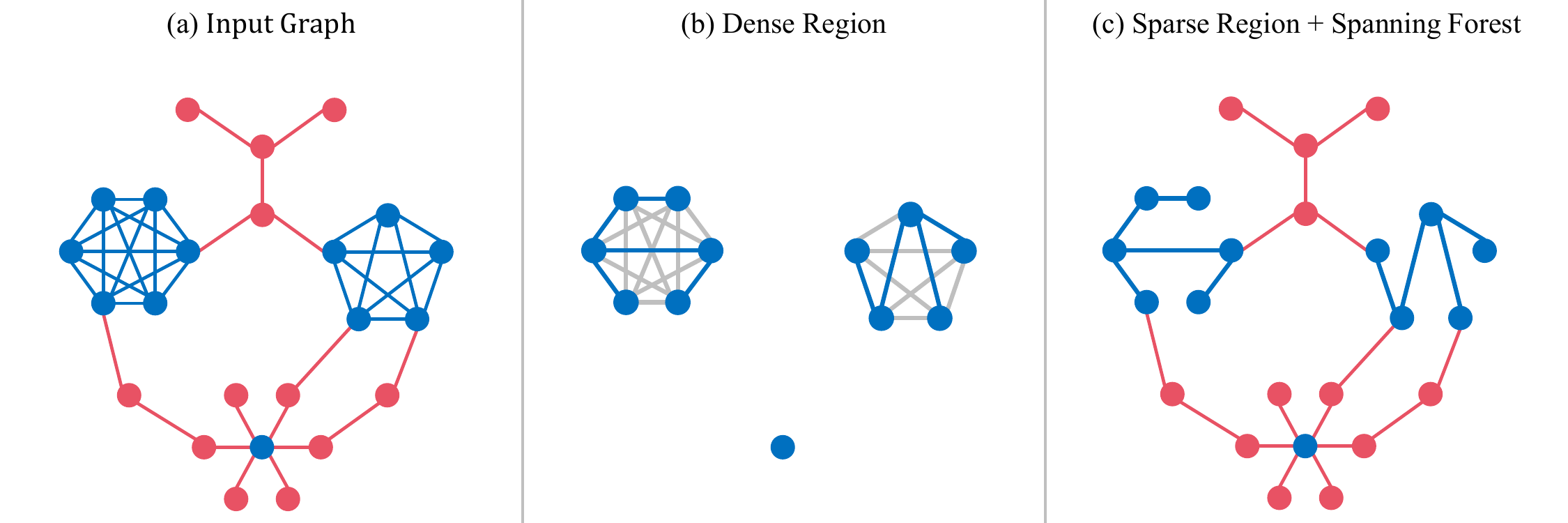}
    \vspace{-1em}
    \caption{
    An example of the partitioning of a graph by our hybrid framework with density threshold $\densethresh = 3$.
    Subfigure (a) shows a graph $\graph$. The blue vertices and edges in (a) represent heavy vertices and dense edges, and the red vertices and edges represent light vertices and sparse edges.
    Subfigure (b) shows the dense region of $\graph$ which is stored by $\densealg$, indicating the spanning forest with blue edges.
    Subfigure (c) shows the union of the sparse region of $\graph$ and the spanning forest from $\densealg$, which is stored by $\sparsealg$.
    }
    \label{fig:hybrid_dynamic_example}
\end{figure*}

In this section, we introduce our hybrid dynamic connectivity framework.
First we provide a few key definitions.
As before, our framework is parametrized by some fixed density threshold $\densethresh$.
\begin{itemize}[leftmargin=15pt,itemsep=0pt]
    \item A \defn{heavy vertex} is defined as a vertex $v$ such that $\degree(v) \geq \densethresh$.
    \item A \defn{light vertex} is defined as a vertex $v$ such that $\degree(v) < \densethresh$.
    \item A \defn{dense edge} is defined as an edge $e=(u,v)$ such that $u$ and $v$ are both heavy vertices.
    \item A \defn{sparse edge} is defined as any edge that is not a dense edge.
    \item The \defn{dense region} of $\graph$ is the subgraph induced by the heavy vertices. That is all the heavy vertices, and all the dense edges (between two heavy vertices).
    \item The \defn{sparse region} of $\graph$ contains all vertices in $\nodes$ and all the sparse edges.
\end{itemize}
Note that by these definitions, the edges of the graph are partitioned between the dense region and sparse region, while there may be overlap in the sets of vertices in each region (the sparse region always has all vertices and the dense region has a subset).
Our framework uses two separate dynamic connectivity algorithms to store the dense and sparse regions of the graph:

\begin{itemize}[leftmargin=15pt,itemsep=0pt]
    \item \defn{Dense Algorithm ($\densealg$)}: stores only the dense region of $\graph$.
    \item \defn{Sparse Algorithm ($\sparsealg$)}: stores the union of the sparse region of $\graph$ and a spanning forest of the dense region.
\end{itemize}
Specifically, we require that $\densealg$ explicitly stores a spanning forest, and that $\sparsealg$ contains the same spanning forest of the dense region that is maintained by $\densealg$.
This means that all edges are partitioned between $\densealg$ and $\sparsealg$ except for these \defn{dense spanning forest edges} which are stored by both algorithms.
Figure~\ref{fig:hybrid_dynamic_example} shows an example of a graph $\graph$, its dense and sparse regions, and the subgraphs maintained by $\densealg$ and $\sparsealg$.

Our framework is designed to use a lossless algorithm using $O(\nodesize + \edgesize)$ space as $\sparsealg$, and a sketch-based algorithm using at most $O(\nodesize \log^2 \nodesize)$ space as $\densealg$, however this is not a requirement.
Our framework only requires that (in addition to supporting all the operations required by the dynamic connectivity problem) $\densealg$ and $\sparsealg$ have the following properties, which are standard or easily supported by most existing dynamic connectivity algorithms:

\bigskip
\noindent \underline{Requirements for $\densealg$}:
\begin{itemize}[leftmargin=15pt,itemsep=0pt]
    \item Explicitly maintains a \defn{spanning forest $\densealg.\forest$} over its graph. Note that some dynamic connectivity algorithms can answer connectivity queries, but do not maintain a spanning forest.
    \item $\insertedge$ and $\deleteedge$ operations \defn{return a list of the changes to $\densealg.\forest$} induced by the update.
    \item \defn{$\insertvertex(v) / \deletevertex(v)$}: Given a vertex $v$ with $\degree(v)=0$, inserts or deletes $v$ in the graph. This is necessary since the vertex set in the dense region of the graph may change.
\end{itemize}
\noindent \underline{Requirements for $\sparsealg$}:
\begin{itemize}[leftmargin=15pt,itemsep=0pt]
    \item \defn{$\incidentedges(v)$}: Return all edges incident to vertex $v$. This is not supported by existing sketch-based methods, but is simple for most lossless methods.
    \item \defn{$e \in \sparsealg.\edges$}: Returns whether edge $e$ is in the graph maintained by $\sparsealg$. This is also not possible for existing sketch-based methods, but easy for lossless methods.
\end{itemize}
Most notably, $\densealg$ require each update to return a list of changes to its spanning forest. 
We require this because different dynamic connectivity algorithms can vary greatly in how they update their spanning forest in response to edge insertions or deletions.
For example, any insertions or deletions in sketch-based algorithms like \gibb can induce up to $\Theta(\log \nodesize)$ spanning forest changes.


In addition to $\densealg$ and $\sparsealg$, our framework stores a few extra data structures to interface between the two.
To handle vertices that change from heavy to light, our framework maintains an \iblt with recovery size $\densethresh/2$, for each vertex in $\densealg$.
We also store an array of $\nodesize$ boolean values indicating whether each vertex is in $\densealg$.
The last piece of additional data stored by our framework is the \defn{global degree} $\degree(v)$ for each vertex $v$ (its degree in the input graph $\graph$). This is necessary because the edges of $\graph$ may be divided between $\densealg$ and $\sparsealg$ and some edges may be in both, so it is hard to discern the total degree using just these data structures.

In the remainder of this section we abstractly describe the query and update algorithms for our hybrid framework in terms of $\sparsealg$ and $\densealg$.
In the next section (Section~\ref{sec:hybrid_dynamic_analysis}) we analyze the theoretical guarantees provided by our framework when using the \clusterforest algorithm~\cite{wulff2013faster} as $\sparsealg$, and the \sketchname-based algorithm from Section~\ref{sec:skinny_dynamic} as $\densealg$.

\myparagraph{Query Algorithm}
Since $\sparsealg$ stores the union of the sparse region of the graph and a spanning forest of the dense region, all vertices in $\graph$ are in $\sparsealg$, and the connected components in $\sparsealg$ are exactly the same as the connected components of $\graph$.
Thus, a query $\connected(u,v)$ in our framework can be answered at any time by calling $\sparsealg.\connected(u,v)$.

\begin{algorithm}[ht]
\caption{Edge insertion algorithms for the hybrid framework.}
\label{alg:hybrid_insert}
\begin{algorithmic}[1]
    \Function{\insertedge}{$e=\{u, v\}$}
        \For{$x \in \{u,v\}$} \label{line:promote_start}
        \State $\degree(x) \gets \degree(x)+1$
        \If{$x \notin \densealg.\nodes$ and $\degree(x) > \densethresh$}
            \State $\promotevertex(x)$
        \EndIf
        \EndFor \label{line:promote_end}
        \If{$u \in \densealg.\nodes$ and $v \in \densealg.\nodes$} \label{line:ins_dense_start}
            \State $\densealg.\insertedge(e)$
            \State $\iblt_u.\myinsert(v)$, $\iblt_v.\myinsert(u)$
            \State Update $\sparsealg$ with all changes to $\densealg.\forest$. \label{line:ins_dense_end}
        \Else
            \State $\sparsealg.\insertedge(e)$ \label{line:ins_sparse}
        \EndIf
    \EndFunction
    \Statex

    \Function{\promotevertex}{$v$}
        \State $\densealg.\insertvertex(v)$ \label{line:promote_ins_vertex}
        \State Instantiate $\iblt_v$ \label{line:promote_create_recovery}
        \For{$e = \{v,u\} \in \sparsealg.\incidentedges(v)$} \label{line:promote_ins_edges_start}
            \If{$u \in \densealg.\nodes$}
                \State $\densealg.\insertedge(e)$
                \State $\iblt_u.\myinsert(v)$, $\iblt_v.\myinsert(u)$
                \State $\sparsealg.\deleteedge(e)$
            \EndIf
        \EndFor \label{line:promote_ins_edges_end}
        \State Update $\sparsealg$ with all changes to $\densealg.\forest$.
    \EndFunction
\end{algorithmic}
\end{algorithm}

\myparagraph{Insertion Algorithm}
Algorithm~\ref{alg:hybrid_insert} shows the pseudo-code for inserting an edge $e=\{u,v\}$ in our hybrid framework.
First, the algorithm updates the global degree of $u$ and $v$, and if their new degree is $> \densethresh$, calls \promotevertex on them (lines~\ref{line:promote_start}--\ref{line:promote_end}).
$\promotevertex(v)$ first inserts $v$ into $\densealg$ and instantiates an \iblt for $v$ (lines~\ref{line:promote_ins_vertex}--\ref{line:promote_create_recovery}).
Next, for all the edges incident to $v$ in $\sparsealg$, if the other endpoint is also in $\densealg$ they are inserted into $\densealg$ and the \iblts, and deleted from $\sparsealg$ (lines~\ref{line:promote_ins_edges_start}--\ref{line:promote_ins_edges_end}). This ensures that $\densealg$ includes all edges that became dense edges when $v$ became a heavy vertex.
Finally, we must update $\sparsealg$ to contain the spanning forest maintained by $\densealg$.
As mentioned previously, our framework requires that edge insertion operations in $\densealg$ report a list of changes to the spanning forest caused by these insertions.
Our algorithm uses this list of changes to insert or delete any necessary edges in $\sparsealg$ so that it always stores the exact spanning forest maintained by $\densealg$.

Now that the algorithm has dealt with vertices becoming heavy, the next step is to actually insert the new edge $e$.
If both endpoints of $e$ are in the dense region, we insert $e$ into $\densealg$, update the \iblts for both endpoints, and again update $\sparsealg$ with all the spanning forest changes in $\densealg$ (lines~\ref{line:ins_dense_start}--\ref{line:ins_dense_end}).
Else we simply insert $e$ into $\sparsealg$, since the edge is in the sparse region of the graph (line~\ref{line:ins_sparse}).

\begin{algorithm}[ht]
\caption{Edge deletion algorithms for the hybrid framework.}
\label{alg:hybrid_delete}
\begin{algorithmic}[1]
    \Function{\deleteedge}{$e=\{u, v\}$}
        \For{$x \in \{u,v\}$} \label{line:demote_start}
        \State $\degree(x) \gets \degree(x)-1$
        \If{$x \in \densealg.\nodes$ and $\degree(x) \leq \densethresh/2$}
            \State $\demotevertex(x)$
        \EndIf
        \EndFor \label{line:demote_end}
        \If{$u \in \densealg.\nodes$ and $v \in \densealg.\nodes$} \label{line:del_dense_start}
            \State $\densealg.\deleteedge(e)$
            \State $\iblt_u.\mydelete(v)$, $\iblt_v.\mydelete(u)$
            \State Update $\sparsealg$ with all changes to $\densealg.\forest$. \label{line:del_dense_end}
        \Else
            \State $\sparsealg.\deleteedge(e)$ \label{line:del_sparse}
        \EndIf
    \EndFunction
    \Statex

    \Function{\demotevertex}{$v$}
        \For{$e=(u,v) \in \iblt_v.\recover()$} \label{line:demote_recover}
            \State $\densealg.\deleteedge(e)$ \label{line:demote_del_start}
            \State $\iblt_u.\mydelete(v)$
            \If{$e \notin \sparsealg.\edges$}
                \State $\sparsealg.\insertedge(e)$
            \EndIf
        \EndFor \label{line:demote_del_end}
        \State Update $\sparsealg$ with all changes to $\densealg.\forest$.
        \State $\densealg.\deletevertex(v)$ \label{line:demote_del_vertex}
        \State Delete $\iblt_v$ \label{line:demote:free_recovery}
    \EndFunction
\end{algorithmic}
\end{algorithm}

\myparagraph{Deletion Algorithm}
Algorithm~\ref{alg:hybrid_delete} shows the pseudo-code for deleting an edge $e=\{u,v\}$ in our hybrid framework.
First, the algorithm updates the global degree of $u$ and $v$, and if their new degree is $\leq \densethresh/2$, calls \demotevertex on them (lines~\ref{line:demote_start}--\ref{line:demote_end}).
Similar to our hybrid streaming algorithm (Section~\ref{sec:hybrid_streaming}), we only demote vertices when their degree drops down to $\densethresh/2$ to prevent thrashing and enable efficient amortized update complexity.
$\demotevertex(v)$ uses the \iblt of $v$ to extract the at most $\densethresh/2$ edges incident to $v$ in $\densealg$ (line~\ref{line:demote_recover}).
For each of these edges, they are deleted from $\densealg$, deleted from the \iblt of the other endpoint, and inserted into $\sparsealg$ if they are not already present there (lines~\ref{line:demote_del_start}--\ref{line:demote_del_end}).
Once again, any changes caused to the spanning forest maintained by $\densealg$ by these deletions must be updated in $\sparsealg$.
In particular, this step must avoid deleting the edges just inserted to $\sparsealg$ in the previous loop.
Finally, $v$ is deleted from $\densealg$, and $\iblt_v$ is freed (lines~\ref{line:demote_del_vertex}--\ref{line:demote:free_recovery}).

Now the algorithm deletes edge $e$.
If both endpoints of $e$ are in the dense region, delete $e$ from $\densealg$, update the \iblts for both endpoints, and again update $\sparsealg$ with all the spanning forest changes in $\densealg$ (lines~\ref{line:ins_dense_start}--\ref{line:ins_dense_end}).
This last step is particularly important, because if a spanning forest edge was deleted in $\densealg$, this edge would currently be present in $\sparsealg$, and this step also deletes it there (disconnecting the graph represented by $\sparsealg$). If a replacement edge was found in $\densealg$, it becomes part of the spanning forest and is also inserted into $\sparsealg$.
If both endpoints of $e$ are not in the dense region, we simply delete $e$ from $\sparsealg$, since the edge is only in the sparse region of the graph (line~\ref{line:del_sparse}).

\subsection{Proof of Theorem~\ref{thm:hybrid_dynamic_connectivity}} \label{sec:hybrid_dynamic_analysis}

We prove Theorem~\ref{thm:hybrid_dynamic_connectivity} by analyzing the performance of our hybrid framework using our \sketchname-based algorithm from Section~\ref{sec:skinny_dynamic} as $\densealg$, and the \clusterforest algorithm~\cite{wulff2013faster} as $\sparsealg$.
%
It is easily verifiable that our \sketchname-based algorithm and \clusterforest meet the requirements for $\densealg$ and $\sparsealg$ respectively.

We configure the density threshold to be $\densethresh = \log \nodesize \log \log \nodesize$.
Intuitively, this is the value of $\densethresh$ where if every vertex has degree exactly $\densethresh$, then the space usage of a lossless representation is asymptotically equal to the space usage required in our \sketchname-based algorithm: $O(\nodesize+\nodesize \densethresh)$ versus $O(\nodesize \log\nodesize \log(2+\densethresh\nodesize/\nodesize))$.

We prove that the space complexity of our algorithm is $O(\nodesize + \edgesize)$ w.h.p. and $O(\nodesize \log \nodesize \log (2+\edgesize / \nodesize))$ w.h.p in Appendix~\ref{app:hybrid_dynamic}.
The key observation for the first bound is that the space used by $\densealg$ is $O(\edgesize)$ since the existence of $\nodesize_D$ vertices in $\densealg$ implies the total number of edges $\edgesize$ is at least $\nodesize_D \densethresh$.
The key observation for the second bound is that $\sparsealg$ uses $O(\nodesize \log \nodesize \log(2+\edgesize/\nodesize))$ space since every light vertex has maximum degree $\densethresh$, so the number of sparse edges is $O(\nodesize\densethresh)$.
%
We also prove in Appendix~\ref{app:hybrid_dynamic} that the update and query complexities are amortized $O(\log^4 \nodesize)$ and $O(\log \nodesize / \log \log \nodesize)$.

%% file: 05_implementation.tex
\section{System Design / Implementation}

In this section, we describe \defn{\sysname} (\textbf{Hybrid} \textbf{S}ketching \textbf{C}onnectivity \textbf{A}lgorithm with \textbf{L}ossless \textbf{E}dges), our hybrid sketching system for dynamic connectivity based on the algorithm from Section~\ref{sec:hybrid_dynamic}.
As per the name, our system can seamlessly \emph{scale} existing dynamic connectivity systems to larger and denser real-world graphs.
\sysname consists of three major components:
\begin{enumerate}[itemsep=0pt,leftmargin=15pt]
    \item $\sparsealg$: the \defn{Cluster Forest implementation} of De Man~\etal~\cite{deman2024towards}.
    \item $\densealg$: \defn{\densesysname}, an implementation of our new \sketchname-based dynamic connectivity algorithm from Section~\ref{sec:skinny_dynamic}.
    \item A \defn{hybrid interface} that manages communication and movement of edges and vertices between the first two components.
\end{enumerate}
We chose to use this implementation of the \clusterforest because it provides good guarantees on performance and provides robust state-of-art performance regardless of the structure of the input graph.
However, the first two components can be easily interchanged with any other existing or future dynamic connectivity systems that meet the basic requirements for $\sparsealg$ and $\densealg$. All of our implementations are written in C++ and compiled with -03 optimization.

\myparagraph{Parameter Configuration}
A key parameter in our hybrid algorithm is the density threshold $\delta$. 
To achieve good theoretical bounds $\densethresh$ must be $\Theta(\log \nodesize \log\log \nodesize)$.
In \sysname, we set $\densethresh = 25 \cdot \lceil \log_2 \nodesize \rceil$. We chose this threshold because we experimentally found it to be roughly the best value in terms of minimizing the total memory usage.
Conceptually, the constant multiple of $25 \approx 6 \cdot 4$ can be derived from the fact that $\log \log \nodesize$ is at most $6$ for all of our inputs, and the space usage of a single sketch bucket is $4$ times that of storing a neighbor entry in a lossless neighbor array.

In \sysname, we demote a vertex when its degree drops to $\densethresh/8$ (as opposed to $\densethresh/2$ in our theoretical analysis).
This choice manages a tradeoff between the aggressiveness of vertex reclamation and global memory overhead.
While a higher threshold would move vertices back to the space-efficient \clusterforest sooner, it would also require every vertex in the dense subsystem to maintain a larger \iblt. Provisioning  each \iblt for $\densethresh/2$ would cause these structures to occupy a significantly larger portion of the system's total memory. By lowering the threshold to $\densethresh/8$, we decrease this memory overhead, at the cost of some vertices remaining in the dense subsystem a bit longer.

A final parameter is the number of tiers used in \densesysname, which must be $\Theta(\log \nodesize)$ in theory. The number of tiers impacts the error rate of the Monte Carlo algorithm. We use $\lceil \log_2 \nodesize \rceil$ tiers, which is slightly more aggressive than prior work on graph-sketching methods~\cite{landscape} and below the number of tiers needed according to a standard analysis of the semi-streaming algorithm ~\cite{Ahn2012}. 
We experimentally verify that \sysname actually uses only 64\% of these $\lceil\log_2\nodesize\rceil$ tiers (geometric mean across all tested datasets.) We further note that \sysname consistently uses fewer tiers than \cupcake. 
This is intuitively because our design choice to sketch only the dense regions of the graph makes finding connectivity via Bor\r{u}vka take significantly fewer rounds.
Additionally, while we provision enough tiers for the maximum number of vertices, the actual number of vertices present in the dense region is often much lower, naturally reducing the required number of tiers.
Together, these factors indicate that there may be an opportunity for more aggressive space savings by reducing the number of tiers stored.

\subsection{\densesysname}

\densesysname uses the implementation of \cupcake~\cite{deman2025fast} as a starting point, but makes many modifications. The dynamic tree and $\ell_0$-sketching primitives we use are both different than that of \cupcake,  we handle insertion/deletion of vertices which \cupcake does not support, and we improve the buffering techniques of \cupcake.

\myparagraph{Cutset Implementation}
Systems like \cupcake and \densesysname rely on multiple tiers of cutset data structures (see Section~\ref{sec:sketch_based_dyn_conn} for details).
In \densesysname, we implement the cutset data structure using link-cut trees~\cite{sleator1983data} based on splay trees~\cite{sleator1985self}.
Note that our theoretical analysis uses UFO trees~\cite{deman2026ufo} to implement the cutset data structure for simplicity of analysis, but in practice we use link-cut trees for their lower empirical memory footprint.
Each vertex in the link-cut tree is augmented with a \sketchname, which we develop a new implementation for.
Link-cut trees typically do not support subtree queries, but they work well for subtree queries in our use-case since sketch addition is an invertible function, and they only need to store one sketch per vertex (avoiding internal aggregate sketches).
Additionally, we optimize sketch update operations in the link-cut trees by only splaying an accessed node if its depth exceeds some constant times $\log \nodesize$.

\myparagraph{Improved Batching}
The \cupcake system was optimized to rapidly process batches of updates that induce no structural changes in its cutsets.
To take advantage of this, they speculatively buffer batches of updates in hopes that the entire batch will induce no structural changes.
This has two major drawbacks.
The first is that if the batch of updates does induce a structural update, the system must undo some of the work it did and process each update in the batch one-by-one without the batch optimizations.
The second issue is that the update buffer must be flushed whenever a query occurs, so this optimization can be entirely ineffective in streams with many queries, and it can significantly increase query latencies.

\densesysname improves the buffering techniques of \cupcake so that any type of updates can be processed in a batch, including ones that induce structural changes. This overcomes the first drawback of \cupcake's buffering strategy.
Regarding the second drawback, our \sysname uses \clusterforest to answer queries, so flushing the update buffer immediately in \densesysname is not required for queries.
The only thing that requires flushing the buffer of updates is a deletion of a current spanning forest edge in \densesysname, which happens infrequently, and does not impact query latency.

\subsection{Hybrid Interface}

The hybrid interface manages communication and movement of edges and vertices between the \clusterforest and \densesysname systems, implementing the abstract logic of the algorithm from Section~\ref{sec:hybrid_dynamic}.
The set of currently dense vertices is stored in a hash table, and the total degree of each vertex is stored as an integer.
The hybrid interface also stores an \iblt for each vertex $v$ that allows for recovering all of the edges incident to a vertex being demoted from the dense region (see next paragraph for implementation details).
To optimize communication, edges that should be moved to the dense subsystem are collected in an update buffer rather than being sent immediately. This buffer is flushed to the dense subsystem in batches either when it reaches capacity or when an incoming update deletes an active edge from the dense spanning forest.
Finally, the hybrid interface stores a list of modifications made to the spanning forest during \densesysname updates. We appropriately append to this list while processing updates in \densesysname, and apply all these changes to the \clusterforest after finishing the updates.

\myparagraph{Recovery \iblts}
Our \iblt implementation is based on the original \iblt algorithm~\cite{goodrich2011iblt} and uses \cameosketch buckets as a building block.
Our specific implementation utilizes two tiers of \iblts with $k=3$ hash locations and $1.3\delta/8$ buckets and $\max\{\log_2 \nodesize, 0.2\delta/8\}$ buckets respectively for the two tiers, using xxHash~\cite{xxhash} with iterated seeds to ensure unique hash mapping.
Although the original \iblt algorithm's recovery is only correct w.h.p. in $\densethresh$ (not in $\nodesize)$, we found that the \iblt recovery succeeded in 99.9\% of cases in practice.
To ensure correctness of our overall system, any failed recovery operation simply aborts the demotion, retaining the vertex in the dense subsystem until a later demotion attempt. While this approach may keep light vertices in the dense subsystem longer, the exceptionally low error rate ensures that any resulting overhead in space or time executing demotions remains negligible.

%% file: 06_experiments.tex
\section{Experiments}
Each experiment reported in this section consists of running \sysname on an input stream comprised of edge insertions and deletions generated from some static graph (by inserting all the edges then deleting the set of inserted edges). 
All experiments were run on a 48-core AMD EPYC 7643 CPU with 256GB of RAM.

\begin{table}[ht]
    \small
    \centering
    \begin{tabular}{llrrrrc}
        \toprule
        Graph & Type & $\nodesize$ & $\edgesize$ & $2\edgesize/\nodesize$ & \begin{tabular}{@{}c@{}}Max. \\ $k$-Core\end{tabular} & Cite \\
        \midrule
        Google+          & social & \sigfig{107614}      & \sigfig{10803117}      & \AvgDeg{107614}{10803117}       & 676    & \cite{mcauley2012learning} \\
        Friendster       & social & \sigfig{65608366}    & \sigfig{1806067135}    & \AvgDeg{65608366}{1806067135}   & 304    & \cite{yang2012communities} \\
        Twitter          & social & \sigfig{41652231}    & \sigfig{1202513046}    & \AvgDeg{41652231}{1202513046}   & 2488   & \cite{twittergraph} \\
        Orkut            & social & \sigfig{3072627}     & \sigfig{117185083}     & \AvgDeg{3072627}{117185083}     & 253    & \cite{mislove2007measurement} \\
        \midrule
        ENWiki           & web    & \sigfig{4206289}     & \sigfig{91939727}      & \AvgDeg{4206289}{91939727}      & 145    & \cite{boldi2004webgraph} \\
        Youtube          & web    & \sigfig{1157828}     & \sigfig{2987624}       & \AvgDeg{1157828}{2987624}       & 51     & \cite{mislove2007measurement} \\
        \midrule
        RoadUSA          & road   & \sigfig{23947348}    & \sigfig{28854312}      & \AvgDeg{23947348}{28854312}     & 3      & \cite{roadgraphs} \\
        RoadGermany      & road   & \sigfig{12277375}    & \sigfig{16115881}      & \AvgDeg{12277375}{16115881}     & 3      & \cite{roadgraphs} \\
        \midrule
        SIFT-RS50K       & knn/rs & \sigfig{1000000}     & \sigfig{226803335}     & \AvgDeg{1000000}{226803335}     & 4301   & \cite{simhadri2021results} \\
        SIFT-KNN500      & knn/rs & \sigfig{1000000}     & \sigfig{410509248}     & \AvgDeg{1000000}{410509248}     & 499    & \cite{simhadri2021results} \\
        MSSP-RS10K      & knn/rs & \sigfig{1000000}     & \sigfig{547322605}     & \AvgDeg{1000000}{547322605}     & 18778  & \cite{simhadri2021results} \\
        MSSP-KNN500     & knn/rs & \sigfig{1000000}     & \sigfig{379809818}     & \AvgDeg{1000000}{379809818}     & 499    & \cite{simhadri2021results} \\
        \midrule
        Kron-13          & kron   & \sigfig{8192}        & \sigfig{16666534}      & \AvgDeg{8192}{16666534}         & 2827   & \cite{krongraphs} \\
        Kron-15          & kron   & \sigfig{32768}       & \sigfig{267289854}     & \AvgDeg{32768}{267289854}       & 11306  & \cite{krongraphs} \\
        Kron-16          & kron   & \sigfig{65536}       & \sigfig{1070763653}    & \AvgDeg{65536}{1070763653}      & 22623  & \cite{krongraphs} \\
        \bottomrule
    \end{tabular}
    \caption{Summary of the graph datasets in our experiments.}
    \label{tab:datasets}
\end{table}

\myparagraph{Datasets}
Table~\ref{tab:datasets} summarizes the datasets used in our experiments.
The social, web, and road graphs are sourced from SuiteSparse~\cite{suitesparse} and NetworkRepository~\cite{netrepo}, and are widely used in academic research.
The kron graphs are synthetic Kronecker random graphs generated according to the Graph500 specification~\cite{krongraphs}. They are studied in all prior graph sketching experimental research because they are dense and exhibit structural properties of real-world networks.

The $k$nn/rs graphs are symmetrized $k$-nearest neighbors ($k$nn) and range search (rs) graphs over a $1$M-point subset of the SIFT1B and MSSPACEV1B datasets from BIGANN~\cite{simhadri2021results}.
We generate these graphs using ParlayANN~\cite{parlayann2023} to compute the ground truth $k$nn or rs.
For $k$nn graphs, we use $k=500$. For rs graphs we used a squared euclidean distance radius of $50$K and $10$K respectively for SIFT1B and MSSPACEV1B.
The choice of $k$ or the search radius when constructing $k$nn/rs graphs directly controls graph density and is a central modeling question in large-scale graph construction pipelines~\cite{nndescent2011,stars2022,parlayann2023}.
Larger values of $k$ or radius yield denser graphs with richer connectivity, which typically improves the quality of downstream tasks such as graph-based clustering, semi-supervised learning, and retrieval, but substantially increases storage and processing costs. Smaller values yield sparser graphs that are cheaper to process but can under-connect the underlying data~\cite{stars2022}.
Because of this trade-off, practitioners building $k$nn/rs graphs must target a range of densities depending on the application, which motivates evaluating \sysname across moderately large $k$ and radius regimes.

Importantly, our list of datasets includes graphs with similar average densities but very different dense cores (see $k$-core column in Table~\ref{tab:datasets} which reports the largest $k$ such that the graph has a non-empty $k$-core).
The $k$-core of a graph is a popular metric for identifying well-connected communities within a graph; in particular, all vertices in a $k$-core have induced degree (i.e., degree restricted only within this subgraph) at least $k$.
%
%
Thus, the presence of a large $k$-core in a graph implies the existence of a subgraph where all vertices have high degree (in particular, the average degree is at least $k$).
The social network and road graphs both have relatively low average densities, but the social networks have much larger, denser cores than the road networks.
Similarly, for the rs and $k$nn graphs, average degrees are similar but the rs graphs have larger, denser cores.
This allows us to measure the impact of core size and density on our system performance. 

\myparagraph{Comparison Systems}
In these experiments we evaluate the performance of \clusterforest and \cupcake, since they are state-of-the-art implementations of dynamic connectivity systems optimized for sparse and dense graphs respectively. 
Our focus on these specific systems is intended to demonstrate the efficacy of the hybrid framework. Because the framework is modular, \sysname is designed so that any other sparse or dense optimized implementations developed in the future can be seamlessly integrated.

\begin{figure*}[ht]
    \centering
    \includegraphics[width=\linewidth]{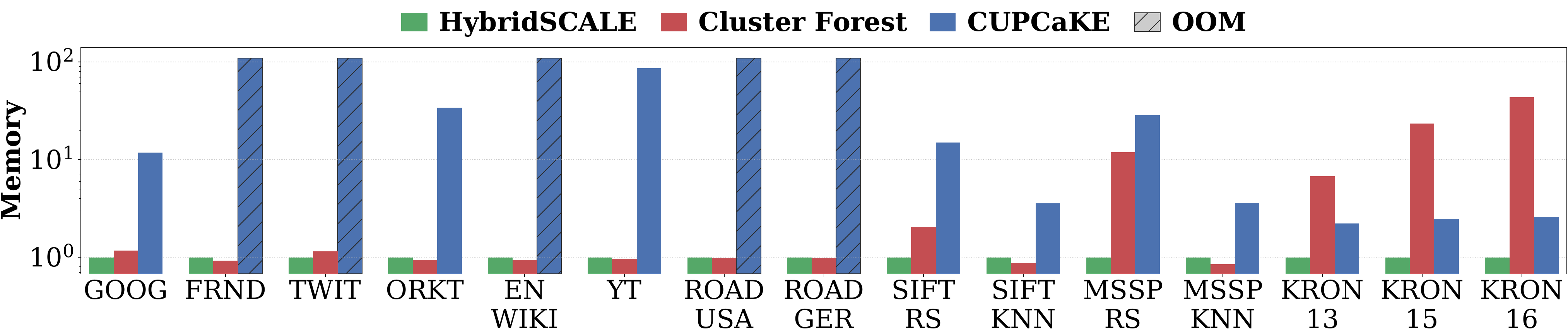}
    \vspace{-1em}
    \caption{Peak memory usage of \sysname, \clusterforest, and \cupcake on various input graph streams (normalized to \sysname). A bar with diagonal hashes indicates that the machine ran out of memory (OOM) while processing that input.}
    \label{fig:memory_experiment_results}
\end{figure*}

\begin{figure*}[ht]
    \centering
    \includegraphics[width=\linewidth]{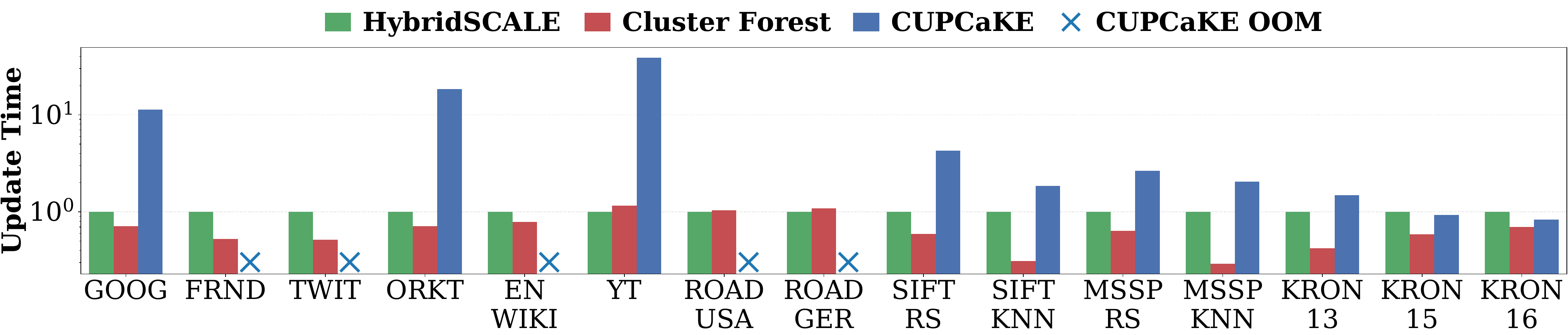}
    \vspace{-1em}
    \caption{The total time taken by \sysname, \clusterforest, and \cupcake to process all updates for various input graph streams (normalized to \sysname). The blue \textcolor{blue}{$\times$} indicates that \cupcake ran out of memory (OOM) during initialization.}
    \label{fig:speed_experiment_results}
\end{figure*}

\begin{figure}[ht]
    \centering
    \includegraphics[width=\linewidth]{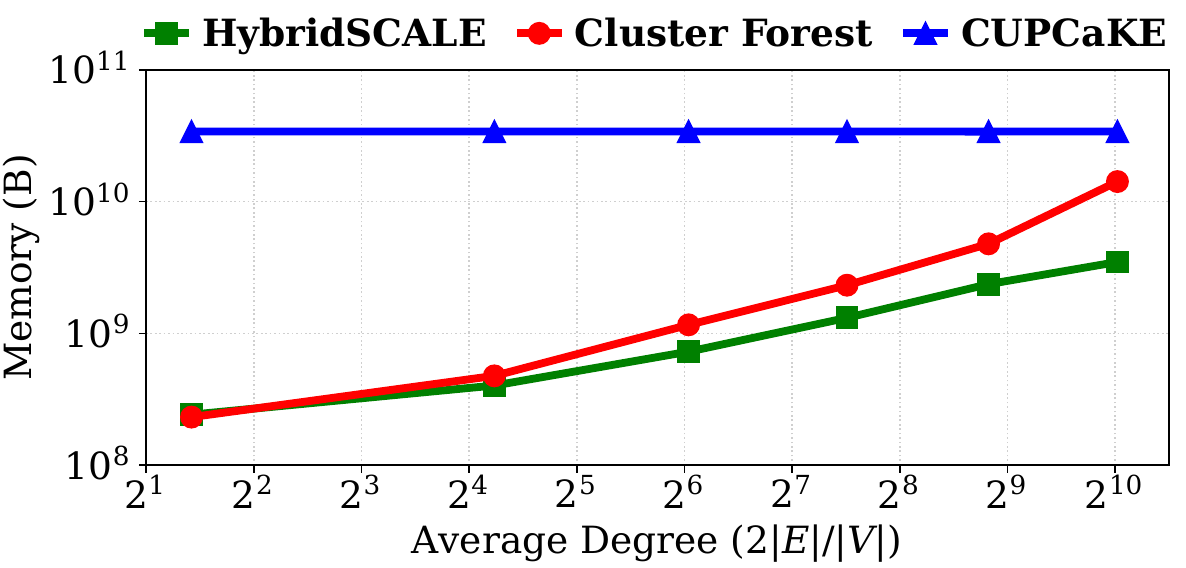}
    \vspace{-1em}
    \caption{Memory usage \sysname, \clusterforest, and \cupcake across various range search graphs constructed from a 1M-point subset of SIFT1B. The search radii are square Euclidean distances 10K, 20K, 30K, 40K, 50K, and 60K.}
    \label{fig:placeholder}
\end{figure}

\subsection{Memory Usage Results}

For each dataset, we measure the maximum space used by each system at any point during stream ingestion.
The results of this experiment are summarized in Figure~\ref{fig:memory_experiment_results}. 
For each of the (sparse) social, web, and road graphs, the space used by \sysname is comparable to that of the \clusterforest, and \cupcake uses one to two orders of magnitude more space. \sysname is $1.18\times$ more compact than \clusterforest on the Google+ graph and $1.16\times$ more compact on the Twitter graph, because these graphs have high max $k$-core, indicating relatively large, dense cores.
The other social, web, and road graphs have smaller dense cores, leaving fewer opportunities to reduce the space cost by sketching.
Consequently \sysname uses slightly more space than \clusterforest due to the minor space overhead of our hybrid interface.

On the (dense) kron graphs, \sysname uses $1$--$2$ orders of magnitude less space than \clusterforest, because these graphs have very large average degree, making a lossless representation very large.
Note that \sysname is also several times smaller than \cupcake, primarily due to the space improvements of \sketchname, and the usage of link-cut trees instead of Euler-tour trees.

On the SIFT1B and MSSPACEV1B datasets, we can further see the importance of core size and density to \sysname's space performance.
On these graphs with $\nodesize = 10^6$, $\delta$, the \sysname density threshold is $25 \cdot \lceil \log_2(10^6) \rceil = 500$.
In the $k$nn graphs with $k = 500$ each vertex has degree at least $500$, therefore every vertex is included in the dense subsystem.
However due to the properties of $k$nn graphs, the max $k$-core is only $499$.
Thus these $k$nn graphs represent a sort of worst-case input for the hybrid framework where every vertex is just dense enough to be sketched, but the dense region does not have a high enough average degree to achieve meaningful space savings through sketching.
In contrast, the rs graphs have no limit on the degree of any vertex and tend to have larger, denser cores than their $k$nn counterparts with similar average degree.

This core size discrepancy directly affects the space performance of \sysname. For the rs graphs, \sysname is significantly more space-efficient than \clusterforest. Specifically, for SIFT1B \sysname uses a third of the space of \clusterforest, and for MSSPACEV1B it uses an order of magnitude less space. 
For the $k$nn graphs we observe that \sysname is moderately larger than \clusterforest. Since it cannot save much space by sketching the nonexistent core, its overheads for the sketching and hybrid driver data structures make it a bit larger than \clusterforest.

\myparagraph{Density Sweep}
Figure~\ref{fig:placeholder} further demonstrates the impact of core size, measuring the space usage of each system across rs graphs over the SIFT1B dataset with increasing squared euclidean distance radii. 
At low radius values, the dense core is too small to sketch efficiently and therefore \sysname and the \clusterforest are essentially the same size.
As the radius (and the size of the dense core) increases, \sysname becomes more compact than \clusterforest.
Note also that regardless of core size, \sysname is significantly smaller than \cupcake due to the reduced space complexity of \sketchname.
Due to this fact, \sysname can effectively leverage sketching to yield space savings at much lower average degrees than \cupcake (for \cupcake it must be in the low thousands).

\subsection{Update Throughput Results}

For each dataset, we measured the total time of each system to process all updates in the stream. The results are summarized in Figure~\ref{fig:speed_experiment_results}.
Our goal is to show that our hybrid framework incurs no significant degradation over the running time of its two subsystems.
For example, on sparser graphs where few edges are sketched (e.g. the road graphs) the performance of \sysname{} is essentially the same as that of cluster forest.
On the other hand, on graphs where most of the edges are sketched, (e.g. the $k$nn and kron graphs), the update performance of \sysname{} is closer to that of \cupcake{}.
%
%
%
%
%
For sparse graphs with reasonably dense cores (social and rs graphs), both subsystems of \sysname process a significant fraction of the edges, thus the running time is between \clusterforest and \cupcake.






%% file: 07_conclusion.tex
\section{Conclusion}


This work advances the story of practical graph sketching in several important ways.
Sketch-based dynamic graph algorithms have a very different space usage profile from lossless ones: their space scales with the number of vertices rather than the number of edges, which is extremely attractive on dense graphs but imposes a significant per-vertex overhead that makes them very costly on sparse real-world graphs.
This mismatch between where sketching shines in theory and where graphs actually sit in practice has long stood in the way of the broader adoption of graph sketching.
Our central message of this paper is that sketching does not need to win on every graph in order to be useful.
In particular, by applying sketching selectively to the dense cores of graphs and storing the rest losslessly, hybrid sketching can leverage sketching exactly where it helps most.
Hybrid sketching is thus a simple ``self-tuning'' data structure that works well on graphs with densities ranging from very sparse graphs all the way toward very dense regimes where sketching offers massive advantages.

Hybrid sketching is a general approach, not a single algorithm.
In particular, our system \sysname{} is agnostic to the specific lossless and sketch-based subroutines it wraps, so future work can freely substitute new components as the underlying algorithms improve.
For instance, we could replace the lossless subsystem with a heuristic structure such as the ID-Tree~\cite{IDTree}, which trades theoretical guarantees for fast empirical performance on low-diameter graphs, or plug in future improved sketching primitives.
Hybrid sketching should also be naturally applicable to other graph streaming problems such as cut and spectral sparsifiers, spanners, triangle counting, and other sketch-based dynamic graph data structures.
The broader takeaway of our work is that when used carefully, graph sketching already provides meaningful space savings for real-world graph datasets.

%% file: AA_balloon_proofs.tex
\section{\sketchname Analysis}
\label{appendix:balloon_proofs}

\subsection{Probability Preliminaries}

\subsubsection{Stochastic Dominance}

\begin{definition}[Stochastic Dominance]
\label{def:stochasticOrdering}
Let $X$ and $Y$ be random variables over $\mathbb{R}$. We use $Y \succeq X$ to denote that $Y$ \defn{stochastically dominates} $X$, meaning:

\begin{align*}
    \prob{Y \geq t} \geq \prob{X \geq t},
\end{align*}

for all $t$ in $\mathbb{R}$.

\end{definition}

This definition is equivalent to the statement that $F_X(t) \geq F_Y(t)$ for all $t$ in $\mathbb{R}$ (note the flipped direction), where $F_X(t)$ and $F_Y(t)$ are the cumulative distribution functions of $X$ and $Y$, respectively.
Stochastically ordered variables have a number of useful closure properties, namely that they are closed under summation of independent variables:

\begin{lemma}
\label{lem:stochasticOrderingSum}
Let $X_1, X_2, \dots X_n$ be a set of independent random variables, and let $Y_1, Y_2, \dots, Y_n$ be another set of independent random variables such that $X_i \preceq Y_i$ for all $i \in \positivesUpTo{n}$. Then:

\begin{align*}
    \sum_{i=1}^n X_i \preceq \sum_{i=1}^n Y_i.
\end{align*}
\end{lemma}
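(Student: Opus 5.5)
We prove the statement by a coupling argument and induction on $n$, reducing everything to the single-variable case. It is convenient to use the CDF form of the definition: $X \preceq Y$ means $F_X(t) \geq F_Y(t)$ for all $t$. From this we build a quantile coupling on a common probability space.

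\textbf{Single-variable coupling.} For a random variable $Z$, define the generalized inverse $F_Z^{-1}(u) = \inf\{t : F_Z(t) \geq u\}$ for $u \in (0,1)$. If $U$ is uniform on $(0,1)$, then $F_Z^{-1}(U)$ has the same distribution as $Z$. Suppose $F_X \geq F_Y$ pointwise. Then the set $\{t : F_X(t) \geq u\}$ contains the set $\{t : F_Y(t) \geq u\}$, so $F_X^{-1}(u) \leq F_Y^{-1}(u)$ for every $u$. Consequently, $\hat X = F_X^{-1}(U)$ and $\hat Y = F_Y^{-1}(U)$ satisfy $\hat X \leq \hat Y$ almost surely, with the correct marginals.

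\textbf{Extending to sums.} Take independent uniforms $U_1,\dots,U_n$, and set $\hat X_i = F_{X_i}^{-1}(U_i)$ and $\hat Y_i = F_{Y_i}^{-1}(U_i)$. Because the $U_i$ are independent, the vector $(\hat X_1,\dots,\hat X_n)$ has the same joint law as $(X_1,\dots,X_n)$, using the assumed independence of the $X_i$. The same holds for $(\hat Y_i)$ and $(Y_i)$. Pointwise we have $\hat X_i \leq \hat Y_i$, hence $\sum_i \hat X_i \leq \sum_i \hat Y_i$ surely. Therefore, for every $t$,
\[
\Pr\Bigl[\sum_i X_i \geq t\Bigr] = \Pr\Bigl[\sum_i \hat X_i \geq t\Bigr] \leq \Pr\Bigl[\sum_i \hat Y_i \geq t\Bigr] = \Pr\Bigl[\sum_i Y_i \geq t\Bigr],
\]
which is exactly the claim.

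\textbf{Main obstacle.} The only delicate step is the quantile comparison, which has to handle non-continuous CDFs correctly. Right-continuity of $F$ ensures the infimum is attained, so $F_Z^{-1}(U) \leq t \iff U \leq F_Z(t)$, and this gives the correct marginal law. Everything else is routine.

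An alternative that avoids the coupling is induction on $n$. Independence allows conditioning on $S = \sum_{i<n} X_i$, which gives $\Pr[S + X_n \geq t] = \mathbb{E}[\Pr[X_n \geq t - S \mid S]] \leq \Pr[S + Y_n \geq t]$. One then conditions on $Y_n$ to replace $S$ by $\sum_{i<n} Y_i$. This route requires constructing $S$ and $Y_n$ as independent, which can again be arranged on a product space.
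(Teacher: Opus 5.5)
Your quantile-coupling argument is correct and complete. The paper does not prove this lemma itself but cites Theorem 1.A.3 of Shaked and Shanthikumar, whose standard proof is this same construction: couple each pair $X_i \preceq Y_i$ through generalized inverses of independent uniforms $U_1,\dots,U_n$, then sum the pointwise inequalities. (One wording point: your opening sentence announces an induction on $n$, but the simultaneous coupling handles all $n$ at once, so neither the induction nor the alternative conditioning route is needed.)
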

(Adapted from Theorem 1.A.3. in Shaked \& Shanthikumar \cite{shakedStochasticOrders2007})

\subsubsection{Concentration Bounds}

We will make use of the following tail bound for geometric random variables:

\begin{lemma}
\label{lemma:sumGeometric}
Let $\{Y_1, \dots, Y_n\}$ be the sum of $n$ independent geometric random variables with success probability $p$, and $k$ be some fixed constant. Define $Y := \sum_{i=1}^n Y_i$. Then $Y$ is a negative binomially distributed random variable such that:

\begin{align*}
    \prob{Y \geq k \expectation [Y]} \leq e^{\frac{-kn(1-1/k)^2}{2}}
\end{align*}
\end{lemma}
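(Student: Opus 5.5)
The plan is to reduce the negative binomial tail to a binomial lower tail and then apply a standard multiplicative Chernoff bound. I take each $Y_i$ to count the number of independent Bernoulli($p$) trials up to and including the first success, so $\expectation[Y_i]=1/p$ and $\expectation[Y]=n/p$. I assume $k>1$; this is the only regime where the bound is meaningful, and the only one in which it will be invoked.

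\textbf{Coupling with a Bernoulli process.} Realize all $Y_i$ on one infinite sequence of i.i.d.\ Bernoulli($p$) trials. Then $Y$ is exactly the index of the $n$-th success. Let $t := \lceil k n/p \rceil$. The event $Y \geq t$ says that the $n$-th success has not occurred by trial $t-1$. Equivalently, the first $t-1$ trials contain at most $n-1$ successes. That in turn implies that the first $t$ trials contain at most $n$ successes. Hence
\[
\prob{Y \geq k\,\expectation[Y]} \le \prob{Y \ge t} \le \prob{B \le n}, \qquad B \sim \mathrm{Bin}(t,p).
\]
Here $\mu := \expectation[B] = tp \ge kn$.

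\textbf{Chernoff step.} Write $n = (1-\delta)kn$ with $\delta = 1-1/k \in (0,1)$. Because $\mu \ge kn$, the binomial tail $\prob{B \le n}$ can only shrink when the number of trials grows. Equivalently, one can apply the bound with $\mu$ itself, since $\delta' \ge \delta$ and $\mu \ge kn$ both only make the exponent larger. The lower-tail Chernoff bound $\prob{B \le (1-\delta)\mu} \le e^{-\delta^2\mu/2}$ then gives
\[
\prob{B \le n} \le e^{-\delta^2 k n/2} = e^{-\frac{kn(1-1/k)^2}{2}}.
\]
This is the claimed bound.

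\textbf{Main obstacle.} The only delicate point is the bookkeeping in the reduction. This includes the off-by-one between ``$\ge t$'' and ``$>t-1$'' and the rounding of $kn/p$ to an integer. Both are handled by enlarging the number of binomial trials, which only decreases $\prob{B\le n}$ and increases $\mu$ in the exponent.

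If geometric variables are instead taken to count failures, supported on $\{0,1,\dots\}$, the same argument applies with the substitution $Y' = Y - n$. In that case one must check that the threshold is still at least $kn/p$ trials when $p$ is bounded away from $1$. I would state the trial-counting convention explicitly, since that is the convention used for bucket depths in the \sketchname analysis.
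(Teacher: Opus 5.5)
Your proof is correct and is essentially the argument in Brown's note, which the paper cites instead of giving a proof: view $Y$ as the index of the $n$-th success in a Bernoulli($p$) process, reduce $\{Y \ge kn/p\}$ to a binomial lower tail with mean at least $kn$, and apply Chernoff with $\delta = 1-1/k$; your handling of the ceiling and off-by-one is sound, and so is your restriction to $k>1$, since for $0<k<1$ the bound fails for large $n$. One caution on your closing aside: under the failure-counting convention the threshold is $kn/p-(k-1)n<kn/p$ trials for every $k>1$, so the check you mention can never succeed, and the inequality is genuinely false there (for $p=1/2$, $k=2$ the tail decays only like $e^{-0.17n}$, not $e^{-n/4}$), so the trial-counting convention is required, not merely a presentational choice; it is the convention the paper uses, with support $\mathbb{Z}^+$ and $\expectation[Y]=2L$.
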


A simple proof of this lemma can be found in \cite{brownGeometricConcentration}.

\subsection{Bounding the Residual Depth of \\ \sketchname}

Regardless of the level of independence of the hash function $h$, the residual depth of a geometric sampling sketch column $\sketchColumn{h}$ is itself bounded by a geometric random variable. 
This follows from union bounding the probability that a bucket $\buck_i$, or any entry below it, is non-empty, and then applying Markov's inequality to bound the probability that any bucket of depth $i$ or greater is non-empty. 

The formal statement and proof are given below in Lemma \ref{lemma:residualdepth}.

\begin{lemma}
\label{lemma:residualdepth}
Let $M$ be the random variable corresponding to the residual depth of a sketch column $\sketchColumn{h}(\sketchvec)$ with input $\sketchvec \in \Field^\universesize$. Then $M$ is stochastically dominated by a geometric random variable $Y$ with $p=1/2$ and support $\{1, 2, \dots\}$.
\end{lemma}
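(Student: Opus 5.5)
To show $M \preceq Y$ I will prove the tail bound $\prob{M \geq t} \leq 2^{-(t-1)} = \prob{Y \geq t}$ for every integer $t \geq 1$. For $t \leq 0$ the inequality is trivial because $\prob{Y \geq t} = 1$. Since $M$ is integer-valued, this gives $\prob{M \geq t} \leq \prob{Y \geq t}$ for all real $t$.

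\textbf{Step 1 (reduce to a statement about buckets).} Let $w = \lceil \log_2 \supportsize \rceil$, and assume $\supportsize \geq 1$, since otherwise the depth is $0$ and $M = 0$. By the definitions of sketch depth and residual depth, $M \geq t$ with $t \geq 1$ means the depth is at least $w + t$. That in turn means some bucket $\buck_i$ with $i \geq w + t - 1$ is non-empty. The deterministic bucket $\buck_0$ is irrelevant here because $w + t - 1 \geq 0$, and the case $i = 0$ is handled by the convention below. So we need to bound the probability that some non-zero coordinate has random-bucket index at least $j := w + t - 1$.

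\textbf{Step 2 (union bound and Markov).} For a single coordinate $v$ in the support $\supportset$, the event that its random-bucket index is at least $j$ is exactly the event that $h(v) \equiv 0 \pmod{2^j}$. This event has probability $2^{-j}$ using only that $h(v)$ is uniform on $[2^\cutoff]$, which requires just $1$-wise independence. (If $j \geq \cutoff$ the probability is $0$.) Let $N_j = |\{v \in \supportset : \text{index}(v) \geq j\}|$. Then
\[
\prob{N_j \geq 1} \leq \expectation[N_j] = \supportsize\, 2^{-j} \leq 2^{w} 2^{-(w+t-1)} = 2^{-(t-1)},
\]
using $\supportsize \leq 2^w$. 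This is the Markov/union-bound step mentioned in the text, and it needs no independence between coordinates. That is why the lemma holds "regardless of the level of independence."

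\textbf{Step 3 (match to the geometric law).} A geometric variable on $\{1,2,\dots\}$ with $p = 1/2$ satisfies $\prob{Y \geq t} = 2^{-(t-1)}$. So Step 2 gives exactly the domination in Definition~\ref{def:stochasticOrdering}.

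\textbf{Main obstacle: the off-by-one conventions.} These arise in how "depth" is defined (the index of the first empty-forever bucket, versus the deepest non-empty index) and in the clipping of residual depth at $0$ when the depth is below $w$. I will handle them by checking the boundary case $t = 1$ separately, where the bound reads $\prob{M \geq 1} \leq 1$ and is trivial. I will also verify that for $t \geq 2$ the event $M \geq t$ really forces a non-empty bucket at index at least $w + t - 1 \geq 1$, so that a random bucket, not $\buck_0$, is responsible. If the indexing turns out to be shifted by one, the bound only improves, to $2^{-t}$, and is still dominated by $Y$.
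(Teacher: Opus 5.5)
Your proof is correct and reaches the same tail bound as the paper, $\Pr[M\ge t]\le 2^{-(t-1)}=\Pr[Y\ge t]$, but by a leaner decomposition.

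The paper applies Markov bucket by bucket ($\Pr[|B_i|>0]\le m/2^i$) and union-bounds over all buckets of index at least $w+k$ by summing a geometric series. It then passes through the indicators $G_k$ and an auxiliary variable $X=\min\{j: G_j=0\}$ that dominates $M$ statewise. You instead make a single union bound over the $m$ support coordinates, applied to the cumulative event $2^j\mid h(v)$, whose probability is exactly $2^{-j}$.

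What this buys you is the right constant without delicate bookkeeping. Suppose the per-bucket route is carried out with the stated bound $2^{-i}$ and the correct translation: $M\ge k$ iff some bucket of index at least $w+k-1$ is non-empty, i.e.\ $G_{k-1}=1$. Then it yields only $4/2^k$. The paper reaches $2/2^k$ by identifying $\{X\ge k\}$ with $\{G_k=1\}$, which is off by one. Your cumulative event implicitly uses the exact probability $2^{-(i+1)}$ of landing in random bucket $i$. Summing that over $i\ge w+t-1$ gives exactly your $m2^{-(w+t-1)}$.

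The bound is tight: for $m=1$ one has $\Pr[M\ge t]=2^{-(t-1)}$ for all $t\ge 2$. So the hash convention $i=\max\{d: 2^d\mid h(v)\}$ that you fix in Step 2 is essential. If random indices were geometric on $\{1,2,\dots\}$, as a literal reading of $\Pr[v\in b_i]=1/2^i$ suggests, the lemma would fail by one. Your closing hedge therefore covers a shift in the depth convention but not a shift in the hash convention.

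A small wording point: in Step 1, $b_0$ is irrelevant because $w+t-1\ge 1$ once $t\ge 2$, not because $w+t-1\ge 0$. Your final paragraph already corrects this.
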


\begin{proof}
    Let $\supportset := \supp(\sketchvec)$, and $\supportsize := ||\sketchvec||_0 = |\supportset|$.
    We note that for all $z \in \supportset$, it holds, by the definition of a geometric sketch column, that 
\begin{align*}
    \Pr \left[ z \in (Z \cap \buckset_i)\right] = \Pr \left[z \in \buckset_i \right] \leq \frac{1}{2^i}
\end{align*}

Thus, the expected number of non-zeros in $\buck_i$ is:

\begin{align*}
    \expectation \left[ \sum_{z \in \supportset} \mathbf{1}\{z \in \buckset_i \} \right] &= \sum_{z \in \supportset} \expectation \left[ \mathbf{1}\{ z \in \buckset_i\}\right] &\text{(linearity of expectation)}\\
    &= \sum_{z \in Z} \Pr[z \in B_i] \\
    &\leq \sum_{z \in Z} \frac{1}{2^i} = \frac{m}{2^i} \\
\end{align*}

We bound the probability that $\buck_i$ is non-empty, by Markov's inequality: 

\begin{align*}
    \Pr\left[|\buckset_i| > 0\right] &= \Pr \left[ \left(\sum_{z \in \supportset} \mathbf{1} \{ z \in \buckset_i \} \right) \geq 1\right] \\
    &\leq {\expectation \left[\sum_{z \in \supportset} \mathbf{1}\{z \in B_i \} \right]} \\
    &\leq \frac{m}{2^i}
\end{align*}

Let $w := \lceil \log_2 \supportsize \rceil$. Then for all $i \geq 0$:

\begin{align*}
    \Pr[|\buckset_{w + i}| > 0] &\leq \frac{\supportsize}{2^{w+i}} \\
    &\leq \frac{2^{\log_2 \supportsize}}{2^{\lceil \log_2 \supportsize \rceil}} \cdot \frac{1}{2^i} \\
    &\leq \frac{1}{2^i} \\
\end{align*}


Let $(G_1, G_2, \dots)$ be the sequence of random variables such that $G_k$ is equal to $1$ if there exists a non-empty bucket with depth greater than or equal to $w+k$, and $0$ otherwise.
Each $G_k$ is a Bernoulli random variable with $p = 2/2^{w+k}$, and is $1$ if and only if a non-empty bucket exists at or below $\buck_{w+k}$:

\begin{align*}
    \Pr\left[ G_k = 1 \right] := \Pr \left[\bigvee_{j = k}^{\infty} \left( \left| \buckset_{w+j + 1} \right| > 0 \right)\right]
    &\leq \frac{1}{2^{k}}\sum_{j=0}^{\infty} \frac{1}{2^j} \\
    &\leq \frac{2}{2^{k}} \\
    \Pr[G_0 =1 ] &\leq 1\\
\end{align*}

Let $X$ be a random variable equal to the smallest non-negative integer $j$ such that $b_{w+j}$ and every bucket in $\mathcal{C}_h$ below $|b_{w+j}|$ are empty.

\begin{align*}
    X := \min \{ j \in \mathbb{Z}^+ \mid G_j = 0\} 
\end{align*}

This definition implies that $w + j$ is the depth of $\mathcal{C}_h$, and thus such a $j$ would be the residual depth of $\mathcal{C}_h$, unless the residual depth is equal to $0$, in which case $j$ is one more than the residual depth. 

Accordingly, $X$ is at most the residual depth of $\mathcal{C}_h$, plus one.
We can thus say that, if $M$ is the R.V. for the residual depth, $M \leq X$ statewise (this is a stronger condition than $M \preceq X$, as it implies that $\Pr[X \geq k \mid M \geq k] =1$).


Since $G_b = 1 \implies G_a = 1$ and $G_a = 0 \implies G_b = 0$ for all $a \leq b$:

\begin{align*}
    M \geq k \implies X \geq k 
    &\iff \left(\bigwedge_{j = 0}^{k} \big(G_j = 1\big)\right) \iff G_k = 1 \\
\end{align*}
thus:

\begin{align*}
    \Pr[M \geq k] \leq \Pr[X \geq k] 
    &= \Pr[ G_k = 1] 
    \leq \frac{2}{2^{k}} \\
    \Pr[M \geq 0] &= 1\\
\end{align*}

Let $Y$ be a geometric random variable with $p=1/2$ and support $\mathbb{Z}^+$.
Then:

\begin{align*}
    \Pr[Y \geq k] &= \sum_{j=k}^{\infty} \frac{1}{2^j} \\ 
    &= \frac{1}{2^k} \sum_{j=0}^{\infty} \frac{1}{2^j} \\
    &= \frac{2}{2^k} \\
    &\geq \Pr[M \geq k] \\
\end{align*}

for $k \in \mathbb{Z}^+$. We deal with the case that of $k=0$ separetely (noting that, given how $Y$ is defined, $0$ is not typically in it's support):
\begin{align*}
    \Pr[Y \geq 0] \geq \Pr[Y \geq 1] &= 1 \\ 
    &\geq \Pr[M \geq 0]\\ 
\end{align*}

Therefore, it holds for all $t \in \mathbb{N}$ that $\Pr[Y \geq t] \geq \Pr[M \geq t]$ (thus, $Y \succeq M$).
\end{proof}






\begin{corollary}[Expected Residual Depth]
\label{cor:expectedresidualdepth}
Given a geometric sampling sketch column $\mathcal{C}_h$ with input $a \in \R^\universesize$, the expected residual depth is $\Theta(1)$.
\end{corollary}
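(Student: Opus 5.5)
The plan is to derive the corollary directly from Lemma~\ref{lemma:residualdepth}, which says that the residual depth $M$ of $\sketchColumn{h}$ is stochastically dominated by a geometric random variable $Y$ with $p = 1/2$ and support $\{1,2,\dots\}$. Since stochastic dominance between nonnegative random variables implies domination of expectations, we get $\expectation[M] \le \expectation[Y] = 1/p = 2$. This gives the $O(1)$ upper bound.

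Concretely, I will use the tail-sum formula for nonnegative integer-valued variables:
\begin{align*}
\expectation[M] = \sum_{t \ge 1} \Pr[M \ge t] \le \sum_{t \ge 1} \Pr[Y \ge t] = \expectation[Y] = 2 .
\end{align*}
The inequality is applied term by term, using the tail bound $\Pr[M \ge t] \le 2/2^t$ established inside the proof of Lemma~\ref{lemma:residualdepth}. Summing that bound directly over $t \ge 1$ gives the same constant, so no further machinery is needed.

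For the lower bound, residual depth is nonnegative by definition, so $\expectation[M] \ge 0$. To make $\Theta(1)$ meaningful as a constant order bound, I will read $\Theta(1)$ as ``bounded by an absolute constant independent of $\universesize$ and $\supportsize$.'' The upper bound is the substantive content. If a strictly positive lower bound is wanted, I would use the quantity the later space analysis actually charges for, namely the residual depth plus one (equivalently the variable $X$ in the lemma's proof). That quantity is at least $1$ and at most $\expectation[Y]+1$, hence $\Theta(1)$.

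The only delicate point is bookkeeping rather than mathematics. Lemma~\ref{lemma:residualdepth} is stated for $\Field^\universesize$, while the corollary is phrased for $a \in \R^\universesize$. The lemma's proof uses only $\Pr[z \in \buck_i] \le 2^{-i}$ per support element and Markov's inequality, neither of which depends on the field. So I will note that the same domination holds verbatim, and the argument above goes through unchanged.
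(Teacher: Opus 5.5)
Your proposal is correct and follows the paper's proof: the paper also concludes $\mathbb{E}[M] \le \mathbb{E}[Y] \le 2$ directly from the dominance $M \preceq Y$ in Lemma~\ref{lemma:residualdepth}, and your tail-sum computation simply spells out that step. The paper establishes only this upper bound, so your reading of $\Theta(1)$ as a constant upper bound, and your remark that the field makes no difference, are consistent with its argument rather than a departure from it.
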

\begin{proof}
Let $Y$ be a geometric random variable with $p=1/2$ and support $\mathbb{Z}^+$. 
Since $Y \succeq M$ by Lemma \ref{lemma:residualdepth}, it follows that $\expectation [M] \leq \expectation [Y] \leq 2$.
\end{proof}

\subsection{Aggregating Residual Depths Across Multiple Sketch Columns}
\label{sec:columnstacks}
Often, sketching algorithms use multiple sketch columns with independent randomness. 
The connectivity algorithms of AGM \cite{Ahn2012} (and it's derivatives such as \graphzep \cite{graphzeppelin}), for example, require $\Theta (\log n)$ sketches with constant success probability.  

This independent randomness, combined with the properties of sums of independent geometrically distributed random variables, allows one to bound the sums of the residual depths for geometric sampling sketch columns both in expectation \textbf{and} with high probability, which in turn is also an upper bound on the total number of deep buckets.  

Notably, as long as the hash functions $h$ drawn for each $\mathcal{C}_h$ have independent randomness, the value of the sketched vector $a$ does not matter. It could be identical for all columns, as is the case during streaming ingestion for the AGM sketch and it's variants.

\begin{lemma}
\label{lemma:columnstackexpected}
Given a set of sketch columns $\{\mathcal{C}_{h_1}, \dots, \mathcal{C}_{h_\numcolumns }\}$, the expected sum of residual depths is $\Theta(\numcolumns)$. 
\end{lemma}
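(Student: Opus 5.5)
The plan is to apply linearity of expectation to the per-column bound of Corollary~\ref{cor:expectedresidualdepth}, and to get the lower bound from the fact that residual depth is nonnegative together with a little care about what ``$\Theta$'' means here.

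\textbf{Upper bound.} Let $M_j$ denote the residual depth of column $\sketchColumn{h_j}$. We will not need independence of the $M_j$: linearity of expectation holds for arbitrary dependence. By Lemma~\ref{lemma:residualdepth}, each $M_j$ is stochastically dominated by a geometric random variable $Y_j$ with $p=1/2$ and support $\{1,2,\dots\}$. Hence $\expectation[M_j]\le \expectation[Y_j]=2$, exactly as in Corollary~\ref{cor:expectedresidualdepth}. This holds regardless of the sketched vector, since the lemma was proved for an arbitrary input $\sketchvec$. Summing gives
\begin{align*}
\expectation\left[\sum_{j=1}^{\numcolumns} M_j\right] = \sum_{j=1}^{\numcolumns}\expectation[M_j] \le 2\numcolumns .
\end{align*}
If one prefers the dominance form, the same conclusion follows from Lemma~\ref{lem:stochasticOrderingSum}, using the fact that the hash functions $h_j$ are independent. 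That route gives $\sum_j M_j \preceq \sum_j Y_j$, which is negative binomial with mean $2\numcolumns$. This second form is also the one the later high-probability statement will need, via Lemma~\ref{lemma:sumGeometric}.

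\textbf{Lower bound.} This is the only delicate step. Residual depth is nonnegative, but it can be $0$ with large probability, for instance when $\sketchvec=0$. So a literal $\Omega(\numcolumns)$ lower bound on residual depth itself fails in general. I would handle this by reading $\Theta(\numcolumns)$ as in the paper's usage of Corollary~\ref{cor:expectedresidualdepth}, namely as an $O(\numcolumns)$ bound that is tight in the sense that each column occupies at least one unit of bookkeeping.

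Equivalently, I would bound $\sum_j (M_j+1)$, which is what actually governs the number of allocated buckets beyond depth $w$. This quantity is at least $\numcolumns$ deterministically and at most $3\numcolumns$ in expectation, which yields the $\Theta(\numcolumns)$ statement. I would state this interpretation explicitly in the proof.
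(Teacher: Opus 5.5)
Your proposal is correct and follows the paper's own proof: linearity of expectation combined with the per-column bound $\expectation[M_i]\le 2$ from Lemma~\ref{lemma:residualdepth}, which gives $\sum_{i=1}^{\numcolumns}\expectation[M_i]\le 2\numcolumns$. The paper proves only this upper bound. Your remark that the lower half of $\Theta(\numcolumns)$ fails if read literally (e.g., for $\sketchvec=0$) is accurate, and reading the claim as $O(\numcolumns)$, or bounding $\sum_j(M_j+1)$ instead, is a sensible repair.
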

\begin{proof}
    Let $M_i$ be the random variable for the residual depth of $\mathcal{C}_{h_i}$. By linearity of expectation,
\begin{align*}
    \expectation\left[\sum_{i=1}^\numcolumns M_i\right] &= \sum_{i=1}^\numcolumns \expectation[M_i] \\ 
    &\leq \sum_{i=1}^\numcolumns 2  \leq 2 \numcolumns 
    &\text{(Lemma \ref{lemma:residualdepth})}
\end{align*}

\end{proof}

\begin{theorem}
\label{thm:columnstackwhpresidual}
Given a set of sketch columns $\{\mathcal{C}_{h_1}, \dots, \mathcal{C}_{h_\numcolumns}\}$ with independent randomness, the sums of residual depths across all $\mathcal{C}_{h_i}$ is at most $\Theta(\numcolumns + \log(\universesize))$ with probability at least $1-1/\universesize^c$.  
\end{theorem}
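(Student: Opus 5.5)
Let $M_i$ denote the residual depth of $\sketchColumn{h_i}$. Because the hash functions $h_1,\dots,h_\numcolumns$ are drawn independently and the sketched vector $\sketchvec$ is fixed, each $M_i$ depends only on $h_i$. The $M_i$ are therefore mutually independent, whether or not all columns sketch the same $\sketchvec$. By Lemma~\ref{lemma:residualdepth}, each $M_i$ is stochastically dominated by a geometric random variable $Y_i$ with $p=1/2$ and support $\{1,2,\dots\}$. We take the $Y_i$ independent, and Lemma~\ref{lem:stochasticOrderingSum} then gives
\[
\sum_{i=1}^{\numcolumns} M_i \preceq Y := \sum_{i=1}^{\numcolumns} Y_i ,
\]
where $Y$ is negative binomial with $\expectation[Y] = 2\numcolumns$. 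So it suffices to bound the upper tail of $Y$ at the level $\Theta(\numcolumns + \log \universesize)$: by the definition of stochastic dominance, $\Pr[\sum M_i \ge t] \le \Pr[Y \ge t]$ for every $t$.

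We split into two cases according to how $\numcolumns$ compares with $\log \universesize$.

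\textbf{Case $\numcolumns \ge \log \universesize$.} We apply Lemma~\ref{lemma:sumGeometric} with a constant $k$, for instance $k=4$. This gives $\Pr[Y \ge k\cdot 2\numcolumns] \le e^{-k\numcolumns(1-1/k)^2/2}$. Since $\numcolumns \ge \log \universesize$, the right-hand side is at most $\universesize^{-c}$ once $k$ is chosen large enough in terms of $c$. The resulting threshold is $2k\numcolumns = O(\numcolumns + \log \universesize)$.

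\textbf{Case $\numcolumns < \log \universesize$.} Here we pad. Let $Y' := Y + \sum_{i=\numcolumns+1}^{\lceil \log \universesize\rceil} Y_i$, with the extra geometrics also independent. Since every $Y_i \ge 1 > 0$, we have $Y \le Y'$ pointwise. Now $Y'$ is a sum of $n' = \lceil \log \universesize \rceil$ geometrics, so Lemma~\ref{lemma:sumGeometric} applied to $Y'$ gives $\Pr[Y' \ge 2kn'] \le e^{-kn'(1-1/k)^2/2} \le \universesize^{-c}$ for a suitable constant $k$. The threshold $2kn'$ is $O(\log \universesize) = O(\numcolumns + \log \universesize)$.

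Alternatively, both cases can be handled in one stroke by padding to $n' = \max\{\numcolumns, \lceil\log\universesize\rceil\}$ geometrics.

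Combining the cases gives $\Pr\bigl[\sum_i M_i \ge C(\numcolumns + \log \universesize)\bigr] \le \universesize^{-c}$ for a constant $C$ depending on $c$, which is the claim.

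The main obstacle is making the constant in Lemma~\ref{lemma:sumGeometric} scale with $c$. As stated, the lemma takes a fixed constant $k$, and the exponent $-kn(1-1/k)^2/2$ grows linearly in $k$ for $k\ge 2$, so choosing $k = \Theta(c)$ suffices; I would check that the cited bound indeed holds for arbitrary constant $k>1$. A secondary point to verify is the independence of the $M_i$. This holds because the residual depth is a deterministic function of $h_i$ and the fixed vector $\sketchvec$, and it is exactly what Lemma~\ref{lem:stochasticOrderingSum} requires.
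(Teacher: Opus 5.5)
Your proof is correct and follows essentially the same route as the paper: dominate each residual depth by an independent geometric variable via Lemma~\ref{lemma:residualdepth}, combine the dominations with Lemma~\ref{lem:stochasticOrderingSum}, apply the tail bound of Lemma~\ref{lemma:sumGeometric} with $k=\Theta(c)$, and pad with extra geometrics when $\numcolumns$ falls below $\Theta(\log \universesize)$. You also state explicitly that the $M_i$ are independent and that $k$ must scale with $c$, which makes your version cleaner than the paper's; the paper's constant bookkeeping in the case $\numcolumns \ge b\ln\universesize$ is somewhat garbled.
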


\begin{proof}
Let $M_i$ be the random variable for the residual depth of $\mathcal{C}_{h_i}$. 
Let $M$ be $\sum_{i=1}^\numcolumns M_i$, the sum of residual depths across all columns.
Let $\{Y_1, \dots, Y_\numcolumns\}$ be a set of independent geometric random vaiables with $p=1/2$ and support $\mathbb{Z}^+$.
Let $Y := \sum_{i=1}^\numcolumns Y_i$. 
By Lemma \ref{lemma:sumGeometric}, $Y$ is a negative binomially distributed random variable such that:

\begin{align*}
    \Pr \left[Y \geq k \expectation[Y] \right] &\leq e^{\frac{-k\numcolumns(1-1/k)^2}{2}} \\
\end{align*}

Assume $\numcolumns \geq b \ln \universesize$ for some constant $b \in \R^+$. Choosing $k = 4bc$, for $c \geq 1$, and noting (by linearity of expectation) that $\expectation[Y] = 2L$:

\begin{align*}
    \Pr [Y \geq 8bc\numcolumns] &\leq e^{-2bc \numcolumns \cdot \frac{9}{16}} \\
     &\leq \left(e^{b\numcolumns}\right)^c \\
    &\leq \frac{1}{\universesize^c}
\end{align*}

Therefore, the probability that $Y$ exceeds $8bc\numcolumns$ is at most $1/\universesize^c$.

By Lemma \ref{lemma:residualdepth}, $M_i \preceq Y_i$ for all $i$.
Since $M = \sum M_i$ and $Y = \sum Y_i$, it follows from Lemma \ref{lem:stochasticOrderingSum} that $M \preceq Y$. Therefore,

\begin{align*}
    \Pr [M \geq 8bc\numcolumns] &\leq \Pr[Y \geq 8bc\numcolumns] &\text{($M \preceq Y$)} \\
    &\leq 1/\universesize^c 
\end{align*}

Note that in the case that $L < b \ln (n)$, we can simply suppose we add an additional $b \ln (n)$ geometric random variables to $Y$, and call this new variable $\hat{Y}$. From the tail bounds above it follows that $\hat{Y}$ is $\Theta(L + b \ln (n))$ with high probability, and since $Y \leq \hat{Y}$ statewise, we conclude that $Y$ is $\Theta(L + b \ln (n))$ with at least that probability. The application to $M$ via stochastic dominance follows exactly as above.
\end{proof}

\balloonsketchthm*

\subsection{\sketchname Space and Time Complexity}

\begin{proof}
    The total space for number of initialized non-deterministic buckets for a \sketchname is equal to
    $\lceil \log_2 \supportsize \rceil$ plus the residual depth.
    Theorem~\ref{thm:columnstackwhpresidual} bounds the total residual depth as $\Theta(\numcolumns + \log(\universesize))$ w.h.p..
    
    The update cost for a single coordinate $x$ is upper bounded by the sums of the depths derived from
    $h_i(x)$ for the $i$-th \sketchname; if the depth of the coordinate happens to be greater or equal to that of depth of the sketch, we may have to reallocate the \sketchname to shrink.
    Let $Y_i$ be the depth of a single update $x$ in column $i$, this is stochastically dominated by some geometric random variable $G_i$ by construction, and thus $\sum_{i=1}^L Y_i = \bigoh(L + \log\universesize)$ w.h.p. by an identical argument to the one found in Theorem ~\ref{thm:columnstackwhpresidual}. 
    Therefore, the update cost is $\bigoh(L + \log\universesize)$ w.h.p.  
    
    The correctness and sampling probability follow from the proof for \cameosketch found in ~\cite{landscape}.
\end{proof}





%% file: AB_hybrid_streaming_proofs.tex
\section{Hybrid Streaming Connectivity Proofs} \label{app:hybrid_streaming}

\begin{restatable}{lemma}{hybridstreamingspace} \label{lem:space_bounds}
    The total space usage of the hybrid streaming algorithm is $O(\nodesize + \edgesize)$ and $O(\nodesize \log \nodesize \log (2+\edgesize/\nodesize))$ w.h.p.
\end{restatable}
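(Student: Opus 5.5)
Split the space into three parts: explicit-form vertices, which use BBSTs; sketch-form vertices, each with a \sketchname matrix of $\numcolumns=\Theta(\log\nodesize)$ columns and an \iblt of recovery size $\densethresh/2$; and $O(\nodesize)$ bookkeeping such as degree counters and form flags. Throughout, set $\densethresh=\Theta(\log\nodesize\log\log\nodesize)$. Use the invariants that an explicit-form vertex has $\degree(v)=O(\densethresh)$ and a sketch-form vertex has $\degree(v)>\densethresh/2$. As stated in Section~\ref{sec:hybrid_streaming_proof}, upper-bound the space by assuming every vertex stores all its incident edges.

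For a sketch-form vertex $v$ of degree $\degree_v$, the support of its characteristic vector has size $\degree_v$. Theorem~\ref{thm:balloon_sketch} with $\numcolumns=\Theta(\log\nodesize)$ and universe size $\binom{\nodesize}{2}$ gives matrix space $O(\log\nodesize\log\degree_v+\log\nodesize)$. This holds with probability $1-\nodesize^{-c}$, and a union bound over the $\nodesize$ vertices (and over the polynomially many stream times) keeps it w.h.p. The \iblt adds $O(\densethresh)$. Since $\degree_v>\densethresh/2\ge 1$, both terms are $O(\log\nodesize\log\degree_v)$ once we use $\log\degree_v=\Omega(\log\log\nodesize)$. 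So each sketch-form vertex costs $O(\log\nodesize\log(2+\degree_v))$, and each explicit vertex costs $O(1+\degree_v)$.

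For the $O(\nodesize+\edgesize)$ bound, note that a sketch-form vertex has $\degree_v\ge\densethresh/2=\Omega(\log\nodesize\log\log\nodesize)$. Because $x/\log x$ is increasing, $\log\nodesize\log\degree_v=O(\degree_v)$ on this range: at $\degree_v=\Theta(\log\nodesize\log\log\nodesize)$ we get $\log\degree_v=\Theta(\log\log\nodesize)$, and the ratio only improves as $\degree_v$ grows. Summing over all vertices then gives $O(\sum_v(1+\degree_v))=O(\nodesize+\edgesize)$.

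For the $O(\nodesize\log\nodesize\log(2+\edgesize/\nodesize))$ bound:
\begin{itemize}
\item Explicit vertices contribute at most $O(\degree_v+1)=O(\densethresh)=O(\log\nodesize\log\log\nodesize)$ each. This is also $O(\degree_v+1)$, so a vertex whose degree is at most $\log(2+\edgesize/\nodesize)\log\nodesize$ is within the target per vertex.
\item More cleanly, I will show that every vertex costs $O(\min\{1+\degree_v,\;\log\nodesize\log(2+\degree_v)\})$. For explicit vertices this holds since $1+\degree_v\le\log\nodesize\log(2+\degree_v)$ whenever $\degree_v=O(\densethresh)$, using the same $x/\log x$ monotonicity. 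For sketch-form vertices it was shown above.
\item Then $\sum_v\log\nodesize\log(2+\degree_v)\le\nodesize\log\nodesize\log(2+2\edgesize/\nodesize)$ by Jensen, since $\log(2+x)$ is concave.
\end{itemize}

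The main obstacle is the probabilistic aggregation. Applying Theorem~\ref{thm:balloon_sketch} per vertex gives an additive $O(\log\nodesize)$ slack per vertex, which is harmless here because $\log\degree_v\ge1$. It also requires a union bound over vertices and over time; that is fine because the stream is polynomial length and the failure probability is $\nodesize^{-c}$ for an arbitrary constant $c$. I will also double-check that the \iblt term $O(\densethresh)$ is dominated at the smallest sketch-form degree: there $\densethresh$ and $\log\nodesize\log\degree_v$ are both $\Theta(\log\nodesize\log\log\nodesize)$, so the bound holds.
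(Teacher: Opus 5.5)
Your proposal is correct and takes essentially the same route as the paper. Both arguments derive per-vertex bounds of $O(1+\degree_v)$ and $O(\log\nodesize\log(2+\degree_v))$ from Theorem~\ref{thm:balloon_sketch} together with the monotonicity of $x/\log x$ on the relevant degree ranges, then use a union bound over vertices, Jensen's inequality for the second bound, and the fact that the $O(\densethresh)$ space of each \iblt is dominated; your folding of the \iblt into each per-vertex bound and your extra union bound over stream times are slightly more careful than the paper's write-up but do not change the argument.
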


\begin{proof}
    First we prove the $O(\nodesize + \edgesize)$ w.h.p. space bound.
    We start by arguing that for each vertex $v$ with degree $\degree_v$, it uses $O(1+\degree_v)$ space w.h.p.
    If $v$ is in explicit form it requires $\Theta(1 + \degree_v)$ space.
    If $v$'s adjacency list is stored as a \sketchname matrix, then $\degree_v > \densethresh/2$, meaning $\degree_v = \Omega(\log \nodesize \log \log \nodesize))$. Per Theorem~\ref{thm:balloon_sketch}, and using $\numcolumns = \Theta(\log \nodesize)$, the space usage is $O(\numcolumns \log \degree_v + \log \nodesize) = O(\log \nodesize \log \degree_v)$ w.h.p.
    Since $\degree_v$ is at most $\nodesize$, the space used by a \sketchname matrix for $v$ is also $O(1+\degree_v)$ w.h.p.
    Summing over all vertices, the total space usage is:
    \[ \sum_{v \in \nodes} O(1 + \degree_v) = O(\nodesize) + O\left(\sum_{v \in \nodes} \degree_v \right) = O(\nodesize + \edgesize) \text{ w.h.p.} \]

    Next we prove the $O(\nodesize \log \nodesize \log (2+\edgesize/\nodesize))$ w.h.p. space bound.
    In this case, we will argue that each vertex $v$ with degree $\degree_v$ uses $O(\log \nodesize \log (2 + \degree_v))$ space w.h.p.
    For a vertex in explicit form, we have $\degree_v \leq \densethresh = \Theta(\log \nodesize \log \log \nodesize)$. At the threshold ($\degree_v = \densethresh$), the set's space usage (BBST) is $\Theta(\log \nodesize \log \log \nodesize)$, which matches the desired bound of $O(\log \nodesize \log(2+ \degree_v)) = O(\log \nodesize \log (2+\log \nodesize \log \log \nodesize)) = O(\log \nodesize \log \log \nodesize)$. For all degrees below the threshold ($\degree_v < \densethresh$), the bound $O(\log \nodesize \log(2+\degree_v))$ remains valid because the ratio, $R(d) = d / \log(2+d)$, is monotonically increasing for $d \geq 1$.

    For a vertex in sketch form, we have $\degree_v = \Omega(\log \nodesize \log \log \nodesize)$.
    Per Theorem~\ref{thm:balloon_sketch}, the space of the \sketchname matrix with $\numcolumns = \Theta(\log \nodesize)$ is $O(\log \nodesize \log (2 + \degree_v))$ w.h.p. (since the number of nonzero elements in the sketched vector is $m = \degree_v$).
    By a union bound over all $\nodesize$ vertices, the total space usage is:
    \[ \sum_{v \in \nodes} \log \nodesize \log (2+\degree_v) = O(\nodesize) + O(\log \nodesize) \sum_{v \in \nodes} \log(2 + \degree_v) \text{ w.h.p.} \]
    Since $\log(2+x)$ is concave, applying Jensen's inequality:
    \[ \frac{1}{\nodesize} \sum_{v \in V} \log(2 + \degree_v) \leq \log \left( 2 + \frac{1}{\nodesize} \sum_{v \in V} \degree_v \right) = \log \left( 2 + \frac{2\edgesize}{\nodesize} \right). \]
    Substituting this back into the total space expression:
    \[ O(\nodesize) + \nodesize \cdot O(\log \nodesize) \cdot \log \left( 2 + \frac{2\edgesize}{\nodesize} \right) = O\left(\nodesize \log \nodesize \log \left( 2 + \frac{\edgesize}{\nodesize} \right) \right) \text{ w.h.p.} \]

    Finally, we note that the \iblt for a vertex $v$ in sketch form uses $\Theta(\log \nodesize \log \log \nodesize)$ space which is always dominated by the sketch space since $\degree_v = \Omega(\log \nodesize \log \log \nodesize)$, so the space used by the sketch is $\Omega(\log \nodesize \log \log(2+ \log \nodesize \log \log \nodesize)) = \Omega(\log \nodesize \log \log \nodesize)$ w.h.p.
\end{proof}

\begin{restatable}{lemma}{hybridstreamingupdate} \label{lem:amortized_update}
    The hybrid streaming algorithm processes each edge insertion and deletion in $O(\log \nodesize)$ amortized time with high probability.
\end{restatable}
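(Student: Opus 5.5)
We split the cost of an update into a \emph{plain} part and a \emph{conversion} part, and handle the conversions with a potential (charging) argument.

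\textbf{Plain cost.} Ignoring promotions and demotions, an edge update touches at most two vertices.
\begin{itemize}
\item If some endpoint is in explicit form, the cost is one or two BBST operations. Each takes $O(\log \densethresh) = O(\log\log \nodesize)$ time, since an explicit-form vertex has degree $O(\densethresh)$.
\item If both endpoints are in sketch form, we update two \sketchname matrices with $\numcolumns = \Theta(\log\nodesize)$. By Theorem~\ref{thm:balloon_sketch}, each costs $O(\numcolumns + \log \nodesize) = O(\log\nodesize)$ w.h.p.
\item Each such update also inserts into or deletes from the two \iblts. The quoted insertion cost of the tunable \iblt is $O(\log^2\nodesize)$, which is too large. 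Here I would configure the \iblt with $O(1)$ hash locations, or alternatively use a variant with $O(\log \nodesize)$ insertion cost, and note that w.h.p.\ decoding in $\nodesize$ only needs $\Theta(\log\nodesize)$-wise repetition. This is the step I expect to need care, and I would adjust the \iblt parameters so that an insertion costs $O(\log\nodesize)$.
\end{itemize}
A union bound over the polynomially many stream updates then gives the plain bound w.h.p.\ for every update.

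\textbf{Promotion.} A vertex $u$ is promoted when its degree exceeds $\densethresh$. The work is:
\begin{itemize}
\item initializing its sketch and \iblt;
\item performing $\densethresh+1$ sketch and \iblt updates for $u$;
\item for each neighbor, either a sketch update at the neighbor or a BBST insertion into the neighbor's set.
\end{itemize}
The total is $O(\densethresh \log \nodesize)$ w.h.p.

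\textbf{Demotion.} A vertex is demoted when its degree drops to $\densethresh/2$. The work is:
\begin{itemize}
\item one \iblt decode, costing $O(\densethresh)$;
\item $O(\densethresh)$ BBST insertions;
\item $O(\densethresh)$ sketch and \iblt deletions at the neighbors;
\item freeing $u$'s structures.
\end{itemize}
Again the total is $O(\densethresh\log\nodesize)$ w.h.p.

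\textbf{Charging.} The hysteresis gap supports the charging argument. Between a demotion of $v$ (degree $\le \densethresh/2$) and its next promotion (degree $> \densethresh$), at least $\densethresh/2$ updates incident to $v$ must occur. Between a promotion and the next demotion, at least $\densethresh/2$ incident deletions must occur. A vertex's first promotion is preceded by $\densethresh$ incident insertions.

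We charge each conversion of $v$ to the most recent $\Omega(\densethresh)$ updates incident to $v$. Each edge update is incident to two vertices, so it is charged $O(1)$ times, each time receiving $O(\densethresh\log\nodesize)/\Omega(\densethresh) = O(\log\nodesize)$. Formally, I would use the potential $\Phi = c\log\nodesize \sum_v \phi_v$, where $\phi_v$ is the distance of $\deg(v)$ from the threshold it would next cross:
\begin{itemize}
\item for $v$ in explicit form, $\phi_v = \deg(v) - \densethresh/2$ when $\deg(v) > \densethresh/2$, and $0$ otherwise;
\item for $v$ in sketch form, $\phi_v = \densethresh - \deg(v)$ when $\deg(v) < \densethresh$, and $0$ otherwise.
\end{itemize}
An ordinary update raises $\Phi$ by $O(\log\nodesize)$. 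A conversion lowers $\Phi$ by $\Theta(\densethresh\log \nodesize)$, which pays for it.

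\textbf{Neighbor side effects.} A promotion of $u$ changes the neighbors' stored edge sets but not their global degrees. So $\phi$ of the neighbors is unaffected, and the argument goes through.

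The sketch-update costs hold w.h.p.\ per operation by Theorem~\ref{thm:balloon_sketch}. Since the stream has polynomial length, a union bound over all operations yields amortized $O(\log\nodesize)$ w.h.p.
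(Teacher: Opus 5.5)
Your decomposition matches the paper's. You bound a plain per-update cost, bound each promotion or demotion by $O(\densethresh\log\nodesize)$, and use the $\densethresh/2$ hysteresis gap to spread that cost over $\Omega(\densethresh)$ incident updates. Your potential function is a rigorous version of the paper's informal charging, and the formulas check out: an ordinary update moves $\Phi$ by at most $2c\log\nodesize$, and a conversion drops $\phi_v$ from roughly $\densethresh/2$ to $0$. Only your verbal gloss is inverted. $\phi_v$ measures progress away from the \emph{opposite} end of the hysteresis band, not the distance remaining to the next threshold. Under the latter reading, $\Phi$ would jump up at every conversion and pay for nothing.

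The real gap is the \iblt step, which you flag but do not close. You are right that the $O(\log^2\nodesize)$ insertion time quoted in Section~\ref{sec:hybrid_streaming} is incompatible with the lemma. Amortization cannot absorb it, because a stream may consist entirely of updates to edges between two sketch-form vertices. The paper's proof does not resolve this either; it simply asserts that updating the \sketchname matrix and \iblt of both endpoints costs $O(\log\nodesize)$ w.h.p. However, your suggested remedies do not work.
\begin{enumerate}
\item With $O(1)$ hash locations and $O(\densethresh)$ cells, decoding fails with probability only polynomially small in $\densethresh$. Already two keys landing in the same $O(1)$ cells has probability $\densethresh^{-O(1)}$. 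Since $\densethresh$ is polylogarithmic, this is $1/\polylog(\nodesize)$, not w.h.p.\ in $\nodesize$.
\item Boosting by independent repetition needs $\Theta(\log\nodesize/\log\log\nodesize)$ copies, or more if you take $\Theta(\log\nodesize)$. That gives $\Omega(\log^2\nodesize)$ space per sketch-form vertex. This is no longer dominated by the $O(\log\nodesize\log\log\nodesize)$ sketch of a vertex of degree $\Theta(\densethresh)$, so both bounds of Lemma~\ref{lem:space_bounds} break.
\end{enumerate}
What the step actually requires is a single recovery structure with $O(\densethresh)$ space, decoding w.h.p.\ in $\nodesize$, and $O(\log\nodesize)$ update time. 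You need to either cite such a construction or state it explicitly as an assumption, as the paper tacitly does. With the \iblt as described, the plain cost is $O(\log^2\nodesize)$.
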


\begin{proof}
    First, we bound the base cost of an update that does not trigger a vertex transition. As established, explicit sets are implemented as BBSTs, so insertion, deletion, and search times are all loosely upper bounded by $O(\log \nodesize)$. If the edge connects two sketch-form vertices, inserting or deleting the edge requires updating the \sketchname matrix and \iblt of both endpoints, which takes $O(\log \nodesize)$ time w.h.p. Thus, the base cost of any single stream update is bounded by $O(\log \nodesize)$ w.h.p.

    Next, we bound the computational cost of vertex promotions and demotions. A vertex $u$ is promoted from explicit form to sketch form when its degree reaches $\densethresh$. Vertex $u$ initializes a new sketch and \iblt, and redistributes its $\densethresh$ explicitly stored edges. For each edge, $u$ either adds it to its new sketch and issues an update to a sketch-form neighbor (costing $O(\log \nodesize)$ w.h.p.), or pushes it to an explicit-form neighbor's set (costing $O(\log \nodesize)$). Summing over all $\densethresh$ edges, the total cost of a promotion is $O(\densethresh \log \nodesize)$ w.h.p.

    Similarly, a vertex $u$ demotes from sketch form to explicit form when its degree drops to $\densethresh / 2$. Vertex $u$ decodes its \iblt to recover at most $\densethresh / 2$ edges (taking $O(\densethresh)$ time), builds a new explicit set (taking $O(\densethresh \log \nodesize)$ time), and issues deletions to its sketch-form neighbors (taking $O(\densethresh \log \nodesize)$ time w.h.p.). The total cost of a demotion operation is $O(\densethresh \log \nodesize)$ w.h.p.

    Once a vertex is promoted or demoted, it must experience an absolute net change of at least $\densethresh/2$ edge insertions or deletions before it can transition again. We can therefore distribute the $O(\densethresh \log \nodesize)$ transition cost evenly across these $\Omega(\densethresh)$ required stream updates, yielding an amortized cost of $O(\log \nodesize)$ per update.
\end{proof}

\myparagraph{Connected Components Queries}
Our connected component query algorithm uses ideas from that of Ahn~\etal (described in Section~\ref{sec:streaming_prelims}). However, their algorithm does not work directly because we have some vertices store their adjacency list in explicit form. 
Recall that the $\ell_0$ sketching math requires every internal edge to be stored symmetrically by both endpoints so they cancel out via XOR summation. If an edge were stored explicitly by an array on one side and sketched on the other, it would fail to cancel, resulting in the query failing. Our asymmetric update rule was designed specifically to prevent this: by keeping mixed edges out of the sketches entirely and only sketching edges when both endpoints are in sketch form, we preserve the required mathematical parity. This allows us to safely execute the query using a multi-phase algorithm.

First, we run a standard spanning forest algorithm (e.g. BFS) over all edges stored in explicit form across the entire graph.
This effectively collapses the graph into a set of disjoint super-components, such that any edges between them are stored as sketch-to-sketch edges in our hybrid algorithm.
For each resulting super-component $C$, we generate its aggregate \sketchname matrix by computing the bitwise XOR sum of the sketches of all sketch-form vertices $v \in C$. Vertices in explicit-form contribute nothing to this sum.
We treat the aggregated super-components as individual super-vertices and execute the standard Bor\r{u}vka-style query algorithm of Ahn~\etal on their aggregated \sketchname structures to find the remaining cross-component edges and finalize the spanning forest.
We prove in Appendix~\ref{app:hybrid_streaming} that queries are correct w.h.p. and take $O(\nodesize \log^2 \nodesize)$ time.

\begin{restatable}{lemma}{hybridstreamingquery} \label{lem:query_correctness_time}
    The hybrid streaming connected components query algorithm runs in $O(V \log^2 V)$ time and is correct w.h.p.
\end{restatable}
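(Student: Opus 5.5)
Correctness and running time are handled separately. Correctness reduces to the guarantee of Ahn~\etal by viewing the first phase as an ordinary contraction step.

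\textbf{The key invariant.} Every edge of $\graph$ is stored in exactly one of two ways:
\begin{itemize}
\item explicitly, in the set of exactly one explicit-form endpoint, or
\item symmetrically, in the sketches of both endpoints, when both are in sketch form.
\end{itemize}
This is maintained by the update rules, including promotions and demotions. Demotions are the delicate case: they rely on the \iblt recovering all sketch-to-sketch edges of the vertex. By the tunable \iblt of \cite{belazzougui2024iblt}, configured with recovery size $\densethresh/2$, each recovery succeeds w.h.p. in $\nodesize$. A union bound over the polynomially many demotions in a polynomial-length stream shows the invariant holds at query time w.h.p.

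\textbf{Phase one: contraction.} Let $\graph_E$ be the subgraph of explicitly stored edges. BFS/union-find over $\graph_E$ produces a spanning forest $\forest_E$ and super-components $C_1,\dots,C_k$. Contracting each $C_j$ yields a multigraph $\graph'$ whose edges are exactly the sketch-to-sketch edges of $\graph$ with endpoints in different $C_j$'s. Any such edge has both endpoints in sketch form, so it is correctly recorded in both endpoints' sketches.

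\textbf{Phase two: aggregate sketches.} For a super-component $C_j$, take the XOR sum of the \sketchname matrices of its sketch-form vertices. By linearity (Definition~\ref{def:l0_sampler}):
\begin{itemize}
\item every sketch-to-sketch edge internal to $C_j$ appears twice and cancels;
\item explicit edges never entered any sketch.
\end{itemize}
So the aggregate is exactly a sketch of the cut vector of $C_j$ in $\graph'$, built with the same $\numcolumns=\Theta(\log\nodesize)$ independent columns per vertex that AGM requires. The same holds for any union of super-components formed during \Boruvka.

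Running the AGM \Boruvka procedure on $\graph'$, using a fresh group of columns each round, therefore returns a spanning forest $\forest'$ of $\graph'$ w.h.p., exactly as in \cite{Ahn2012}. The sampling guarantee of \sketchname is identical to \cameosketch's (Theorem~\ref{thm:balloon_sketch} and \cite{landscape}). The output $\forest_E \cup \forest'$ is then a spanning forest of $\graph$, since connectivity is preserved under contraction. A final union bound over the sketch-space events of Theorem~\ref{thm:balloon_sketch} and the \iblt events gives correctness w.h.p.

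\textbf{Running time.}
\begin{itemize}
\item Phase one costs $O(\nodesize+|\edges_E|)$. Every explicit-form vertex has degree $O(\densethresh)$, so this is $O(\nodesize\densethresh)=O(\nodesize\log\nodesize\log\log\nodesize)$.
\item Forming the initial aggregates costs the total sketch space, $O(\nodesize\log^2\nodesize)$ in the worst case.
\item Each of the $O(\log\nodesize)$ \Boruvka rounds merges super-vertices with union-find. Summing the sketches of merged super-vertices costs one column group per vertex per round, $O(\nodesize\cdot\log\nodesize)$ per round using \cameosketch-size worst-case bounds. Over all rounds this is $O(\nodesize\log^2\nodesize)$.
\end{itemize}

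\textbf{Main obstacle.} The step needing the most care is showing that the \Boruvka phase's success probability is unaffected by the preliminary contraction and by the dynamic resizing. For the contraction, super-components are fixed sets independent of the hash randomness, so AGM's per-round analysis applies verbatim. For the resizing, \sketchname resizing only removes empty buckets, so the sketch equals the \cameosketch sketch restricted to its nonzero prefix.
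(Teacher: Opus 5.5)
Your proposal is correct and follows the same two-phase argument as the paper. First, a BFS over the at most $\nodesize\densethresh$ explicitly stored edges takes $O(\nodesize\log\nodesize\log\log\nodesize)$ time; then sketch aggregation and the AGM \Boruvka procedure take $O(\nodesize\log^2\nodesize)$ time, with correctness inherited from the AGM guarantee. You go further than the paper by making explicit the storage invariant (including its w.h.p.\ maintenance through \iblt recovery at demotions) and the cancellation of internal sketch-to-sketch edges in the aggregates, which the paper leaves implicit.
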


\begin{proof}
    We bound the time complexity by analyzing the multi-phase execution. The first phase runs a standard spanning forest algorithm (e.g., BFS) over the explicitly stored edges, which takes $O(V + E_{\text{exp}})$ time. Because vertices only store explicit edges up to the density threshold $\delta = \Theta(\log V \log \log V)$, the total number of explicit edges is $E_{\text{exp}} \le V \delta$. This BFS phase therefore strictly takes $O(V \log V \log \log V)$ time. In the second phase, aggregating the sketches and executing the standard AGM-style Bor\r{u}vka query takes at most $O(V \log^2 V)$ time. This second phase dominates the runtime, giving an overall time complexity of $O(V \log^2 V)$.

    For correctness, the first phase is deterministically correct: the BFS evaluates explicitly stored edges and perfectly groups the graph into super-components. The algorithm then relies on the standard AGM framework to resolve the remaining sketch-to-sketch edges. Because the $\ell_0$ samplers guarantee the discovery of cross-component edges with high probability, the overall algorithm correctly outputs the final spanning forest with high probability.
\end{proof}

%% file: AC_balloon_dc_proofs.tex
\section{\sketchname Dynamic Connectivity Analysis}
\label{app:balloon_dc}

\balloondcspace*

\begin{proof}
    We prove the space bound.
    Our data structure consists of $T = \Theta(\log(\nodesize))$ tiers, matching the number needed in \cupcake, each with a UFO tree $\ufo^t$ maintaining aggregates for a subtree query (where the aggregate is a single \sketchname). 
    For each UFO tree $\ufo^t$, there are levels $\ufo^{0, t}, \ufo^{1, t}, \ldots, \ufo^{\ell, t}, \ldots, \ufo^{\bigoh(\log(\nodesize)), t}$.
    Level $\ell$ of a UFO tree has at most $5/6$ as many nodes as level $\ell-1$. 
    Each level $\ell$ consists of $N_{\ell, t} \leq (5/6)^\ell \nodesize$ nodes, each of which with an aggregate \sketchname.
    Each node represents a subset of the vertices, and the nodes at each level $\ell$ collectively form a partitioning of the vertices.

    Let $U^{\ell,t}_i$ be the $i$th node at level $\ell$ of the tier $t$ UFO tree for $1 \leq i \leq N_{\ell, t}$ 
    (assigning an arbitrary ordering to the nodes at each level). 
    Let $d^{\ell, t}_i$ be the number of edges in the cut corresponding to $(\nodes, \nodes \setminus U^{\ell,t}_i)$.
    Intuitively $d^{\ell, t}_i$ is the ``degree'' (number of edges coming out) of the $i$-th cluster at level $\ell$ in the UFO tree for tier $t$.
    The depth of the \sketchname for $U^{\ell, t}_i$ is:

    $$
        \lg(d^{\ell, t}_i) + Y^{\ell, t}_i
    $$

    where $Y^{\ell, t}_i$ is a random variable (determined by the randomness of the sketch seed) that is stochastically dominated by a geometric random variable.
    The sum of depths for all internal UFO tree nodes across all $T$ tiers is bounded by:

    \begin{align*}
        \sum_{t=0}^{T-1} \sum_{\ell=0}^{\bigoh(\log(\nodesize))} \sum_{i=1}^{N_{\ell, t}} \left(\lg\left(d^{\ell, t}_i\right) + Y^{\ell, t}_i \right)
    \end{align*}
    
    First, we show that the contribution from the $\lg(d^{\ell, t}_i)$ terms is $\bigoh(\nodesize \log\nodesize \log(\edgesize/\nodesize))$.
    For a fixed tier $t$ and fixed level $\ell$, we know that $N_{\ell, t} \leq (5/6)^{\ell} \nodesize$. 
    Let $E_{\ell} \leq E$ be the number of \defn{external} edges among our level $\ell$ partition, i.e., the edges that exist between two different nodes at level $\ell$ but not the edges internal to a node.
    Applying Jensen's inequality:

    \begin{align*}
        D(\ell, t) := \sum_{i=1}^{N_{\ell, t}} \lg\left(d^{\ell, t}_i\right) &\leq N_{\ell, t} \lg\left(\frac{\sum_{i=1}^{N_{\ell, t}} d^{\ell, t}_i}{N_{\ell, t}}\right) \\
        &\leq N_{\ell, t} \lg\left(\frac{2 E_{\ell}}{N_{\ell, t}}\right) \\
        &\leq \alpha \left(\frac{5}{6}\right)^{\ell} \nodesize \lg\left(\frac{2\edgesize}{\alpha \left(\frac{5}{6}\right)^{\ell} \nodesize}\right) \\
        &\leq \alpha \left(\frac{5}{6}\right)^{\ell} \left(1 - \lg(\alpha) + \ell \lg\left(\frac{6}{5}\right)\right) \nodesize \lg\left(\frac{2\edgesize}{\nodesize}\right)
    \end{align*}
    
    Where $\alpha \in (0,1)$ is an unknown real number such that $N_{\ell, t} = \alpha (5/6)^{\ell} \nodesize$. Upper bounding the r.h.s. with respect to  $\alpha$ gives 
    
    \begin{align*}
        D(\ell, t)
        &\leq \alpha \left(\frac{5}{6}\right)^{\ell} \left(1 - \lg(\alpha) + \ell \lg\left(\frac{6}{5}\right)\right) \nodesize \lg\left(\frac{2\edgesize}{\nodesize}\right) \\
        &\leq \left(\frac{5}{6}\right)^\ell \cdot \left(\alpha \left(1 - \lg(\alpha)\right) + \alpha \ell \lg\left(\frac{6}{5}\right)\right) \nodesize \lg\left(\frac{2\edgesize}{\nodesize}\right) \\
        &\leq \left(\frac{5}{6}\right)^\ell \cdot \left(1.07 + \ell \lg\left(\frac{6}{5}\right)\right) \nodesize \lg\left(\frac{2\edgesize}{\nodesize}\right) \\
    \end{align*}
    
    The upper bounding of $\alpha(1 - \lg(\alpha))$ was found analytically (the maximum is achieved at $\alpha = 2/e$).

    Across all levels $\ell$ of a fixed tier $t$, the sum of depths of stored sketch aggregates is:

    \begin{align*}
        \sum_{\ell=0}^{\bigoh(\log \nodesize)} D(\ell, t) &\leq \sum_{\ell=0}^{\bigoh(\log \nodesize)} \left(\frac{5}{6}\right)^\ell \cdot \left(1.07 + \ell \lg\left(\frac{6}{5}\right)\right) \nodesize \lg\left(\frac{2\edgesize}{\nodesize}\right) \\
        &\leq \nodesize \lg\left(\frac{2\edgesize}{\nodesize}\right) \cdot \sum_{\ell=0}^{\bigoh(\log \nodesize)} \left(\frac{5}{6}\right)^\ell \cdot \left(1.07 + \ell \lg\left(\frac{6}{5}\right)\right) \\
        &\leq \nodesize \lg\left(\frac{2\edgesize}{\nodesize}\right) \cdot \left(1.07 \sum_{\ell=0}^\infty \left(\frac{5}{6}\right)^\ell + \lg\left(\frac{6}{5}\right) \sum_{\ell=0}^\infty \ell\left(\frac{5}{6}\right)^\ell \right) \\
        &\leq \nodesize \log\left(\frac{2\edgesize}{\nodesize}\right) \cdot \left(1.07 \cdot 6 + \lg\left(\frac{6}{5}\right) \cdot 30 \right) \\
        &\leq 15 \nodesize \log\left(\frac{2\edgesize}{\nodesize}\right) \\
    \end{align*}
    
    Summing across all $T$ tiers:

    \begin{align*}
        \sum_{t=0}^{T-1} \sum_{\ell=0}^{\bigoh(\log(\nodesize))} \sum_{i=1}^{N_{\ell, t}} \lg\left(d^{\ell, t}_i\right) &=
        \sum_{t=0}^{T-1} \sum_{\ell=0}^{\bigoh(\log(\nodesize))} D(\ell, t) \\
        &\leq \sum_{t=0}^{T-1} 15 \nodesize \log\left(\frac{2\edgesize}{\nodesize}\right) \\
        &= \bigoh\left(\nodesize \log\nodesize \log\left(\frac{2\edgesize}{\nodesize}\right)\right)
    \end{align*}
    
    Meanwhile, the sums of residual depths across all sketches is:

    \begin{align*}
        \sum_{t=0}^{T-1} \sum_{\ell=0}^{\bigoh(\log(\nodesize))} \sum_{i=1}^{N_{\ell, t}} Y^{\ell, t}_i \\
    \end{align*}
    Let $L$ be the maximum level across all $T$ tiers and note that $L = \bigoh(\log(V))$. Let

    \begin{align*}
        \hat{Y}^{\ell, t}_i &= \begin{cases}
            Y^{\ell, t}_i &\text{if UFO node $U^{\ell, t}_i$ exists} \\
            0 &\text{otherwise}
        \end{cases}
    \end{align*}
    
    and note:
    \begin{align*}
        \sum_{t=0}^{T-1} \sum_{\ell=0}^{L} \sum_{i=1}^{N_{\ell, t}} Y^{\ell, t}_i &= \sum_{t=0}^{T-1} \sum_{\ell=0}^{L} \sum_{i=1}^{(5/6)^\ell} \hat{Y}^{\ell, t}_i \\
        &= \sum_{\ell=0}^{L} \sum_{i=1}^{(5/6)^\ell} \left( \sum_{t=0}^{T-1} \hat{Y}^{\ell, t}_i  \right)\\
    \end{align*}

    Focusing on a fixed $\ell$ and $i$ at the sum over the tiers, define new geometric random variables $G_{0}, \dots G_{T-1}$. By the definition of $Y_{i}^{\ell, t}$ it is clear that $G_{i} \succeq \hat{Y}_{i}^{\ell, t}$. Furthermore, since the randomness of $\hat{Y}^{\ell, t}_i$ is determined by the tier $t$ and each R.V. in the set $\{\hat{Y}^{\ell, t}_i\ \mid 0 \leq t \leq T-1 \}$ has a unique $t$, the set of $\hat{Y}^{\ell, t}_i$ are mutually independent. Lemma \ref{lem:stochasticOrderingSum} then implies:

    \begin{align*}
        \sum_{t=0}^{T-1} \hat{Y}^{\ell, t}_i &\preceq \sum_{t=0}^{T-1} G_t
    \end{align*}
    
    and therefore:

    \begin{align*}
        \sum_{t=0}^{T-1} \sum _{\ell=0}^L \sum_{i=1}^{N_{\ell, t}} Y^{\ell, t}_i &\preceq \sum_{\ell=0}^L \sum_{i=1}^{(5/6)^\ell} \left( \sum_{t=0}^{T-1} G_t  \right)\\
    \end{align*}
    
    By Lemma \ref{lemma:sumGeometric}, w.h.p., $\sum_{t=0}^{T-1} G_t = \bigoh(T + \log(\nodesize)) = \bigoh(\log(\nodesize))$ for a fixed $\ell$ and $i$. Union bounding over all $\bigoh(\nodesize)$ choices of $(\ell, i)$:
    
    $$
    \sum_{t=0}^{T-1} \sum _{\ell=0}^L \sum_{i=1}^{N_{\ell, t}} Y^{\ell, t}_i = \bigoh(\nodesize \log(\nodesize))
    $$ 
    with high probability, which is dominated by the $\bigoh(\nodesize \log \nodesize \log(\edgesize/\nodesize))$ minimum depth implied by the $\lg(d^{\ell, t}_i)$ terms. Therefore, the total space usage is $\bigoh(\nodesize \log\nodesize \log(\edgesize/\nodesize))$ with high probability.

    The update bound and the query bound follow from the fact that the costs of \link and \cut operations in UFO tree is asymptotically equivalent to other dynamic tree data structures, and we do not modify the overall update and query algorithms of \gibb.

    For the sake of clean arithmetic we assumed WLOG that each vertex (and each cluster) has a degree of at least $2$; if a vertex has degree zero, it still uses $\Theta(1)$ words of space per each $\Theta(\log\nodesize)$ tier,
    and thus the space is always at least $\Omega(\nodesize \log \nodesize)$. Our final space usage thus has a plus two added to the second logarithm to account for this,
    yielding $\bigoh(\nodesize \log \nodesize \log (2+ \edgesize / \nodesize))$.
\end{proof}

%% file: AD_hybrid_dynamic_proofs.tex
\section{Hybrid Dynamic Connectivity Proofs} \label{app:hybrid_dynamic}

\begin{restatable}{lemma}{hybriddynamicspaceone} \label{lem:hybrid_dynamic_space_one}
    The space complexity of our hybrid dynamic connectivity algorithm is $O(\nodesize + \edgesize)$ w.h.p.
\end{restatable}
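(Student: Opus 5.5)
The plan is to bound separately the space of the three components of the hybrid framework: $\sparsealg$ (the \clusterforest), $\densealg$ (the \sketchname-based algorithm of Lemma~\ref{lem:balloon_dc}), and the interface structures (the \iblts, the boolean array, and the global degree array). Each will be shown to use $O(\nodesize + \edgesize)$ space w.h.p., and we then add the three bounds.

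\textbf{Interface structures.} The boolean array and the global degree array use $O(\nodesize)$ space.

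\textbf{The sparse algorithm.} $\sparsealg$ stores the sparse edges plus a spanning forest of the dense region. The sparse edges number at most $\edgesize$, and the forest has at most $\nodesize - 1$ edges. Since \clusterforest uses linear space, $\sparsealg$ uses $O(\nodesize + \edgesize)$ space deterministically.

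\textbf{The dense algorithm.} Let $\nodesize_D$ be the current number of vertices in $\densealg$. The key invariant is that every vertex in $\densealg$ has global degree greater than $\densethresh/2$. This holds because a vertex is promoted only when $\degree(x) > \densethresh$, and it is demoted as soon as $\degree(x) \leq \densethresh/2$. Summing degrees gives
\[ \nodesize_D \cdot \densethresh/2 < \sum_v \degree(v) = 2\edgesize, \]
so $\nodesize_D = O(\edgesize/\densethresh)$.

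By the vertex-update discussion after Lemma~\ref{lem:balloon_dc}, $\densealg$ uses $O(\nodesize_D \log \nodesize \log(2 + \edgesize_D/\nodesize_D))$ space w.h.p. in $\nodesize$, where $\edgesize_D \leq \edgesize$ is the number of dense edges. I would bound the logarithmic factor directly rather than via density. The dense subgraph has $\nodesize_D$ vertices, so $\edgesize_D/\nodesize_D \leq \nodesize_D \leq \nodesize$, which gives
\[ \log(2 + \edgesize_D/\nodesize_D) = O(\log \nodesize). \]
This is too weak on its own. Instead I would argue per vertex, as in Lemma~\ref{lem:space_bounds}. Each dense vertex has dense degree $d_v \leq \degree(v)$, so its leaf-level cost is $O(\log \nodesize \log(2 + d_v))$. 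Because $\degree(v) > \densethresh/2 = \Omega(\log \nodesize \log\log \nodesize)$, this cost is $O(\degree(v))$, using that $x/\log(2+x)$ is increasing.

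\textbf{Main obstacle.} The hard step is that Lemma~\ref{lem:balloon_dc} gives only an aggregate bound, not a per-vertex one. To handle this, I would use the Jensen form that appears in its proof, namely
\[ O\bigl(\nodesize_D \log \nodesize \log(2 + 2\edgesize_D/\nodesize_D)\bigr). \]
Then I would split on density. If $\edgesize_D/\nodesize_D \leq \nodesize_D$, write $x = 2\edgesize/\nodesize_D \geq \densethresh$. Monotonicity of $x/\log(2+x)$ gives
\[ \nodesize_D \log \nodesize \log(2+x) \leq \nodesize_D\, x \cdot \frac{\log \nodesize \log(2+\densethresh)}{\densethresh} = O(\edgesize), \]
since $\densethresh = \log \nodesize \log\log \nodesize$ makes the last factor $O(1)$. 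Here I replace $\edgesize_D$ by $\edgesize$ inside the logarithm, which only increases the bound.

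\textbf{The \iblts.} Each of the $\nodesize_D$ \iblts has recovery size $\densethresh/2$ and so uses $O(\densethresh)$ space. Their total is $O(\nodesize_D \densethresh) = O(\edgesize)$.

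Adding the three components yields $O(\nodesize + \edgesize)$ w.h.p., where the high-probability qualifier comes only from Lemma~\ref{lem:balloon_dc}.
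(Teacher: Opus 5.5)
Your argument is correct and essentially matches the paper's. The decomposition is the same ($\sparsealg$, $\densealg$, auxiliary structures), and so is the key fact: every vertex in $\densealg$ has degree $>\densethresh/2$, so $\nodesize_D\densethresh = O(\edgesize)$. The only difference is a step you handle more cleanly: where the paper case-splits on whether $\edgesize_D \ge \nodesize_D\log\nodesize\log(2+\edgesize_D/\nodesize_D)$, you reach $O(\edgesize)$ in one step from the monotonicity of $x/\log(2+x)$.

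Two harmless slips. The degree count gives $x = 2\edgesize/\nodesize_D > \densethresh/2$, not $x \ge \densethresh$; applying monotonicity at $\densethresh/2$ still leaves an $O(1)$ factor. Also, your ``split on density'' is vacuous, since $\edgesize_D/\nodesize_D \le \nodesize_D$ always holds.
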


\begin{proof}
    First we prove the space is $O(\nodesize + \edgesize)$ w.h.p.
    
    Let $\nodesize_D$ and $\edgesize_D$ be the number of vertices and edges in $\densealg$ respectively. Let $\densespace$ be the random variable for the space usage of $\densealg$, and $\sparsespace$ be the space usage of $\sparsealg$.

    Due to Lemma~\ref{lem:balloon_dc}, we know that $\densespace = O(\nodesize_D \log\nodesize \log(2 + \edgesize_D / \nodesize_D))$ with high probability in $\nodesize$.
    If $\edgesize_D \geq \nodesize_D \log\nodesize \log(2 + \edgesize_D / \nodesize_D)$, then this space is $O(\edgesize)$ with high probability since $\edgesize \geq \edgesize_D$.
    Otherwise, we have:
    $$\edgesize_D \leq \nodesize_D \log\nodesize \log(2 + \edgesize_D / \nodesize_D) \leq 2\nodesize_D \log^2 \nodesize.$$
    Therefore,
    \begin{align*}
    \densespace &= O(\nodesize_D \log\nodesize \log(2 + \edgesize_D / \nodesize_D)) \\
    &= O(\nodesize_D \log\nodesize \log(2 + 2\log^2 \nodesize)) \\
    &= O(\nodesize_D \log\nodesize \log\log\nodesize)
    \end{align*}
    with high probability.
    
    Every vertex in $\densealg$ has degree at least $\densethresh/2 = \log\nodesize \log\log\nodesize / 2$, therefore the total number of edges is $\edgesize \geq \nodesize_D \log\nodesize \log\log\nodesize / 4$.
    Thus, in this case as well, the space in $\densealg$ is $O(\edgesize)$ with high probability.

    $\sparsespace$ is $O(\nodesize + \edgesize)$ since the graph stored by $\sparsealg$ is a subgraph of $\graph$, and the \clusterforest algorithm uses linear space.
    The space usage in our degree storage is clearly $O(\nodesize)$.
    The space usage of the $\nodesize_D$ $\delta$-sparse recovery sketches is $O(\nodesize_D \log\nodesize \log\log\nodesize)$, which again is $O(\edgesize)$ since $\edgesize \geq \nodesize_D \log\nodesize \log\log\nodesize / 4$.
    Therefore, the total space complexity $\densespace + \sparsespace$ along with auxiliary structures is $O(\nodesize + \edgesize)$ with high probability.
    
    Finally the space usage in \iblt's is $O(\nodesize_D \log \nodesize \log \log \nodesize)$ which is dominated by $\densespace$. The boolean and global degree for each vertex only take $O(\nodesize)$ total space.
\end{proof}

\begin{restatable}{lemma}{hybriddynamicspacetwo} \label{lem:hybrid_dynamic_space_two}
    The space complexity of our hybrid dynamic connectivity algorithm is $O(\nodesize \log \nodesize \log (2+\edgesize / \nodesize))$ w.h.p.
\end{restatable}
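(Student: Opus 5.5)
We bound the space of each component of the hybrid framework separately against the target $O(\nodesize \log \nodesize \log(2+\edgesize/\nodesize))$. The components are $\densealg$, $\sparsealg$, the \iblts, and the $O(\nodesize)$ auxiliary arrays (booleans and global degrees). The auxiliary arrays are trivially within the bound, since $\log \nodesize \log(2+\edgesize/\nodesize) = \Omega(1)$.

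\textbf{Dense algorithm.} By Lemma~\ref{lem:balloon_dc}, applied with the vertex-update extension of Section~\ref{sec:skinny_dynamic}, $\densespace = O(\nodesize_D \log \nodesize \log(2+\edgesize_D/\nodesize_D))$ w.h.p. in $\nodesize$. We want to replace $\nodesize_D, \edgesize_D$ by $\nodesize, \edgesize$. The function $f(x)=x\log(2+a/x)$ is increasing in $x>0$, since its derivative $\log(2+a/x) - \frac{a/x}{(2+a/x)\ln 2}$ is positive. It is also clearly increasing in $a$. Since $\nodesize_D \le \nodesize$ and $\edgesize_D \le \edgesize$, we get $\densespace = O(\nodesize \log \nodesize \log(2+\edgesize/\nodesize))$ w.h.p. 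The \iblts use $O(\nodesize_D \densethresh) = O(\nodesize_D \log\nodesize\log\log\nodesize)$ space. Every dense vertex has degree at least $\densethresh/2$, so $\edgesize \ge \nodesize_D \densethresh/4$, which gives
\[
\nodesize_D \log(2+\edgesize/\nodesize_D) \ge \nodesize_D \log(\densethresh/4) = \Omega(\nodesize_D \log\log \nodesize).
\]
Hence the \iblt space is $O(\nodesize_D \log \nodesize \log(2+\edgesize/\nodesize_D))$, which is within the target by the same monotonicity argument.

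\textbf{Sparse algorithm.} \clusterforest uses linear space in the graph it stores, namely the sparse edges plus a dense spanning forest. The forest has at most $\nodesize_D - 1$ edges. Every sparse edge has at least one endpoint not in $\densealg$. Because of the demotion hysteresis, a vertex not in $\densealg$ has degree at most $\densethresh$: it is promoted as soon as its degree exceeds $\densethresh$, inside the same update. Charging each sparse edge to a light endpoint therefore bounds the number of sparse edges by $\min\{\edgesize, \nodesize\densethresh\}$, so $\sparsespace = O(\nodesize + \min\{\edgesize, \nodesize \log\nodesize\log\log\nodesize\})$.

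This is the step I expect to be the main obstacle. We must show that $\min\{\edgesize, \nodesize\densethresh\} = O(\nodesize \log \nodesize \log(2+\edgesize/\nodesize))$ for all densities. I would split on $\edgesize/\nodesize$.
\begin{itemize}
\item If $\edgesize/\nodesize \ge \log \nodesize$, then $\log(2+\edgesize/\nodesize) \ge \log\log \nodesize$. The bound $\nodesize\densethresh = \nodesize\log\nodesize\log\log\nodesize$ is then within the target.
\item If $\edgesize/\nodesize < \log \nodesize$, use the $\edgesize$ branch: $\edgesize < \nodesize\log \nodesize \le \nodesize\log\nodesize\log(2+\edgesize/\nodesize)$, since $\log(2+\cdot)\ge 1$.
\end{itemize}
Summing all components, each holding w.h.p. (one union bound over the dense algorithm's failure event), gives the lemma.
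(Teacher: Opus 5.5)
Your proof is correct and follows essentially the same route as the paper. You bound $\densespace$ via Lemma~\ref{lem:balloon_dc} and the monotonicity of $x\log(2+a/x)$, bound $\sparsespace$ by $O(\nodesize+\min\{\edgesize,\nodesize\densethresh\})$ by charging non-forest edges to endpoints of degree at most $\densethresh$, and finish with a density case split; your split on $\edgesize/\nodesize$ versus $\log\nodesize$ is equivalent to the paper's split on $\edgesize$ versus $\nodesize\log\nodesize\log(2+\edgesize/\nodesize)$, and your explicit \iblt accounting and charging to vertices not in $\densealg$ (rather than to vertices of degree $<\densethresh$) are slightly more careful than the paper's write-up, which omits the \iblts from this lemma.
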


\begin{proof}

    Let $\nodesize_D$ and $\edgesize_D$ be the number of vertices and edges in $\densealg$ respectively. Let $\densespace$ be the random variable for the space usage of $\densealg$, and $\sparsespace$ be the space usage of $\sparsealg$.

    We know that $\densespace = O(\nodesize_D \log\nodesize \log(2 + \edgesize_D / \nodesize_D))$ with high probability in $\nodesize$.
    Note that $f(x) = x \log (2 + y/x)$ is a monotonically increasing function for positive $x$ and a fixed positive $y$.
    Thus, since $\nodesize_D \leq \nodesize$ and $\edgesize_D \leq \edgesize$, we have $\densespace = O(\nodesize \log\nodesize \log(2 + \edgesize / \nodesize))$ with high probability in $\nodesize$.

    Because of the linear space complexity of the \clusterforest algorithm, $\sparsespace = O(\nodesize + \edgesize_S)$, where $\edgesize_S$ is the number of edges in $\sparsealg$.
    The spanning forest of $\densealg$ contributes at most $\nodesize - 1$ edges to $\edgesize_S$. The remaining edges must be sparse edges (the set of sparse edges may overcount the remaining edges in $\sparsealg$ because some sparse edges may be in $\densealg$ since we only demote vertices with degree $< \delta/2$, but no dense edges other than the spanning forest can be in $\sparsealg$).
    
    To upper bound the number of sparse edges, we can simply sum the degrees of every sparse vertex. This results in edges between two sparse vertices being counted twice, and edges between a dense and a sparse vertex being counted once.
    Since every sparse vertex has degree $< \delta$, the number of sparse edges is $< \nodesize \delta$, so $\edgesize_S < \nodesize \log \nodesize \log\log\nodesize$.
    
    If $\edgesize < \nodesize \log\nodesize \log(2+\edgesize/\nodesize)$ then $\sparsespace = O(\nodesize \log\nodesize \log(2+\edgesize/\nodesize))$ is trivially true.
    Else $\edgesize \geq \nodesize \log\nodesize \log(2+\edgesize/\nodesize)$ and,
    \begin{align*}
        \nodesize \log\nodesize \log\log\nodesize &\leq \nodesize \log\nodesize \log\left(2+\frac{\nodesize\log\nodesize}{\nodesize}\right) \\
        &\leq \nodesize \log\nodesize \log\left(2+\frac{\nodesize \log\nodesize \log(2+\edgesize/\nodesize)}{\nodesize}\right) \\
        &\leq \nodesize \log\nodesize \log\left(2+\frac{\edgesize}{\nodesize}\right).
    \end{align*}
    Therefore, $\edgesize_S = O(\nodesize \log\nodesize \log(2+\edgesize/\nodesize))$, and $\sparsespace = O(\nodesize + \edgesize_S) = O(\nodesize \log\nodesize \log(2+\edgesize/\nodesize))$.
    Combining both bounds, the total space complexity $\densespace + \sparsespace$ is $O(\nodesize \log\nodesize \log(2 + \edgesize / \nodesize))$ with high probability in $\nodesize$.
\end{proof}

\begin{restatable}{lemma}{hybriddynamicupdate} \label{lem:hybrid_dynamic_update}
    The amortized update complexity of our hybrid dynamic connectivity algorithm is $O(\log^4 \nodesize)$.
\end{restatable}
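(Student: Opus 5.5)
We will bound the cost of an update in two parts: the \emph{base cost}, i.e.\ the work done on lines outside \promotevertex and \demotevertex, and the \emph{transition cost} of promotions and demotions. Transition costs will be charged to earlier updates, mirroring the argument of Lemma~\ref{lem:amortized_update} for the streaming case.

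\textbf{Base cost.} An update first increments or decrements two global degrees. It then does one of two things.
\begin{itemize}
\item It calls $\sparsealg.\insertedge$ or $\sparsealg.\deleteedge$, which costs amortized $O(\log^2 \nodesize)$ for \clusterforest~\cite{wulff2013faster}.
\item It calls $\densealg.\insertedge$ or $\densealg.\deleteedge$, costing $O(\log^4 \nodesize)$ by Lemma~\ref{lem:balloon_dc}, together with two \iblt updates costing $O(\log^2 \nodesize)$ each. It must also propagate the changes to $\densealg.\forest$ into $\sparsealg$.
\end{itemize}
For the propagation step, we use the fact that a single \gibb update modifies at most $O(\log \nodesize)$ spanning forest edges, one or so per level. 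Each such modification costs one amortized $O(\log^2\nodesize)$ operation in $\sparsealg$, so propagation costs $O(\log^3\nodesize)$. The base cost is therefore amortized $O(\log^4 \nodesize)$.

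\textbf{Transition cost.} A promotion of $v$ happens when $\degree(v)$ reaches $\densethresh+1$, so $v$ has at most $\densethresh+1$ incident edges at that moment.
\begin{itemize}
\item $\incidentedges(v)$ costs $O(\densethresh)$ plus lookup.
\item Each of the $O(\densethresh)$ incident edges may cause one $\densealg.\insertedge$ ($O(\log^4\nodesize)$), two \iblt insertions, and one $\sparsealg.\deleteedge$ ($O(\log^2\nodesize)$ amortized).
\item Each $\densealg$ operation produces $O(\log\nodesize)$ forest changes, each costing $O(\log^2 \nodesize)$ in $\sparsealg$.
\item $\densealg.\insertvertex$ costs $O(\log\nodesize)$.
\end{itemize}
A promotion therefore costs $O(\densethresh \log^4 \nodesize)$.

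A demotion happens when $\degree(v)\le \densethresh/2$. It decodes $\iblt_v$ in $O(\densethresh)$ time, recovering at most $\densethresh/2$ edges. Each recovered edge incurs:
\begin{itemize}
\item one $\densealg.\deleteedge$,
\item one \iblt deletion,
\item a membership test plus possible insertion into $\sparsealg$,
\item propagation of forest changes.
\end{itemize}
This gives $O(\densethresh\log^4\nodesize)$ in total, plus $\densealg.\deletevertex$ in $O(\log \nodesize)$.

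\textbf{Charging.} The charging uses the hysteresis between thresholds. After $v$ is promoted (degree $>\densethresh$), at least $\densethresh/2$ deletions incident to $v$ must occur before it is demoted. After a demotion (degree $\le\densethresh/2$), at least $\densethresh/2$ insertions incident to $v$ must occur before it is promoted again. The first promotion of $v$ is preceded by $\densethresh$ insertions incident to $v$. Hence we can charge each transition of $v$ to the $\Omega(\densethresh)$ updates incident to $v$ since its previous transition (or since the start). Each update is charged by at most two vertices, its endpoints, and each charge is $O(\log^4\nodesize)$. A potential function $\Phi=\sum_v c\log^4\nodesize\cdot|\degree(v)-\text{(degree at last transition)}|$ formalizes this charging cleanly.

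\textbf{Main obstacle.} The step I expect to be delicate is the claim that each $\densealg$ operation induces only $O(\log\nodesize)$ spanning forest changes whose propagation stays within $O(\log^4\nodesize)$. I would justify it by noting that any change reported by $\densealg$ is part of the work \gibb already performs within its $O(\log^4 \nodesize)$ bound, so the number of changes is at most that. Even a crude bound of $O(\log^2\nodesize)$ changes, each costing $O(\log^2\nodesize)$ in $\sparsealg$, still gives $O(\log^4\nodesize)$. A second subtlety is that \iblt recovery succeeds only w.h.p.; the bound is stated without probability, so I would either fold the failure event into the w.h.p.\ guarantee or note that a failed recovery simply retries, as in the implementation.
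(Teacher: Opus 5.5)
Your proposal is correct and follows the paper's proof essentially step for step. The base cost is the $O(\log^4\nodesize)$ $\densealg$ update plus $O(\log\nodesize)$ forest changes, each costing amortized $O(\log^2\nodesize)$ in $\sparsealg$. Promotions and demotions cost $O(\densethresh\log^4\nodesize)$, and the hysteresis between $\densethresh$ and $\densethresh/2$ lets you charge each transition to $\Omega(\densethresh)$ incident updates.

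One caution: the fallback in your ``main obstacle'' paragraph does not work. Bounding the number of forest changes by the $O(\log^4\nodesize)$ running time of $\densealg$ only gives $O(\log^6\nodesize)$ propagation cost, and your $O(\log^2\nodesize)$ count is not justified. You therefore need the per-level $O(\log\nodesize)$ bound on forest changes, which the paper likewise simply asserts.
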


\begin{proof}
    An update operation (edge insertion or deletion) in our hybrid framework consists of up to three main components: (1) updating the corresponding base algorithm ($\densealg$ or $\sparsealg$) and auxiliary structures, (2) applying any resulting spanning forest changes from $\densealg$ to $\sparsealg$, and (3) handling potential vertex promotions or demotions.

    First, consider the cost of a standard edge update that does not trigger a promotion or demotion.
    If the edge is in the sparse region, we update $\sparsealg$. The \clusterforest algorithm handles updates in amortized $O(\log^2 \nodesize)$ time~\cite{wulff2013faster}.
    If the edge is in the dense region, we update $\densealg$. Our \sketchname-based algorithm (Section~\ref{sec:skinny_dynamic}) supports updates in worst-case $O(\log^4 \nodesize)$ time. Additionally, updating the \iblt's for the endpoints takes $O(1)$ time.
    In the dense case, $\densealg$ may return a list of changes to its spanning forest. Sketch-based dynamic connectivity algorithms produce at most $O(\log \nodesize)$ spanning forest changes per update. We must apply each of these changes to $\sparsealg$, which takes $O(\log \nodesize) \cdot O(\log^2 \nodesize) = O(\log^3 \nodesize)$ amortized time.
    Thus, the base cost of any single edge update, excluding vertex promotions/demotions, is bounded by $O(\log^4 \nodesize)$ amortized.

    Next, we analyze the cost of vertex promotions and demotions.
    When a vertex $v$ is promoted (i.e., its degree reaches above $\densethresh$), we iterate over its $\densethresh$ incident edges in $\sparsealg$. For any edge connecting to another heavy vertex, we insert it into $\densealg$ and delete it from $\sparsealg$.
    This requires at most $\densethresh$ operations in $\densealg$ and $\sparsealg$, along with tracking the subsequent spanning forest changes.
    Because each standard edge move costs bounded by $O(\log^4 \nodesize)$ amortized, the total cost of promoting a vertex is $O(\densethresh \log^4 \nodesize)$ amortized.
    Similarly, when a vertex $v$ is demoted (its degree falls down to $\densethresh / 2$), we recover its at most $\densethresh / 2$ incident edges from $\iblt_v$ in $O(\densethresh)$ time, delete them from $\densealg$, and insert them into $\sparsealg$. This process also involves at most $\densethresh/2$ edge moves, bounding the demotion cost by $O(\densethresh \log^4 \nodesize)$ amortized.

    To bound the amortized cost, we use a standard potential argument.
    A vertex is promoted only when its degree increases above $\densethresh$, and demoted only when its degree decreases to $\densethresh/2$.
    Therefore, between any two consecutive threshold crossings (a promotion followed by a demotion, or vice versa), the vertex must have undergone at least $\densethresh/2$ individual edge insertions or deletions incident to it.
    We charge the $O(\densethresh \log^4 \nodesize)$ cost of the promotion or demotion evenly across these $\Omega(\densethresh)$ updates.
    This adds an amortized cost of $O(\densethresh \log^4 \nodesize / \densethresh) = O(\log^4 \nodesize)$ to each update.

    Combining the base update cost and the amortized promotion and demotion cost, the overall amortized update complexity of our hybrid framework is $O(\log^4 \nodesize)$.
\end{proof}

\begin{restatable}{lemma}{hybriddynamicquery} \label{lem:hybrid_dynamic_query}
    The worst-case query complexity of our hybrid dynamic connectivity algorithm is $O(\log\nodesize / \log\log\nodesize)$.
\end{restatable}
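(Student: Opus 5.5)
The query algorithm of the hybrid framework (Section~\ref{sec:hybrid_framework}) answers $\connected(u,v)$ by a single call to $\sparsealg.\connected(u,v)$. It does not touch $\densealg$, the \iblts, or the global degree array. The plan is therefore to establish two things: that this delegation is valid at every point in time, and that the query cost of the chosen $\sparsealg$ gives the claimed bound.

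\textbf{Step 1: validity of delegation.} We argue that after every completed update, $\sparsealg$ stores exactly the union of the sparse edges (possibly together with some sparse-region edges of not-yet-demoted vertices) and the current spanning forest $\densealg.\forest$. This follows from the update algorithms. Every branch of \insertedge, \deleteedge, \promotevertex and \demotevertex ends with ``Update $\sparsealg$ with all changes to $\densealg.\forest$''. Every edge of $\graph$ is stored either in $\sparsealg$ or in $\densealg$. Edges stored only in $\densealg$ have both endpoints in the dense region, so their endpoints are already connected through $\densealg.\forest$, which lies in $\sparsealg$. Hence the components of $\sparsealg$ coincide with those of $\graph$, and a query involves no extra work, such as flushing buffers or consulting sketches. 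This correctness holds w.h.p., inherited from the w.h.p.\ guarantee that $\densealg.\forest$ is a spanning forest of the dense region.

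\textbf{Step 2: cost.} A query is a single $\sparsealg.\connected$ call plus $O(1)$ overhead. We then invoke the worst-case query bound of the lossless algorithm. We need $\sparsealg$ to answer connectivity in worst-case $O(\log\nodesize/\log\log\nodesize)$ time. Cluster-forest-style structures (Wulff-Nilsen~\cite{wulff2013faster}) achieve this by storing the top-level spanning forest in a dynamic tree with fan-out $\Theta(\log \nodesize)$, or by using a root-finding structure of that depth. The graph stored in $\sparsealg$ has at most $\nodesize$ vertices, and the bound depends only on $\nodesize$, so the extra dense forest edges do not affect it.

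\textbf{Main obstacle.} The delicate point is Step 2: the preliminaries only state ``$O(\log\nodesize)$ or better'' for lossless query time. I would cite Wulff-Nilsen's $O(\log \nodesize/\log\log \nodesize)$ worst-case query explicitly. If that result is unavailable, I would replace the top-level forest representation in $\sparsealg$ with a B-tree-like Euler tour structure of degree $\log\nodesize$. This gives queries in $O(\log\nodesize/\log\log\nodesize)$ worst case, and its updates cost $O(\log^2\nodesize)$ per link or cut, which still fits within the $O(\log^4\nodesize)$ update budget of Lemma~\ref{lem:hybrid_dynamic_update}.
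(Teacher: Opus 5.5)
Your proposal is correct and is essentially the paper's proof: a query is a single call to $\sparsealg.\connected(u,v)$, and the worst-case $O(\log\nodesize/\log\log\nodesize)$ bound comes directly from Wulff-Nilsen's \clusterforest~\cite{wulff2013faster}, so your Step~1 validity argument (which the paper gives with the query algorithm in Section~\ref{sec:hybrid_framework}) and your B-tree fallback are extras rather than needed steps. One small slip, which is harmless: your Step~1 parenthetical has the hysteresis backwards, since sparse edges between two not-yet-demoted vertices are stored in $\densealg$ rather than in $\sparsealg$; your component argument still holds because it only needs that every edge is either in $\sparsealg$ or joins two vertices of $\densealg$, whose spanning forest is mirrored in $\sparsealg$.
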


\begin{proof}
    As described in our query algorithm (Section~\ref{sec:hybrid_framework}), any connectivity query $\connected(u,v)$ in our framework is directly answered by calling $\sparsealg.\connected(u,v)$.
    Because we use the \clusterforest algorithm~\cite{wulff2013faster} as $\sparsealg$, which supports worst-case $O(\log\nodesize / \log\log\nodesize)$ query times, the query complexity of our hybrid framework is exactly $O(\log\nodesize / \log\log\nodesize)$ in the worst case.
\end{proof}

%% file: AE_experiments.tex
\clearpage
\section{Additional Experimental Results} \label{app:results}

Table~\ref{tab:memory_usage_results} shows the un-normalized results of our memory usage experiments.
Table~\ref{tab:total_time_results} shows the un-normalized results of our update speed experiments.

Table~\ref{tab:num_tiers_max_link_tier} indicates the number of cutset tiers that were actually required for \sysname and \cupcake. The results show that the number of tiers is always lower than the number we allocated ($\lceil\log_2\nodesize\rceil$), indicating that the Monte Carlo random algorithms never induced an error in our experiments.
Also \sysname can require significantly fewer tiers than \cupcake.

\begin{table}[b]
    \centering
    \begin{tabular}{lrrr}
        \toprule
              & \textsc{Hybrid} & Cluster & CUP \\
        Graph & \textsc{SCALE}  & Forest  & CaKE \\
        \multicolumn{1}{c}{} & \multicolumn{1}{c}{(GB)} & \multicolumn{1}{c}{(GB)} & \multicolumn{1}{c}{(GB)} \\
        \midrule
        Google+         & \sigfig{0.240588392}   & \sigfig{0.283739522}   & \sigfig{2.832826000} \\
        Friendster      & \sigfig{65.676947280}  & \sigfig{60.708522230}  & \multicolumn{1}{c}{---} \\
        Twitter         & \sigfig{38.504867352}  & \sigfig{44.518899620}  & \multicolumn{1}{c}{---} \\
        Orkut           & \sigfig{3.745596942}   & \sigfig{3.543296070}   & \sigfig{128.011011312} \\
        ENWiki          & \sigfig{3.826368022}   & \sigfig{3.617215502}   & \multicolumn{1}{c}{---} \\
        Youtube         & \sigfig{0.493836388}   & \sigfig{0.478721368}   & \sigfig{42.773110096} \\
        RoadUSA         & \sigfig{10.166121564}  & \sigfig{9.926659128}   & \multicolumn{1}{c}{---} \\
        RoadGER         & \sigfig{5.326351596}   & \sigfig{5.203332050}   & \multicolumn{1}{c}{---} \\
        SIFT-RS-50K     & \sigfig{2.331036570}   & \sigfig{4.788201454}   & \sigfig{34.874205648} \\
        SIFT-KNN-500    & \sigfig{10.072640858}  & \sigfig{8.856313536}   & \sigfig{36.002340544} \\
        MSSPACE-RS-10K  & \sigfig{1.209454916}   & \sigfig{14.417038266}  & \sigfig{34.714140640} \\
        MSSPACE-KNN-500 & \sigfig{10.018061726}  & \sigfig{8.535403108}   & \sigfig{35.973257072} \\
        kron-13         & \sigfig{0.065089822}   & \sigfig{0.438957300}   & \sigfig{0.145238608} \\
        kron-15         & \sigfig{0.299463456}   & \sigfig{6.983185660}   & \sigfig{0.742988224} \\
        kron-16         & \sigfig{0.642669520}   & \sigfig{27.918473930}  & \sigfig{1.661791376} \\
        \bottomrule
    \end{tabular}
    \caption{Peak memory usage (GB) in our experiments. Entries indicated as --- ran out of memory (OOM).}
    \label{tab:memory_usage_results}
\end{table}

\begin{table}[b]
    \centering
    \begin{tabular}{lrrr}
        \toprule
              & \textsc{Hybrid} & Cluster & CUP \\
        Graph & \textsc{SCALE}  & Forest  & CaKE \\
        \multicolumn{1}{c}{} & \multicolumn{1}{c}{(sec)} & \multicolumn{1}{c}{(sec)} & \multicolumn{1}{c}{(sec)} \\
        \midrule
        Google+         & \sigfig{38.743}    & \sigfig{27.489}    & \sigfig{439.697} \\
        Friendster      & \sigfig{42160.854} & \sigfig{22195.247} & \multicolumn{1}{c}{---} \\
        Twitter         & \sigfig{22891.733} & \sigfig{11805.497} & \multicolumn{1}{c}{---} \\
        Orkut           & \sigfig{902.391}   & \sigfig{641.937}   & \sigfig{16714.018} \\
        ENWiki          & \sigfig{638.363}   & \sigfig{501.384}   & \multicolumn{1}{c}{---} \\
        Youtube         & \sigfig{17.969}    & \sigfig{20.752}    & \sigfig{697.177} \\
        RoadUSA         & \sigfig{262.917}   & \sigfig{273.781}   & \multicolumn{1}{c}{---} \\
        RoadGER         & \sigfig{163.101}   & \sigfig{177.644}   & \multicolumn{1}{c}{---} \\
        SIFT-RS-50K     & \sigfig{1751.073}  & \sigfig{1033.548}  & \sigfig{7519.486} \\
        SIFT-KNN-500    & \sigfig{7587.832}  & \sigfig{2360.434}  & \sigfig{14008.809} \\
        MSSPACE-RS-10K  & \sigfig{3879.901}  & \sigfig{2472.470}  & \sigfig{10277.035} \\
        MSSPACE-KNN-500 & \sigfig{6606.269}  & \sigfig{1932.303}  & \sigfig{13501.731} \\
        kron-13         & \sigfig{66.723}    & \sigfig{28.189}    & \sigfig{99.178} \\
        kron-15         & \sigfig{1636.070}  & \sigfig{961.362}   & \sigfig{1525.614} \\
        kron-16         & \sigfig{8411.096}  & \sigfig{5873.202}  & \sigfig{6981.490} \\
        \bottomrule
    \end{tabular}
    \caption{Total update time in seconds in our experiments. Entries indicated as --- ran out of memory (OOM).}
    \label{tab:total_time_results}
\end{table}

\begin{table}[b]
    \centering
    \begin{tabular}{lccc}
        \toprule
        Graph & $\lceil\log_2\nodesize\rceil$ & \textsc{Hybrid} & CUP \\
         &  & \textsc{SCALE} & CaKE \\
        \midrule
        Google+         & 17 & 11 & 14 \\
        Friendster      & 26 & 14 & \multicolumn{1}{c}{---} \\
        Twitter         & 26 & 15 & \multicolumn{1}{c}{---} \\
        Orkut           & 22 & 11 & 18 \\
        ENWiki          & 23 & 12 & \multicolumn{1}{c}{---} \\
        Youtube         & 21 & 9  & 15 \\
        RoadUSA         & 25 & 0  & \multicolumn{1}{c}{---} \\
        RoadGER         & 24 & 0  & \multicolumn{1}{c}{---} \\
        SIFT-RS-50K     & 20 & 13 & 17 \\
        SIFT-KNN-500    & 20 & 14 & 18 \\
        MSSPACE-RS-10K  & 20 & 13 & 16 \\
        MSSPACE-KNN-500 & 20 & 15 & 18 \\
        kron-13         & 13 & 11 & 12 \\
        kron-15         & 15 & 12 & 14 \\
        kron-16         & 16 & 14 & 14 \\
        \bottomrule
    \end{tabular}
    \caption{Max number of tiers used. Entries indicated as --- ran out of memory (OOM). A value of 0 indicates that \sysname did not sketch anything.}
    \label{tab:num_tiers_max_link_tier}
\end{table}

\clearpage

%% file: main.bbl

\begin{thebibliography}{62}


\ifx \showCODEN    \undefined \def \showCODEN     #1{\unskip}     \fi
\ifx \showDOI      \undefined \def \showDOI       #1{#1}\fi
\ifx \showISBNx    \undefined \def \showISBNx     #1{\unskip}     \fi
\ifx \showISBNxiii \undefined \def \showISBNxiii  #1{\unskip}     \fi
\ifx \showISSN     \undefined \def \showISSN      #1{\unskip}     \fi
\ifx \showLCCN     \undefined \def \showLCCN      #1{\unskip}     \fi
\ifx \shownote     \undefined \def \shownote      #1{#1}          \fi
\ifx \showarticletitle \undefined \def \showarticletitle #1{#1}   \fi
\ifx \showURL      \undefined \def \showURL       {\relax}        \fi
\providecommand\bibfield[2]{#2}
\providecommand\bibinfo[2]{#2}
\providecommand\natexlab[1]{#1}
\providecommand\showeprint[2][]{arXiv:#2}

\bibitem[Acar et~al\mbox{.}(2019)]%
        {acar2019parallel}
\bibfield{author}{\bibinfo{person}{Umut~A. Acar}, \bibinfo{person}{Daniel Anderson}, \bibinfo{person}{Guy~E. Blelloch}, {and} \bibinfo{person}{Laxman Dhulipala}.} \bibinfo{year}{2019}\natexlab{}.
\newblock \showarticletitle{Parallel Batch-Dynamic Graph Connectivity}. In \bibinfo{booktitle}{\emph{Proceedings of the 31st ACM Symposium on Parallelism in Algorithms and Architectures (SPAA)}}. \bibinfo{pages}{381--392}.
\newblock
\urldef\tempurl%
\url{https://doi.org/10.1145/3323165.3323196}
\showDOI{\tempurl}


\bibitem[Acar et~al\mbox{.}(2020)]%
        {acar2020parallel}
\bibfield{author}{\bibinfo{person}{Umut~A. Acar}, \bibinfo{person}{Daniel Anderson}, \bibinfo{person}{Guy~E. Blelloch}, \bibinfo{person}{Laxman Dhulipala}, {and} \bibinfo{person}{Sam Westrick}.} \bibinfo{year}{2020}\natexlab{}.
\newblock \showarticletitle{Parallel Batch-Dynamic Trees via Change Propagation}. In \bibinfo{booktitle}{\emph{28th Annual European Symposium on Algorithms (ESA)}}. \bibinfo{pages}{2:1--2:23}.
\newblock
\urldef\tempurl%
\url{https://doi.org/10.4230/LIPIcs.ESA.2020.2}
\showDOI{\tempurl}


\bibitem[Ahn et~al\mbox{.}(2012a)]%
        {Ahn2012}
\bibfield{author}{\bibinfo{person}{Kook~Jin Ahn}, \bibinfo{person}{Sudipto Guha}, {and} \bibinfo{person}{Andrew McGregor}.} \bibinfo{year}{2012}\natexlab{a}.
\newblock \showarticletitle{Analyzing graph structure via linear measurements}. In \bibinfo{booktitle}{\emph{Proceedings of the twenty-third annual ACM-SIAM symposium on Discrete Algorithms}}. SIAM, \bibinfo{pages}{459--467}.
\newblock


\bibitem[Ahn et~al\mbox{.}(2012b)]%
        {AhnGM12b}
\bibfield{author}{\bibinfo{person}{Kook~Jin Ahn}, \bibinfo{person}{Sudipto Guha}, {and} \bibinfo{person}{Andrew McGregor}.} \bibinfo{year}{2012}\natexlab{b}.
\newblock \showarticletitle{Graph sketches: sparsification, spanners, and subgraphs}. In \bibinfo{booktitle}{\emph{{PODS}}}. \bibinfo{publisher}{{ACM}}, \bibinfo{pages}{5--14}.
\newblock


\bibitem[Alon et~al\mbox{.}(1999)]%
        {alonFrequencyMoments1999}
\bibfield{author}{\bibinfo{person}{Noga Alon}, \bibinfo{person}{Yossi Matias}, {and} \bibinfo{person}{Mario Szegedy}.} \bibinfo{year}{1999}\natexlab{}.
\newblock \showarticletitle{The Space Complexity of Approximating the Frequency Moments}.
\newblock \bibinfo{journal}{\emph{J. Comput. System Sci.}} \bibinfo{volume}{58}, \bibinfo{number}{1} (\bibinfo{year}{1999}), \bibinfo{pages}{137--147}.
\newblock
\urldef\tempurl%
\url{https://doi.org/10.1006/jcss.1997.1545}
\showDOI{\tempurl}


\bibitem[Alvarez-Hamelin et~al\mbox{.}(2005)]%
        {alvarezhamelin2005kcore}
\bibfield{author}{\bibinfo{person}{J.~Ignacio Alvarez-Hamelin}, \bibinfo{person}{Luca Dall'Asta}, \bibinfo{person}{Alain Barrat}, {and} \bibinfo{person}{Alessandro Vespignani}.} \bibinfo{year}{2005}\natexlab{}.
\newblock \showarticletitle{Large Scale Networks Fingerprinting and Visualization Using the $k$-Core Decomposition}. In \bibinfo{booktitle}{\emph{Advances in Neural Information Processing Systems (NeurIPS)}}. \bibinfo{pages}{41--50}.
\newblock


\bibitem[Ang et~al\mbox{.}(2010)]%
        {krongraphs}
\bibfield{author}{\bibinfo{person}{James~A. Ang}, \bibinfo{person}{Brian~W. Barrett}, \bibinfo{person}{Kyle~B. Wheeler}, {and} \bibinfo{person}{Richard~C. Murphy}.} \bibinfo{year}{2010}\natexlab{}.
\newblock \showarticletitle{Introducing the {G}raph 500}. In \bibinfo{booktitle}{\emph{Cray User Group (CUG) Proceedings}}.
\newblock
\urldef\tempurl%
\url{https://cug.org/5-publications/proceedings_attendee_lists/CUG10CD/pages/1-program/final_program/CUG10_Proceedings/pages/authors/11-15Wednesday/14C-Murphy-paper.pdf}
\showURL{%
\tempurl}


\bibitem[Batagelj and Zaver{\v s}nik(2003)]%
        {batagelj2003om}
\bibfield{author}{\bibinfo{person}{Vladimir Batagelj} {and} \bibinfo{person}{Matja{\v z} Zaver{\v s}nik}.} \bibinfo{year}{2003}\natexlab{}.
\newblock \showarticletitle{An $O(m)$ Algorithm for Cores Decomposition of Networks}.
\newblock \bibinfo{journal}{\emph{arXiv preprint cs/0310049}} (\bibinfo{year}{2003}).
\newblock


\bibitem[Belazzougui et~al\mbox{.}(2024)]%
        {belazzougui2024iblt}
\bibfield{author}{\bibinfo{person}{Djamal Belazzougui}, \bibinfo{person}{Gregory Kucherov}, {and} \bibinfo{person}{Stefan Walzer}.} \bibinfo{year}{2024}\natexlab{}.
\newblock \showarticletitle{{Better Space-Time-Robustness Trade-Offs for Set Reconciliation}}. In \bibinfo{booktitle}{\emph{51st International Colloquium on Automata, Languages, and Programming (ICALP 2024)}} \emph{(\bibinfo{series}{Leibniz International Proceedings in Informatics (LIPIcs)}, Vol.~\bibinfo{volume}{297})}, \bibfield{editor}{\bibinfo{person}{Karl Bringmann}, \bibinfo{person}{Martin Grohe}, \bibinfo{person}{Gabriele Puppis}, {and} \bibinfo{person}{Ola Svensson}} (Eds.). \bibinfo{publisher}{Schloss Dagstuhl -- Leibniz-Zentrum f{\"u}r Informatik}, \bibinfo{address}{Dagstuhl, Germany}, \bibinfo{pages}{20:1--20:19}.
\newblock
\showISBNx{978-3-95977-322-5}
\showISSN{1868-8969}
\urldef\tempurl%
\url{https://doi.org/10.4230/LIPIcs.ICALP.2024.20}
\showDOI{\tempurl}


\bibitem[Boldi and Vigna(2004)]%
        {boldi2004webgraph}
\bibfield{author}{\bibinfo{person}{Paolo Boldi} {and} \bibinfo{person}{Sebastiano Vigna}.} \bibinfo{year}{2004}\natexlab{}.
\newblock \showarticletitle{The {WebGraph} Framework {I}: Compression Techniques}. In \bibinfo{booktitle}{\emph{Proceedings of the 13th International Conference on World Wide Web (WWW)}}. \bibinfo{publisher}{ACM}, \bibinfo{pages}{595--602}.
\newblock
\urldef\tempurl%
\url{https://doi.org/10.1145/988672.988752}
\showDOI{\tempurl}


\bibitem[Borgatti and Everett(2000)]%
        {borgatti2000models}
\bibfield{author}{\bibinfo{person}{Stephen~P. Borgatti} {and} \bibinfo{person}{Martin~G. Everett}.} \bibinfo{year}{2000}\natexlab{}.
\newblock \showarticletitle{Models of Core/Periphery Structures}.
\newblock \bibinfo{journal}{\emph{Social Networks}} \bibinfo{volume}{21}, \bibinfo{number}{4} (\bibinfo{year}{2000}), \bibinfo{pages}{375--395}.
\newblock
\urldef\tempurl%
\url{https://doi.org/10.1016/S0378-8733(99)00019-2}
\showDOI{\tempurl}


\bibitem[Brown(2011)]%
        {brownGeometricConcentration}
\bibfield{author}{\bibinfo{person}{Daniel~G. Brown}.} \bibinfo{year}{2011}\natexlab{}.
\newblock \showarticletitle{How {{I}} Wasted Too Long Finding a Concentration Inequality for Sums of Geometric Variables}.
\newblock  (\bibinfo{year}{2011}).
\newblock


\bibitem[Carey et~al\mbox{.}(2022)]%
        {stars2022}
\bibfield{author}{\bibinfo{person}{CJ Carey}, \bibinfo{person}{Jonathan Halcrow}, \bibinfo{person}{Rajesh Jayaram}, \bibinfo{person}{Vahab Mirrokni}, \bibinfo{person}{Warren Schudy}, {and} \bibinfo{person}{Peilin Zhong}.} \bibinfo{year}{2022}\natexlab{}.
\newblock \showarticletitle{Stars: Tera-Scale Graph Building for Clustering and Learning}. In \bibinfo{booktitle}{\emph{Advances in Neural Information Processing Systems (NeurIPS)}}.
\newblock


\bibitem[Carmi et~al\mbox{.}(2007)]%
        {carmi2007model}
\bibfield{author}{\bibinfo{person}{Shai Carmi}, \bibinfo{person}{Shlomo Havlin}, \bibinfo{person}{Scott Kirkpatrick}, \bibinfo{person}{Yuval Shavitt}, {and} \bibinfo{person}{Eran Shir}.} \bibinfo{year}{2007}\natexlab{}.
\newblock \showarticletitle{A Model of {Internet} Topology Using $k$-Shell Decomposition}.
\newblock \bibinfo{journal}{\emph{Proceedings of the National Academy of Sciences (PNAS)}} \bibinfo{volume}{104}, \bibinfo{number}{27} (\bibinfo{year}{2007}), \bibinfo{pages}{11150--11154}.
\newblock
\urldef\tempurl%
\url{https://doi.org/10.1073/pnas.0701175104}
\showDOI{\tempurl}


\bibitem[Charikar et~al\mbox{.}({[n.\,d.]})]%
        {countsketch}
\bibfield{author}{\bibinfo{person}{Moses Charikar}, \bibinfo{person}{Kevin Chen}, {and} \bibinfo{person}{Martin {Farach-Colton}}.} \bibinfo{year}{[n.\,d.]}\natexlab{}.
\newblock \showarticletitle{Finding {{Frequent Items}} in {{Data Streams}}}.
\newblock  (\bibinfo{year}{[n.\,d.]}).
\newblock


\bibitem[Chuzhoy et~al\mbox{.}(2020)]%
        {chuzhoy2020deterministic}
\bibfield{author}{\bibinfo{person}{Julia Chuzhoy}, \bibinfo{person}{Yu Gao}, \bibinfo{person}{Jason Li}, \bibinfo{person}{Danupon Nanongkai}, \bibinfo{person}{Richard Peng}, {and} \bibinfo{person}{Thatchaphol Saranurak}.} \bibinfo{year}{2020}\natexlab{}.
\newblock \showarticletitle{A Deterministic Algorithm for Balanced Cut with Applications to Dynamic Connectivity, Flows, and Beyond}. In \bibinfo{booktitle}{\emph{61st Annual IEEE Symposium on Foundations of Computer Science (FOCS)}}. \bibinfo{pages}{1158--1167}.
\newblock
\urldef\tempurl%
\url{https://doi.org/10.1109/FOCS46700.2020.00111}
\showDOI{\tempurl}


\bibitem[Collet(2016)]%
        {xxhash}
\bibfield{author}{\bibinfo{person}{Yann Collet}.} \bibinfo{year}{2016}\natexlab{}.
\newblock \showarticletitle{xxHash-Extremely fast non-cryptographic hash algorithm}.
\newblock \bibinfo{journal}{\emph{URL https://github. com/Cyan4973/xxHash}} (\bibinfo{year}{2016}).
\newblock


\bibitem[Cormode and Firmani(2014)]%
        {cormode2014unifying}
\bibfield{author}{\bibinfo{person}{Graham Cormode} {and} \bibinfo{person}{Donatella Firmani}.} \bibinfo{year}{2014}\natexlab{}.
\newblock \showarticletitle{A unifying framework for l0-sampling algorithms}.
\newblock \bibinfo{journal}{\emph{Distrib. Parallel Databases}} \bibinfo{volume}{32}, \bibinfo{number}{3} (\bibinfo{date}{sep} \bibinfo{year}{2014}), \bibinfo{pages}{315–335}.
\newblock
\showISSN{0926-8782}
\urldef\tempurl%
\url{https://doi.org/10.1007/s10619-013-7131-9}
\showDOI{\tempurl}


\bibitem[Cormode and Muthukrishnan(2005)]%
        {cordmodCountMinSketch2005}
\bibfield{author}{\bibinfo{person}{Graham Cormode} {and} \bibinfo{person}{Shanmugavelayutham Muthukrishnan}.} \bibinfo{year}{2005}\natexlab{}.
\newblock \showarticletitle{An Improved Data Stream Summary: The Count-Min Sketch and Its Applications}.
\newblock \bibinfo{journal}{\emph{Journal of Algorithms}} \bibinfo{volume}{55}, \bibinfo{number}{1} (\bibinfo{year}{2005}), \bibinfo{pages}{58--75}.
\newblock
\urldef\tempurl%
\url{https://doi.org/10.1016/j.jalgor.2003.12.001}
\showDOI{\tempurl}


\bibitem[Davis and Hu(2011)]%
        {suitesparse}
\bibfield{author}{\bibinfo{person}{Timothy~A. Davis} {and} \bibinfo{person}{Yifan Hu}.} \bibinfo{year}{2011}\natexlab{}.
\newblock \showarticletitle{The university of Florida sparse matrix collection}.
\newblock \bibinfo{journal}{\emph{ACM Trans. Math. Softw.}} \bibinfo{volume}{38}, \bibinfo{number}{1}, Article \bibinfo{articleno}{1} (\bibinfo{date}{dec} \bibinfo{year}{2011}), \bibinfo{numpages}{25}~pages.
\newblock
\showISSN{0098-3500}
\urldef\tempurl%
\url{https://doi.org/10.1145/2049662.2049663}
\showDOI{\tempurl}


\bibitem[De~Man et~al\mbox{.}(2024)]%
        {deman2024towards}
\bibfield{author}{\bibinfo{person}{Quinten De~Man}, \bibinfo{person}{Laxman Dhulipala}, \bibinfo{person}{Adam Karczmarz}, \bibinfo{person}{Jakub \L{}\k{a}cki}, \bibinfo{person}{Julian Shun}, {and} \bibinfo{person}{Zhongqi Wang}.} \bibinfo{year}{2024}\natexlab{}.
\newblock \showarticletitle{Towards Scalable and Practical Batch-Dynamic Connectivity}.
\newblock \bibinfo{journal}{\emph{Proc. VLDB Endow.}} \bibinfo{volume}{18}, \bibinfo{number}{3} (\bibinfo{date}{Nov.} \bibinfo{year}{2024}), \bibinfo{pages}{889–901}.
\newblock
\showISSN{2150-8097}
\urldef\tempurl%
\url{https://doi.org/10.14778/3712221.3712250}
\showDOI{\tempurl}


\bibitem[De~Man et~al\mbox{.}(2025)]%
        {deman2025fast}
\bibfield{author}{\bibinfo{person}{Quinten De~Man}, \bibinfo{person}{Qamber Jafri}, \bibinfo{person}{Daniel Delayo}, \bibinfo{person}{Evan~T. West}, \bibinfo{person}{Michael~A. Bender}, {and} \bibinfo{person}{David Tench}.} \bibinfo{year}{2025}\natexlab{}.
\newblock \bibinfo{title}{Fast and Compact Sketch-Based Dynamic Connectivity}.
\newblock
\newblock
\showeprint[arxiv]{2509.14433}~[cs.DS]
\urldef\tempurl%
\url{https://arxiv.org/abs/2509.14433}
\showURL{%
\tempurl}


\bibitem[De~Man et~al\mbox{.}(2026)]%
        {deman2026ufo}
\bibfield{author}{\bibinfo{person}{Quinten De~Man}, \bibinfo{person}{Atharva Sharma}, \bibinfo{person}{Kishen~N Gowda}, {and} \bibinfo{person}{Laxman Dhulipala}.} \bibinfo{year}{2026}\natexlab{}.
\newblock \showarticletitle{UFO Trees: Practical and Provably-Efficient Parallel Batch-Dynamic Trees}. In \bibinfo{booktitle}{\emph{Proceedings of the 31st ACM SIGPLAN Annual Symposium on Principles and Practice of Parallel Programming}} (Sydney, NSW, Australia) \emph{(\bibinfo{series}{PPoPP '26})}. \bibinfo{publisher}{Association for Computing Machinery}, \bibinfo{address}{New York, NY, USA}, \bibinfo{pages}{109–122}.
\newblock
\showISBNx{9798400723100}
\urldef\tempurl%
\url{https://doi.org/10.1145/3774934.3786431}
\showDOI{\tempurl}


\bibitem[Dong et~al\mbox{.}(2011)]%
        {nndescent2011}
\bibfield{author}{\bibinfo{person}{Wei Dong}, \bibinfo{person}{Moses Charikar}, {and} \bibinfo{person}{Kai Li}.} \bibinfo{year}{2011}\natexlab{}.
\newblock \showarticletitle{Efficient k-nearest neighbor graph construction for generic similarity measures}. In \bibinfo{booktitle}{\emph{Proceedings of the 20th International Conference on World Wide Web (WWW)}}.
\newblock


\bibitem[Efraimidis and Spirakis(2006)]%
        {efraimidisWeightedSampling2006}
\bibfield{author}{\bibinfo{person}{Pavlos~S. Efraimidis} {and} \bibinfo{person}{Paul~G. Spirakis}.} \bibinfo{year}{2006}\natexlab{}.
\newblock \showarticletitle{Weighted Random Sampling with a Reservoir}.
\newblock \bibinfo{journal}{\emph{Inform. Process. Lett.}} \bibinfo{volume}{97}, \bibinfo{number}{5} (\bibinfo{year}{2006}), \bibinfo{pages}{181--185}.
\newblock
\urldef\tempurl%
\url{https://doi.org/10.1016/j.ipl.2005.11.003}
\showDOI{\tempurl}


\bibitem[Eppstein et~al\mbox{.}(1997)]%
        {eppstein1997sparsification}
\bibfield{author}{\bibinfo{person}{David Eppstein}, \bibinfo{person}{Zvi Galil}, \bibinfo{person}{Giuseppe~F. Italiano}, {and} \bibinfo{person}{Amnon Nissenzweig}.} \bibinfo{year}{1997}\natexlab{}.
\newblock \showarticletitle{Sparsification---A Technique for Speeding Up Dynamic Graph Algorithms}.
\newblock \bibinfo{journal}{\emph{J. ACM}} \bibinfo{volume}{44}, \bibinfo{number}{5} (\bibinfo{year}{1997}), \bibinfo{pages}{669--696}.
\newblock
\urldef\tempurl%
\url{https://doi.org/10.1145/265910.265914}
\showDOI{\tempurl}


\bibitem[Faloutsos et~al\mbox{.}(1999)]%
        {faloutsos1999power}
\bibfield{author}{\bibinfo{person}{Michalis Faloutsos}, \bibinfo{person}{Petros Faloutsos}, {and} \bibinfo{person}{Christos Faloutsos}.} \bibinfo{year}{1999}\natexlab{}.
\newblock \showarticletitle{On Power-Law Relationships of the Internet Topology}.
\newblock \bibinfo{journal}{\emph{ACM SIGCOMM Computer Communication Review}} \bibinfo{volume}{29}, \bibinfo{number}{4} (\bibinfo{year}{1999}), \bibinfo{pages}{251--262}.
\newblock
\urldef\tempurl%
\url{https://doi.org/10.1145/316194.316229}
\showDOI{\tempurl}


\bibitem[Feigenbaum et~al\mbox{.}(2005)]%
        {semistreaming1}
\bibfield{author}{\bibinfo{person}{Joan Feigenbaum}, \bibinfo{person}{Sampath Kannan}, \bibinfo{person}{Andrew McGregor}, \bibinfo{person}{Siddharth Suri}, {and} \bibinfo{person}{Jian Zhang}.} \bibinfo{year}{2005}\natexlab{}.
\newblock \showarticletitle{On Graph Problems in a Semi-Streaming Model}.
\newblock \bibinfo{journal}{\emph{Theor. Comput. Sci.}} \bibinfo{volume}{348}, \bibinfo{number}{2} (\bibinfo{date}{Dec.} \bibinfo{year}{2005}), \bibinfo{pages}{207--216}.
\newblock
\showISSN{0304-3975}
\urldef\tempurl%
\url{https://doi.org/10.1016/j.tcs.2005.09.013}
\showDOI{\tempurl}


\bibitem[Frederickson(1985)]%
        {frederickson1985data}
\bibfield{author}{\bibinfo{person}{Greg~N. Frederickson}.} \bibinfo{year}{1985}\natexlab{}.
\newblock \showarticletitle{Data Structures for On-Line Updating of Minimum Spanning Trees, with Applications}.
\newblock \bibinfo{journal}{\emph{SIAM J. Comput.}} \bibinfo{volume}{14}, \bibinfo{number}{4} (\bibinfo{year}{1985}), \bibinfo{pages}{781--798}.
\newblock
\urldef\tempurl%
\url{https://doi.org/10.1137/0214055}
\showDOI{\tempurl}


\bibitem[Gibb et~al\mbox{.}(2015)]%
        {gibb2015dynamic}
\bibfield{author}{\bibinfo{person}{David Gibb}, \bibinfo{person}{Bruce Kapron}, \bibinfo{person}{Valerie King}, {and} \bibinfo{person}{Nolan Thorn}.} \bibinfo{year}{2015}\natexlab{}.
\newblock \bibinfo{title}{Dynamic graph connectivity with improved worst case update time and sublinear space}.
\newblock
\newblock
\showeprint[arxiv]{1509.06464}~[cs.DS]


\bibitem[Goodrich and Mitzenmacher(2011)]%
        {goodrich2011iblt}
\bibfield{author}{\bibinfo{person}{Michael~T Goodrich} {and} \bibinfo{person}{Michael Mitzenmacher}.} \bibinfo{year}{2011}\natexlab{}.
\newblock \showarticletitle{Invertible bloom lookup tables}. In \bibinfo{booktitle}{\emph{2011 49th Annual Allerton Conference on Communication, Control, and Computing (Allerton)}}. IEEE, \bibinfo{pages}{792--799}.
\newblock


\bibitem[Henzinger and King(1999)]%
        {henzinger1999randomized}
\bibfield{author}{\bibinfo{person}{Monika~R. Henzinger} {and} \bibinfo{person}{Valerie King}.} \bibinfo{year}{1999}\natexlab{}.
\newblock \showarticletitle{Randomized Fully Dynamic Graph Algorithms with Polylogarithmic Time per Operation}.
\newblock \bibinfo{journal}{\emph{J. ACM}} \bibinfo{volume}{46}, \bibinfo{number}{4} (\bibinfo{year}{1999}), \bibinfo{pages}{502--516}.
\newblock
\urldef\tempurl%
\url{https://doi.org/10.1145/320211.320215}
\showDOI{\tempurl}


\bibitem[Henzinger and Thorup(1997)]%
        {henzinger1997sampling}
\bibfield{author}{\bibinfo{person}{Monika~R. Henzinger} {and} \bibinfo{person}{Mikkel Thorup}.} \bibinfo{year}{1997}\natexlab{}.
\newblock \showarticletitle{Sampling to Provide or to Bound: With Applications to Fully Dynamic Graph Algorithms}.
\newblock \bibinfo{journal}{\emph{Random Structures \& Algorithms}} \bibinfo{volume}{11}, \bibinfo{number}{4} (\bibinfo{year}{1997}), \bibinfo{pages}{369--379}.
\newblock
\urldef\tempurl%
\url{https://doi.org/10.1002/(SICI)1098-2418(199712)11:4<369::AID-RSA5>3.0.CO;2-V}
\showDOI{\tempurl}


\bibitem[Holm et~al\mbox{.}(2001)]%
        {holm2001polylog}
\bibfield{author}{\bibinfo{person}{Jacob Holm}, \bibinfo{person}{Kristian de Lichtenberg}, {and} \bibinfo{person}{Mikkel Thorup}.} \bibinfo{year}{2001}\natexlab{}.
\newblock \showarticletitle{Poly-Logarithmic Deterministic Fully-Dynamic Algorithms for Connectivity, Minimum Spanning Tree, 2-Edge, and Biconnectivity}.
\newblock \bibinfo{journal}{\emph{J. ACM}} \bibinfo{volume}{48}, \bibinfo{number}{4} (\bibinfo{year}{2001}), \bibinfo{pages}{723--760}.
\newblock
\urldef\tempurl%
\url{https://doi.org/10.1145/502090.502095}
\showDOI{\tempurl}


\bibitem[Holm et~al\mbox{.}(2018)]%
        {holm2018dynamic}
\bibfield{author}{\bibinfo{person}{Jacob Holm}, \bibinfo{person}{Eva Rotenberg}, {and} \bibinfo{person}{Mikkel Thorup}.} \bibinfo{year}{2018}\natexlab{}.
\newblock \showarticletitle{Dynamic Bridge-Finding in $\tilde{O}(\log^2 n)$ Amortized Time}. In \bibinfo{booktitle}{\emph{Proceedings of the 29th Annual ACM-SIAM Symposium on Discrete Algorithms (SODA)}}. \bibinfo{pages}{35--52}.
\newblock
\urldef\tempurl%
\url{https://doi.org/10.1137/1.9781611975031.3}
\showDOI{\tempurl}


\bibitem[Huang et~al\mbox{.}(2023)]%
        {huang2023fully}
\bibfield{author}{\bibinfo{person}{Shang-En Huang}, \bibinfo{person}{Dawei Huang}, \bibinfo{person}{Tsvi Kopelowitz}, \bibinfo{person}{Seth Pettie}, {and} \bibinfo{person}{Mikkel Thorup}.} \bibinfo{year}{2023}\natexlab{}.
\newblock \showarticletitle{Fully Dynamic Connectivity in $O(\log n (\log\log n)^2)$ Amortized Expected Time}.
\newblock \bibinfo{journal}{\emph{Theoretics}}  \bibinfo{volume}{2} (\bibinfo{year}{2023}), \bibinfo{pages}{6:1--6:56}.
\newblock
\urldef\tempurl%
\url{https://doi.org/10.46298/theoretics.23.6}
\showDOI{\tempurl}


\bibitem[Kapron et~al\mbox{.}(2013)]%
        {kapron2013dynamic}
\bibfield{author}{\bibinfo{person}{Bruce~M. Kapron}, \bibinfo{person}{Valerie King}, {and} \bibinfo{person}{Ben Mountjoy}.} \bibinfo{year}{2013}\natexlab{}.
\newblock \showarticletitle{Dynamic graph connectivity in polylogarithmic worst case time}. In \bibinfo{booktitle}{\emph{Proceedings of the Twenty-Fourth Annual ACM-SIAM Symposium on Discrete Algorithms (SODA)}}. SIAM, \bibinfo{pages}{1131--1142}.
\newblock


\bibitem[Karp et~al\mbox{.}(2003)]%
        {karpStreamingAlgorithms2003}
\bibfield{author}{\bibinfo{person}{Richard~M. Karp}, \bibinfo{person}{Scott Shenker}, {and} \bibinfo{person}{Christos~H. Papadimitriou}.} \bibinfo{year}{2003}\natexlab{}.
\newblock \showarticletitle{A Simple Algorithm for Finding Frequent Elements in Streams and Its Application to Load Balancing}. In \bibinfo{booktitle}{\emph{Proceedings of the 15th Annual ACM Symposium on Parallelism in Algorithms and Architectures (SPAA)}}. \bibinfo{pages}{1--10}.
\newblock
\urldef\tempurl%
\url{https://doi.org/10.1145/777412.777415}
\showDOI{\tempurl}


\bibitem[Kitsak et~al\mbox{.}(2010)]%
        {kitsak2010identification}
\bibfield{author}{\bibinfo{person}{Maksim Kitsak}, \bibinfo{person}{Lazaros~K. Gallos}, \bibinfo{person}{Shlomo Havlin}, \bibinfo{person}{Fredrik Liljeros}, \bibinfo{person}{Lev Muchnik}, \bibinfo{person}{H.~Eugene Stanley}, {and} \bibinfo{person}{Hern{\'a}n~A. Makse}.} \bibinfo{year}{2010}\natexlab{}.
\newblock \showarticletitle{Identification of Influential Spreaders in Complex Networks}.
\newblock \bibinfo{journal}{\emph{Nature Physics}} \bibinfo{volume}{6}, \bibinfo{number}{11} (\bibinfo{year}{2010}), \bibinfo{pages}{888--893}.
\newblock
\urldef\tempurl%
\url{https://doi.org/10.1038/nphys1746}
\showDOI{\tempurl}


\bibitem[Kwak et~al\mbox{.}(2010)]%
        {twittergraph}
\bibfield{author}{\bibinfo{person}{Haewoon Kwak}, \bibinfo{person}{Changhyun Lee}, \bibinfo{person}{Hosung Park}, {and} \bibinfo{person}{Sue Moon}.} \bibinfo{year}{2010}\natexlab{}.
\newblock \showarticletitle{What is {T}witter, a Social Network or a News Media?}. In \bibinfo{booktitle}{\emph{Proceedings of the 19th International Conference on World Wide Web (WWW)}}.
\newblock


\bibitem[McAuley and Leskovec(2012)]%
        {mcauley2012learning}
\bibfield{author}{\bibinfo{person}{Julian~J McAuley} {and} \bibinfo{person}{Jure Leskovec}.} \bibinfo{year}{2012}\natexlab{}.
\newblock \showarticletitle{Learning to discover social circles in ego networks.}. In \bibinfo{booktitle}{\emph{NIPS}}, Vol.~\bibinfo{volume}{2012}. Citeseer, \bibinfo{pages}{548--56}.
\newblock


\bibitem[McGregor(2014)]%
        {mcgregor2014graph}
\bibfield{author}{\bibinfo{person}{Andrew McGregor}.} \bibinfo{year}{2014}\natexlab{}.
\newblock \showarticletitle{Graph stream algorithms: a survey}.
\newblock \bibinfo{journal}{\emph{ACM SIGMOD Record}} \bibinfo{volume}{43}, \bibinfo{number}{1} (\bibinfo{year}{2014}), \bibinfo{pages}{9--20}.
\newblock


\bibitem[Mislove et~al\mbox{.}(2007)]%
        {mislove2007measurement}
\bibfield{author}{\bibinfo{person}{Alan Mislove}, \bibinfo{person}{Massimiliano Marcon}, \bibinfo{person}{Krishna~P. Gummadi}, \bibinfo{person}{Peter Druschel}, {and} \bibinfo{person}{Bobby Bhattacharjee}.} \bibinfo{year}{2007}\natexlab{}.
\newblock \showarticletitle{Measurement and Analysis of Online Social Networks}. In \bibinfo{booktitle}{\emph{Proceedings of the 7th ACM SIGCOMM Conference on Internet Measurement (IMC)}}. \bibinfo{pages}{29--42}.
\newblock
\urldef\tempurl%
\url{https://doi.org/10.1145/1298306.1298311}
\showDOI{\tempurl}


\bibitem[Misra and Gries(1982)]%
        {misraGriesSummary1982}
\bibfield{author}{\bibinfo{person}{Jayadev Misra} {and} \bibinfo{person}{David Gries}.} \bibinfo{year}{1982}\natexlab{}.
\newblock \showarticletitle{Finding Repeated Elements}.
\newblock \bibinfo{journal}{\emph{Science of Computer Programming}} \bibinfo{volume}{2}, \bibinfo{number}{2} (\bibinfo{year}{1982}), \bibinfo{pages}{143--152}.
\newblock
\urldef\tempurl%
\url{https://doi.org/10.1016/0167-6423(82)90012-0}
\showDOI{\tempurl}


\bibitem[Muthukrishnan(2005)]%
        {semistreaming2}
\bibfield{author}{\bibinfo{person}{S. Muthukrishnan}.} \bibinfo{year}{2005}\natexlab{}.
\newblock \showarticletitle{Data Streams: Algorithms and Applications}.
\newblock \bibinfo{journal}{\emph{Foundations and Trends{\textregistered} in Theoretical Computer Science}} \bibinfo{volume}{1}, \bibinfo{number}{2} (\bibinfo{year}{2005}), \bibinfo{pages}{117--236}.
\newblock
\showISSN{1551-305X}
\urldef\tempurl%
\url{https://doi.org/10.1561/0400000002}
\showDOI{\tempurl}


\bibitem[Nanongkai et~al\mbox{.}(2017)]%
        {nanongkai2017dynamic}
\bibfield{author}{\bibinfo{person}{Danupon Nanongkai}, \bibinfo{person}{Thatchaphol Saranurak}, {and} \bibinfo{person}{Christian Wulff-Nilsen}.} \bibinfo{year}{2017}\natexlab{}.
\newblock \showarticletitle{Dynamic Minimum Spanning Forest with Subpolynomial Worst-Case Update Time}. In \bibinfo{booktitle}{\emph{IEEE 58th Annual Symposium on Foundations of Computer Science (FOCS)}}. \bibinfo{pages}{950--961}.
\newblock
\urldef\tempurl%
\url{https://doi.org/10.1109/FOCS.2017.92}
\showDOI{\tempurl}


\bibitem[Newman(2003)]%
        {newman2003structure}
\bibfield{author}{\bibinfo{person}{M.~E.~J. Newman}.} \bibinfo{year}{2003}\natexlab{}.
\newblock \showarticletitle{The Structure and Function of Complex Networks}.
\newblock \bibinfo{journal}{\emph{SIAM Rev.}} \bibinfo{volume}{45}, \bibinfo{number}{2} (\bibinfo{year}{2003}), \bibinfo{pages}{167--256}.
\newblock
\urldef\tempurl%
\url{https://doi.org/10.1137/S003614450342480}
\showDOI{\tempurl}


\bibitem[{OpenStreetMap contributors}(2017)]%
        {roadgraphs}
\bibfield{author}{\bibinfo{person}{{OpenStreetMap contributors}}.} \bibinfo{year}{2017}\natexlab{}.
\newblock \bibinfo{title}{{Planet dump retrieved from https://planet.osm.org }}.
\newblock \bibinfo{howpublished}{\url{ https://www.openstreetmap.org }}.
\newblock


\bibitem[Rossi and Ahmed(2015)]%
        {netrepo}
\bibfield{author}{\bibinfo{person}{Ryan~A. Rossi} {and} \bibinfo{person}{Nesreen~K. Ahmed}.} \bibinfo{year}{2015}\natexlab{}.
\newblock \showarticletitle{The Network Data Repository with Interactive Graph Analytics and Visualization}. In \bibinfo{booktitle}{\emph{AAAI}}.
\newblock
\urldef\tempurl%
\url{https://networkrepository.com}
\showURL{%
\tempurl}


\bibitem[Seidman(1983)]%
        {seidman1983network}
\bibfield{author}{\bibinfo{person}{Stephen~B. Seidman}.} \bibinfo{year}{1983}\natexlab{}.
\newblock \showarticletitle{Network Structure and Minimum Degree}.
\newblock \bibinfo{journal}{\emph{Social Networks}} \bibinfo{volume}{5}, \bibinfo{number}{3} (\bibinfo{year}{1983}), \bibinfo{pages}{269--287}.
\newblock
\urldef\tempurl%
\url{https://doi.org/10.1016/0378-8733(83)90028-X}
\showDOI{\tempurl}


\bibitem[Shaked and Shanthikumar(2007)]%
        {shakedStochasticOrders2007}
\bibfield{editor}{\bibinfo{person}{Moshe Shaked} {and} \bibinfo{person}{J.~George Shanthikumar}} (Eds.). \bibinfo{year}{2007}\natexlab{}.
\newblock \bibinfo{booktitle}{\emph{Stochastic {{Orders}}}}.
\newblock \bibinfo{publisher}{Springer}, \bibinfo{address}{New York, NY}.
\newblock
\showISBNx{978-0-387-32915-4}
\urldef\tempurl%
\url{https://doi.org/10.1007/978-0-387-34675-5}
\showDOI{\tempurl}


\bibitem[Simhadri et~al\mbox{.}(2021)]%
        {simhadri2021results}
\bibfield{author}{\bibinfo{person}{Harsha~Vardhan Simhadri}, \bibinfo{person}{George Williams}, \bibinfo{person}{Madlib-Abhimanyu}, {et~al\mbox{.}}} \bibinfo{year}{2021}\natexlab{}.
\newblock \showarticletitle{Results of the {NeurIPS}'21 Challenge on Billion-Scale Approximate Nearest Neighbor Search}. In \bibinfo{booktitle}{\emph{NeurIPS 2021 Competitions and Demonstrations Track}}. \bibinfo{publisher}{PMLR}.
\newblock


\bibitem[Sleator and Tarjan(1983)]%
        {sleator1983data}
\bibfield{author}{\bibinfo{person}{Daniel~D Sleator} {and} \bibinfo{person}{Robert~Endre Tarjan}.} \bibinfo{year}{1983}\natexlab{}.
\newblock \showarticletitle{A data structure for dynamic trees}.
\newblock \bibinfo{journal}{\emph{J. Comput. System Sci.}} \bibinfo{volume}{26}, \bibinfo{number}{3} (\bibinfo{year}{1983}).
\newblock


\bibitem[Sleator and Tarjan(1985)]%
        {sleator1985self}
\bibfield{author}{\bibinfo{person}{Daniel~D. Sleator} {and} \bibinfo{person}{Robert~E. Tarjan}.} \bibinfo{year}{1985}\natexlab{}.
\newblock \showarticletitle{Self-Adjusting Binary Search Trees}.
\newblock \bibinfo{journal}{\emph{J. ACM}} \bibinfo{volume}{32}, \bibinfo{number}{3} (\bibinfo{year}{1985}), \bibinfo{pages}{652--686}.
\newblock
\urldef\tempurl%
\url{https://doi.org/10.1145/3828.3835}
\showDOI{\tempurl}


\bibitem[Tench et~al\mbox{.}(2022)]%
        {graphzeppelin}
\bibfield{author}{\bibinfo{person}{David Tench}, \bibinfo{person}{Evan West}, \bibinfo{person}{Victor Zhang}, \bibinfo{person}{Michael~A. Bender}, \bibinfo{person}{Abiyaz Chowdhury}, \bibinfo{person}{J.~Ahmed Dellas}, \bibinfo{person}{Martin Farach-Colton}, \bibinfo{person}{Tyler Seip}, {and} \bibinfo{person}{Kenny Zhang}.} \bibinfo{year}{2022}\natexlab{}.
\newblock \showarticletitle{GraphZeppelin: Storage-Friendly Sketching for Connected Components on Dynamic Graph Streams}. In \bibinfo{booktitle}{\emph{Proceedings of the 2022 International Conference on Management of Data}} (Philadelphia, PA, USA) \emph{(\bibinfo{series}{SIGMOD '22})}. \bibinfo{publisher}{Association for Computing Machinery}, \bibinfo{address}{New York, NY, USA}, \bibinfo{pages}{325–339}.
\newblock
\showISBNx{9781450392495}
\urldef\tempurl%
\url{https://doi.org/10.1145/3514221.3526146}
\showDOI{\tempurl}


\bibitem[Tench et~al\mbox{.}(2024)]%
        {landscape}
\bibfield{author}{\bibinfo{person}{David Tench}, \bibinfo{person}{Evan~T. West}, \bibinfo{person}{Kenny Zhang}, \bibinfo{person}{Michael Bender}, \bibinfo{person}{Daniel DeLayo}, \bibinfo{person}{Martin Farach-Colton}, \bibinfo{person}{Gilvir Gill}, \bibinfo{person}{Tyler Seip}, {and} \bibinfo{person}{Victor Zhang}.} \bibinfo{year}{2024}\natexlab{}.
\newblock \bibinfo{title}{Exploring the Landscape of Distributed Graph Sketching}.
\newblock
\newblock
\showeprint[arxiv]{2410.07518}~[cs.DC]
\urldef\tempurl%
\url{https://arxiv.org/abs/2410.07518}
\showURL{%
\tempurl}


\bibitem[Thorup(2000)]%
        {thorup2000near}
\bibfield{author}{\bibinfo{person}{Mikkel Thorup}.} \bibinfo{year}{2000}\natexlab{}.
\newblock \showarticletitle{Near-Optimal Fully-Dynamic Graph Connectivity}. In \bibinfo{booktitle}{\emph{Proceedings of the 32nd Annual ACM Symposium on Theory of Computing (STOC)}}. \bibinfo{pages}{343--350}.
\newblock
\urldef\tempurl%
\url{https://doi.org/10.1145/335305.335345}
\showDOI{\tempurl}


\bibitem[Woodruff(2014)]%
        {woodruffSketchingToolNumerical2014}
\bibfield{author}{\bibinfo{person}{David~P. Woodruff}.} \bibinfo{year}{2014}\natexlab{}.
\newblock \showarticletitle{Sketching as a {{Tool}} for {{Numerical Linear Algebra}}}.
\newblock \bibinfo{journal}{\emph{Foundations and Trends{\textregistered} in Theoretical Computer Science}} \bibinfo{volume}{10}, \bibinfo{number}{1-2} (\bibinfo{year}{2014}), \bibinfo{pages}{1--157}.
\newblock
\showISSN{1551-305X, 1551-3068}
\urldef\tempurl%
\url{https://doi.org/10.1561/0400000060}
\showDOI{\tempurl}
\showeprint[arxiv]{1411.4357}~[cs]


\bibitem[Wulff-Nilsen(2013)]%
        {wulff2013faster}
\bibfield{author}{\bibinfo{person}{Christian Wulff-Nilsen}.} \bibinfo{year}{2013}\natexlab{}.
\newblock \showarticletitle{Faster deterministic fully-dynamic graph connectivity}. In \bibinfo{booktitle}{\emph{Proceedings of the twenty-fourth Annual ACM-SIAM Symposium on Discrete Algorithms}}. SIAM, \bibinfo{pages}{1757--1769}.
\newblock


\bibitem[Xu et~al\mbox{.}(2024)]%
        {IDTree}
\bibfield{author}{\bibinfo{person}{Lantian Xu}, \bibinfo{person}{Dong Wen}, \bibinfo{person}{Lu Qin}, \bibinfo{person}{Ronghua Li}, \bibinfo{person}{Ying Zhang}, {and} \bibinfo{person}{Xuemin Lin}.} \bibinfo{year}{2024}\natexlab{}.
\newblock \showarticletitle{Constant-time Connectivity Querying in Dynamic Graphs}.
\newblock \bibinfo{journal}{\emph{Proc. ACM Manag. Data}} \bibinfo{volume}{2}, \bibinfo{number}{6}, Article \bibinfo{articleno}{230} (\bibinfo{date}{Dec.} \bibinfo{year}{2024}), \bibinfo{numpages}{23}~pages.
\newblock
\urldef\tempurl%
\url{https://doi.org/10.1145/3698805}
\showDOI{\tempurl}


\bibitem[Yang and Leskovec(2012)]%
        {yang2012communities}
\bibfield{author}{\bibinfo{person}{Jaewon Yang} {and} \bibinfo{person}{Jure Leskovec}.} \bibinfo{year}{2012}\natexlab{}.
\newblock \showarticletitle{Defining and Evaluating Network Communities based on Ground-truth}. In \bibinfo{booktitle}{\emph{IEEE International Conference on Data Mining (ICDM)}}.
\newblock


\bibitem[Yu et~al\mbox{.}(2023)]%
        {parlayann2023}
\bibfield{author}{\bibinfo{person}{Shangdi Yu}, \bibinfo{person}{Yan Gu}, {and} \bibinfo{person}{Julian Shun}.} \bibinfo{year}{2023}\natexlab{}.
\newblock \showarticletitle{Parlay{ANN}: Scalable and Deterministic Algorithms for Approximate Nearest Neighbor Search}. In \bibinfo{booktitle}{\emph{Proceedings of the 28th ACM SIGPLAN Annual Symposium on Principles and Practice of Parallel Programming (PPoPP)}}.
\newblock


\end{thebibliography}
